\newtheorem{proposition}{Proposition}
\tikzset{
        block_small/.style = {draw, rectangle,
            minimum height=1cm,
            minimum width=1cm},
        block/.style = {draw, rectangle,
            minimum height=1cm,
            minimum width=2cm},
        input/.style = {coordinate,node distance=1cm},
        output/.style = {coordinate,node distance=4cm},
        arrow/.style={draw, -latex,node distance=2cm},
        pinstyle/.style = {pin edge={latex-, black,node distance=2cm}},
        sum/.style = {draw, circle, node distance=1cm},
}
\newcommand\reallywidehat[1]{%
\savestack{\tmpbox}{\stretchto{%
  \scaleto{%
    \scalerel*[\widthof{\ensuremath{#1}}]{\kern-.6pt\bigwedge\kern-.6pt}%
    {\rule[-\textheight/2]{1ex}{\textheight}}%WIDTH-LIMITED BIG WEDGE
  }{\textheight}%
}{0.5ex}}%
\stackon[1pt]{#1}{\tmpbox}%
}
\newcommand{\mat}[1]{{\bm{#1}}}
\newcommand{\mA}{\mat{A}}
\newcommand{\mB}{\mat{B}}
\newcommand{\mC}{\mat{C}}
\newcommand{\mF}{\mat{F}}
\newcommand{\mG}{\mat{G}}
\newcommand{\mH}{\mat{H}}
\newcommand{\mI}{\mat{I}}
\newcommand{\mK}{\mat{K}}
\newcommand{\mL}{\mat{L}}
\newcommand{\mP}{\mat{P}}
\newcommand{\mQ}{\mat{Q}}
\newcommand{\mR}{\mat{R}}
\newcommand{\mU}{\mat{U}}
\newcommand{\mV}{\mat{V}}
\newcommand{\mW}{\mat{W}}
\newcommand{\mT}{\mat{T}}
\newcommand{\mX}{\mat{X}}
\newcommand{\mY}{\mat{Y}}
\newcommand{\mJ}{\mat{J}}
\newcommand{\mZ}{\mat{Z}}
\newcommand{\mSigma}{\mat{\pmb{\Sigma}}}
\newcommand{\mPhi}{\mat{\pmb{\Phi}}}
\newcommand{\mPsi}{\mat{\pmb{\Psi}}}
\newcommand{\vect}[1]{{\bm{#1}}}
\newcommand{\vn}{\vect{n}}
\newcommand{\vf}{\vect{f}}
\newcommand{\vq}{\vect{q}}
\newcommand{\vu}{\vect{u}}
\newcommand{\vw}{\vect{w}}
\newcommand{\vx}{\vect{x}}
\newcommand{\vy}{\vect{y}}
\newcommand{\vz}{\vect{z}}
\newcommand{\vX}{\vect{X}}
\newcommand{\vxi}{\vect{\xi}}
\newcommand{\ddt}{\frac{\mathrm{d}}{\mathrm{d}t}}
\newcommand\Algphase[1]{%
\vspace*{-.5\baselineskip}\Statex\hspace*{\dimexpr-\algorithmicindent-2pt\relax}\rule{\textwidth}{0.4pt}%
\Statex\hspace*{-\algorithmicindent}\textbf{#1}%
\vspace*{-.5\baselineskip}\Statex\hspace*{\dimexpr-\algorithmicindent-2pt\relax}\rule{\textwidth}{0.4pt}%
}
\def\lrgap{\kern6pt}
\def\xbracketVectorstack#1{\left[\lrgap\Vectorstack{#1}\lrgap\right]}
\def\xbracketMatrixstack#1{\left[\lrgap\tabbedCenterstack{#1}\lrgap\right]}
\begin{document}

\begin{frontmatter}

%% Title, authors and addresses

%% use the tnoteref command within \title for footnotes;
%% use the tnotetext command for theassociated footnote;
%% use the fnref command within \author or \affiliation for footnotes;
%% use the fntext command for theassociated footnote;
%% use the corref command within \author for corresponding author footnotes;
%% use the cortext command for theassociated footnote;
%% use the ead command for the email address,
%% and the form \ead[url] for the home page:
%% \title{Title\tnoteref{label1}}
%% \tnotetext[label1]{}
%% \author{Name\corref{cor1}\fnref{label2}}
%% \ead{email address}
%% \ead[url]{home page}
%% \fntext[label2]{}
%% \cortext[cor1]{}
%% \affiliation{organization={},
%%            addressline={},
%%            city={},
%%            postcode={},
%%            state={},
%%            country={}}
%% \fntext[label3]{}

\title{Continuous-time balanced truncation for time-periodic fluid flows using frequential Gramians}

%% use optional labels to link authors explicitly to addresses:
%% \author[label1,label2]{}
%% \affiliation[label1]{organization={},
%%             addressline={},
%%             city={},
%%             postcode={},
%%             state={},
%%             country={}}
%%
%% \affiliation[label2]{organization={},
%%             addressline={},
%%             city={},
%%             postcode={},
%%             state={},
%%             country={}}

\author{Alberto Padovan\corref{cor1}}
\ead{apadovan@princeton.edu}
\author{Clarence W. Rowley}
\ead{cwrowley@princeton.edu}
\cortext[cor1]{Corresponding author}
\affiliation{organization={Mechanical and Aerospace Engineering Department, Princeton University},
addressline={Olden St.},
postcode={08544},
city={Princeton},
state={NJ},
country={USA}}

\begin{abstract}
Reduced-order models for flows that exhibit time-periodic behavior (e.g., flows in turbomachinery and wake flows) are critical for several tasks, including active control and optimization.
One well-known procedure to obtain the desired reduced-order model in the proximity of a periodic solution of the governing equations is continuous-time balanced truncation.
Within this framework, the periodic reachability and observability Gramians are usually estimated numerically via quadrature using the forward and adjoint post-transient response to impulses.
However, this procedure can be computationally expensive, especially in the presence of slowly-decaying transients.
Moreover, it can only be performed if the periodic orbit is stable in the sense of Floquet.
In order to address these issues, we use the frequency-domain representation of the Gramians, which we henceforth refer to as \emph{frequential Gramians}.
First, these frequential Gramians are well-defined for both stable and unstable dynamics.
In particular, we show that when the underlying system is unstable, these Gramians satisfy a pair of allied differential Lyapunov equations.
Second, they can be estimated numerically by solving algebraic systems of equations that lend themselves to heavy computational parallelism and that deliver the desired post-transient response without having to follow physical transients.
We demonstrate the method on a periodically-forced axisymmetric jet at Reynolds numbers $Re=1250$ and $Re = 1500$.
At the lower Reynolds number, the flow strongly amplifies subharmonic perturbations and exhibits vortex pairing about a Floquet-stable $T$-periodic solution.
At the higher Reynolds number, the underlying $T$-periodic orbit is unstable and the flow naturally settles onto a $2T$-periodic limit cycle characterized by pairing vortices.
At both Reynolds numbers, we compute a reduced-order model and we use it to design a feedback controller and a state estimator capable of suppressing vortex pairing.
\end{abstract}

\begin{keyword}
Continuous-time balanced truncation \sep Linear time-periodic systems \sep Frequential Gramians \sep Harmonic resolvent \sep Harmonic transfer function.
%% keywords here, in the form: keyword \sep keyword

%% PACS codes here, in the form: \PACS code \sep code

%% MSC codes here, in the form: \MSC code \sep code
%% or \MSC[2008] code \sep code (2000 is the default)
\end{keyword}

\end{frontmatter}

%% \linenumbers

%% main text

\section{Introduction}
\label{sec:intro}

Physical processes are often governed by partial differential equations, which, upon spatial discretization, lead to high-dimensional systems of ordinary differential equations.
Although recent advances in computational resources have allowed us to simulate these systems quite efficiently, tasks such as controller design and optimization can seldom be performed in the original high-dimensional space.
It therefore becomes necessary to develop low-order models that capture the salient features of the underlying dynamics.
In this paper, we focus on systems that exhibit time-periodic behavior, and we seek a low-order representation of the dynamics in the proximity of a time-periodic solution of the governing equations.

While there are many existing methods for model reduction of both linear and nonlinear systems, here we provide an overview of the linear techniques based on ``balancing.''
In the simplest of cases, i.e., the balancing of a linear time-invariant system,
one seeks a reduced-order model by first identifying a change of coordinates
where two matrices known as reachability and observability Gramians are equal and diagonal.
This method, initially introduced by \cite{Moore1981principal}, has become increasingly popular because of well-known a-priori error bounds (see, e.g., \cite{dullerud}) and because of its relatively low computational cost.
More recently, the balanced proper orthogonal decomposition (BPOD) framework introduced by \cite{Rowley2005} led to a further reduction of the computational cost in systems with a large number of outputs, and the method has since become a benchmark for model reduction of linear systems as well as of nonlinear systems that evolve near a steady state.
Although balanced truncation and BPOD were originally conceived for stable systems, they have also been applied to unstable systems upon slight modifications.
For instance, \cite{ahuja_rowley_2010} proposed splitting the stable and
unstable eigenspaces, and balancing the stable dynamics, while treating the unstable eigenspace exactly.
Alternatively, \cite{dergham} obtained balanced reduced-order models of an open cavity flow with an underlying unstable steady state by leveraging the frequency-domain representation of the Gramians, which is well-defined for both stable and unstable systems (see, e.g., \cite{godunov}).
Finally, \cite{flinois2015} showed that the original algorithm developed for stable systems could be used to balanced unstable systems without the a-priori splitting of the stable and unstable eigenspaces.

Balanced truncation has also been used for discrete-time periodic systems (see, for instance, \cite{longhi1999}, \cite{varga2000} and \cite{farhood}).
More recently, a procedure similar to BPOD was developed by \cite{Ma} by lifting the discrete-time periodic system into a higher-dimensional linear time-invariant system.
That formulation was then applied in \cite{MaThesis} for controller design to stabilize an unstable periodic orbit in the wake of a flat plate.
The balancing of linear time-varying systems in their continuous-time formulation is discussed in \cite{sandberg2004}, and error bounds for monotonically-balanced and non-monotonically-balanced systems are presented.
Continuous-time balancing was also performed in \cite{lang2016}, where the
authors presented an implicit time integration method to solve the differential Lyapunov equations that govern the dynamics of the (time-varying) reachability and observability Gramians.
Despite the fact that balancing for periodic systems is well-understood, to the best of the our knowledge it is rarely used in practice in very high-dimensional systems such as two-dimensional or three-dimensional fluid flows.
In fact, the (discrete-time) application in \cite{MaThesis} is the only one we are aware of.
This is most likely due to the fact that computing the Gramians for a time-varying system can be expensive.
In particular, this requires computing the post-transient response to forward and adjoint impulses, and the computational cost can grow significantly if the underlying dynamics exhibit slowly-decaying transients.
Furthermore, this procedure to estimate the Gramians can only be performed on systems that are stable in the sense of Floquet, unless the stable and unstable Floquet eigenspaces are treated separately (as in \cite{Ma}) at additional computational cost.
Here, we propose to address these problems using the frequency-domain representation of the Gramians.
We henceforth refer to these Gramians as \emph{frequential Gramians}.

As in time-invariant systems, if the underlying system is stable, the frequential Gramians agree with the time-domain representation of the Gramians.
Unlike their time-domain counterparts, the frequential Gramians are also well-defined if the dynamics are unstable, and we show that they satisfy a pair of allied differential Lyapunov equations.
Consequently, we can use these Gramians to obtain balanced low-order models of unstable systems, and this is particularly important if we wish to design reduced-order stabilizing controllers.
We shall also see that while the frequential Gramians for time-invariant systems are defined in terms of the resolvent operator associated with the underlying system, the frequential Gramians for time-periodic systems are defined in terms of the harmonic resolvent operator \citep{padovan2020,padovan2022}.

From a computational standpoint, the use of frequential Gramians can lead to computational savings.
Specifically, estimating the Gramians no longer requires performing impulse responses in the time domain, but it simply amounts to solving algebraic systems of equations that lend themselves to heavy computational parallelism and that deliver the desired post-transient solution without having to follow the physical transients.
As discussed in section \ref{sec:algorithm_for_gramians}, additional savings can be obtained by leveraging some of the symmetries of the harmonic resolvent operator, which, as previously mentioned, is used to define the frequential Gramians.
More thorough computational considerations are presented in section \ref{subsec:computational_considerations}.

% In sections \ref{sec:towards_efficient_algorithm} and \ref{sec:algorithm_for_gramians} we develop an efficient frequency-domain algorithm to compute the time-periodic Gramians at desired time instances.
% First, we show that, as in time-invariant systems, the Gramians at any time $t$ can be computed as an outer product of frequency-domain variables.
% This is computationally convenient, since integrals that can be written as outer products lend themselves to straightforward numerical quadrature.
% Then, we leverage some of the symmetries of the harmonic resolvent operator to further reduce the overall computational cost.
% A thorough discussion on the advantages and disadvantages of this formulation, compared to its time-domain counterpart, is provided in section \ref{subsec:computational_considerations}.

We use this framework to compute reduced-order models for a periodically-forced axisymmetric incompressible jet at Reynolds numbers $Re = 1250$ and $Re = 1500$.
At $Re = 1250$, the flow exhibits a Floquet-stable periodic orbit of period $T$, characterized by an unpaired vortex street.
However, as discussed in \cite{ardali_jet19} and \cite{padovan2022}, this configuration is extremely sensitive to period-doubling perturbations, so that any small-amplitude perturbation will cause neighboring vortices to merge and pair.
Here, we compute a reduced-order model of the dynamics in the proximity of the
periodic orbit, and we then design a feedback controller and an observer that
successfully suppress vortex pairing in the presence of disturbances.
At $Re = 1500$ the underlying $T$-periodic orbit is linearly unstable, so any small perturbation will grow and eventually settle onto a $2T$-periodic limit cycle characterized by pairing vortices.
As before, we compute a reduced-order model and we design a controller and an observer to restabilize the $T$-periodic orbit and suppress vortex pairing.

Although, to the best of our knowledge, this is the first time that frequential Gramians are used in the balancing of time-periodic systems, it is important to mention related work that leverages the frequency-domain representation of time-periodic systems.
For instance, \cite{MRJ2008} proposed solving a sequence of simplified Sylvester and Lyapunov equations to approximate the $\mathcal{H}_2$ norm of linear periodically time-varying system, where the periodic component is small.
Their approach was then implemented on a pressure-driven channel subject to streamwise oscillations of the bottom wall.
The same approach was used in \cite{moarref2010} and \cite{moarref2012} to design controllers to suppress the onset of turbulence in a channel, and to achieve turbulent drag reduction in a channel, respectively.
Similar tools were also used in \cite{ran2021} to design spanwise-periodic riblets with the objective of reducing drag in a turbulent channel.
For a thorough overview of frequency-domain methods for the analysis of fluids flows we refer to the review paper by \cite{jovanovic2021} and references therein.

\section{Frequential reachability and observability Gramians}
\label{sec:balancing_stable}

In continuous-time balanced truncation for time-periodic systems, one seeks a
time-periodic reduced-order model by first identifying a change of coordinates
that simultaneously diagonalizes the time-periodic reachability and
observability Gramians, which will defined below.
While the balancing procedure will be discussed in detail in section \ref{sec:bt}, for now, it suffices to say that computing the Gramians (or their factorization) is the most computationally expensive step.
In this section we therefore focus on the computation of the Gramians and we show that, similarly to the linear time-invariant case, these can be defined in the frequency domain.
We also show that the frequency-domain representation of the Gramians is well
defined even when the underlying dynamics are unstable; more specifically, the frequential Gramians satisfy a pair of differential Lyapunov equations.

% In the next section, we will map our results back to physical coordinates.
% This will elucidate how the frequency domain can be leverage in practice to compute the desired Gramians.

\subsection{Preliminaries}
We begin by considering a linear time-periodic system with state $\vx(t)\in\mathbb{R}^N$, control input $\vu(t)\in\mathbb{R}^M$, and output $\vy(t) \in \mathbb{R}^Q$
\begin{equation}
    \label{eq:sys_ltp}
    \begin{aligned}
        \ddt \vx(t) &= \mA(t) \vx(t) + \mB(t)\vu(t) \\
        \vy(t) &= \mC(t) \vx(t),
    \end{aligned}
\end{equation}
where the linear operators $\mA(t)$, $\mB(t)$ and $\mC(t)$ are all periodic with period~$T$ (i.e., $\mA(t) = \mA(t+T)$).
The system \eqref{eq:sys_ltp} arises in fluid mechanics when the Navier-Stokes equations are linearized about a $T$-periodic solution.
In incompressible flow, the state $\vx(t)$ may be taken as the divergence-free velocity field at the cell faces (or cell centers) of a computational grid, the forcing term $\mB(t)\vu(t)$ may be understood as a volumetric or boundary input and the output $\vy(t)$ could be some measured output (e.g., the velocity at some desired physical location in the flow).
Usually, the operators $\mB(t)$ and $\mC(t)$ are time-invariant (e.g., if the control input and the measured output are located at some fixed physical coordinate), but here we include time dependence for the sake of generality.

% In general, the analysis and control of system \eqref{eq:sys_ltp} is too computationally expensive, or even infeasible, due to the high dimensionality of the state vector $\vx(t)$.
% Continuous-time balanced truncation is a model reduction procedure that allows to overcome this issue.
% In particular, it produces a reduced-order model by identifying a time-periodic change of coordinates that diagonalizes the periodic reachability and controllability Gramians associated with \eqref{eq:sys_ltp}.
% In order to provide a detailed explanation of the balancing procedure, it is necessary to study the periodic solution (i.e., the desired periodic Gramian) of the reachability and observability differential Lyapunov equations associated with \eqref{eq:sys_ltp}.
In order to derive the frequency-domain representation of the reachability and observability Gramians associated with \eqref{eq:sys_ltp}, it is notationally convenient to first diagonalize \eqref{eq:sys_ltp} via a Floquet change of coordinates \citep{floquet}.
In particular, by Floquet's theorem, there exists a possibly complex $T$-periodic change of coordinates $\vx(t) = \mV(t)\vz(t)$ such that
\begin{equation}
\label{eq:sys_ltp_lti}
    \begin{aligned}
    \ddt \vz(t) &= \mJ \vz(t) + \underbrace{\mW(t)^*\mB(t)}_{\widetilde{\mB}(t)}\vu(t) \\
    \vy(t) &= \underbrace{\mC(t)\mV(t)}_{\widetilde\mC(t)}\vz(t),
    \end{aligned}
\end{equation}
where $\mW(t)^*\mV(t) = \mI$ for all $t$ and $\mJ$ is a diagonal time-invariant matrix containing the Floquet exponents associated with \eqref{eq:sys_ltp}.
Here, $\mW^*$ denotes the Hermitian transpose of $\mW$.
It is easy to verify that the periodic orbit is stable if and only if all the Floquet exponents lie in the left-half plane, and it is unstable otherwise.

% and
% \begin{equation}
% \label{eq:mJ}
%     \mJ = \text{diag}\left(\lambda_1,\lambda_2,\cdots,\lambda_{N_s},\ldots,\lambda_N\right).
% \end{equation}
% Here, $\lambda_j$ are known as Floquet exponents, and they are arranged so that $Real(\lambda_j) < 0$ if $j \leq N_s$ and $Real(\lambda_j) > 0$ otherwise.
% In other words, the first $N_s$ Floquet exponents are stable while the remaining $N-N_s$ are unstable.
% We henceforth work in the diagonalized $\vz$ coordinates, also known as Floquet coordinates.

\subsection{Stable dynamics}
\label{subsec:stable_dynamics}
In this subsection we assume that \eqref{eq:sys_ltp} is stable in the sense just
described.
% That is, all the Floquet exponents in $\mJ$ \eqref{eq:mJ} lie in the left-half plane.
% We also assume that \eqref{eq:sys_ltp} (and hence \eqref{eq:sys_ltp_lti}) is controllable and observable (see, e.g., \citep{Bittanti1984}).
The reachability and observability Gramians associated with the diagonalized dynamics \eqref{eq:sys_ltp_lti} may be defined as
\begin{align}
    \mG_{R}(t_0,t) &= \int_{t_0}^t e^{\mJ(t-\tau)}\widetilde{\mB}(\tau)\widetilde{\mB}(\tau)^*e^{\mJ^*(t-\tau)}\mathrm{d}\tau \label{eq:Gr} \\
    \mG_{O}(t,t_f) &= \int_{t}^{t_f}e^{\mJ^*(\tau-t)}\widetilde{\mC}(\tau)^*\widetilde{\mC}(\tau)e^{\mJ(\tau-t)}\mathrm{d}\tau. \label{eq:Go}
\end{align}
% It is straightforward to show that, in the original $\vx$ coordinates, the Gramians are given by
% \begin{equation}
%     \mG_{R,\vx}(t_0,t) = \mV(t)\mG_{R,\vz}(t_0,t)\mV(t)^*,\quad \mG_{O,\vx}(t_0,t) = \mW(t)\mG_{O,\vz}(t_0,t)\mW(t)^*.
% \end{equation}
It can be shown that as $t_0\to-\infty$ and $t_f\to+\infty$,
the Gramians $\mG_R(t_0,t)$ and $\mG_O(t,t_f)$ are periodic functions of~$t$ with period~$T$.
% In the following, we show that for long-times $t$, the Gramians are periodic with period~$T$.
This is a well-known result that can be illustrated by considering the forward and adjoint differential Lyapunov equations below
\begin{align}
    \ddt \mP(t) &= \mJ\mP(t) + \mP(t)\mJ^* + \widetilde{\mB}(t)\widetilde{\mB}(t)^* \label{eq:fwd_lyap} \\
    -\ddt \mQ(t) &= \mJ^*\mQ(t) + \mQ(t)\mJ + \widetilde{\mC}(t)^*\widetilde{\mC}(t). \label{eq:adj_lyap}
\end{align}
It can be readily checked that the solution $\mP(t)$ of \eqref{eq:fwd_lyap} may be written as
\begin{equation}
    \label{eq:sol_fwd_lyap}
    \mP(t) = \underbrace{e^{\mJ (t-t_0)}\mP(t_0)e^{\mJ^* (t-t_0)}}_{\text{initial cond. response}} + \underbrace{\mG_{R}(t_0,t)}_{\text{forced resp.}},
\end{equation}
where $\mP(t_0)$ is the initial condition at time $t = t_0$.
By Theorem 20 in \cite{bolzern88}, equation \eqref{eq:fwd_lyap} admits a unique positive-definite (for all times) $T$-periodic solution if \eqref{eq:sys_ltp_lti} is stable and controllable.
% the assumption of stability and controllability of \eqref{eq:sys_ltp} guarantees the existence and uniqueness of a positive-definite (for all times $t$) $T$-periodic solution to \eqref{eq:fwd_lyap}.
This solution may be understood as the long-time response of \eqref{eq:fwd_lyap} to the external forcing $\widetilde{\mB}(t)\widetilde{\mB}(t)^*$.
In particular, if \eqref{eq:sys_ltp} is stable, it is clear that for $t\gg t_0$ the initial condition response in \eqref{eq:sol_fwd_lyap} will go to zero and we will be left with the forced response $\mP(t)=\mG_{R}(t_0,t)$.
% More rigorously, writing the temporal variable as $t = n T + s$, where $n \geq 0$ is an integer and $s \in [0,T)$, we define the post-transient periodic solution of \eqref{eq:fwd_lyap} as
% \begin{equation}
% \label{eq:Ps}
    % \mP(s) \coloneqq \lim_{n\rightarrow \infty}\mP(nT+s) = \lim_{n\rightarrow\infty}\mG_{R}(t_0,nT+s).
% \end{equation}
% The solution $\mP(s) = \mP(s+T)$ may then be understood as the post-transient periodic reachability Gramian associated with \eqref{eq:sys_ltp}.
A similar argument holds for the adjoint differential Lyapunov equation
\eqref{eq:adj_lyap} for $t\ll t_f$.
% In a similar fashion, we let $\mQ(t)$ denote the reverse-time solution of \eqref{eq:adj_lyap},
% \begin{equation}
%     \label{eq:sol_adj_lyap}
%     \mQ(t) = e^{-\mJ^* t}\mQ(0)e^{-\mJ t} + \mG_{O}(t,0),
% \end{equation}
% with final condition $\mQ(0)$ at time $t = 0$.
% Writing the reverse-time temporal variable as $t = s-(n+1) T$ with $s \in [0,T)$ and integer $n\geq 0$, the steady-state periodic observability Gramian may be defined as
% \begin{equation}
% \label{eq:Qs}
%     \mQ(s) \coloneqq \lim_{n\rightarrow \infty} \mQ(s-(n+1) T) = \lim_{n\rightarrow \infty} \mG_{O}(s-(n+1) T,0).
% \end{equation}

Now that we have established the relationship between the Gramians and the periodic solution of the corresponding differential Lyapunov equations, we can seek this solution in the frequency domain.
In particular, since the solution is periodic with period $T$, we can write
\begin{equation}
\label{eq:fourier_mP_mQ}
    \mP(t) = \sum_{k\in\mathbb{Z}} \mP_k e ^{i k \omega t},\quad \mQ(t) = \sum_{k\in\mathbb{Z}} \mQ_k e^{i k \omega t},\quad \omega = \frac{2\pi}{T}.
\end{equation}
Moreover, since $\widetilde\mB(t)$ and $\widetilde\mC(t)$ are also $T$-periodic, they can be written in a Fourier series analogous to the ones in \eqref{eq:fourier_mP_mQ}.
Substitution into the corresponding Lyapunov equations leads, for a fixed integer $k$, to
\begin{align}
\left(-i k \omega\mI + \mJ\right) \mP_{k} +  \mP_{k} \mJ^* + \sum_{l\in\mathbb{Z}} {\widetilde\mB}_{k+l} {\widetilde\mB}_l^* &= 0 \label{eq:sylv_fwd}\\
\left(i k \omega\mI + \mJ^*\right) \mQ_{k} +  \mQ_{k} \mJ + \sum_{l\in\mathbb{Z}} {\widetilde\mC}^*_{-l-k} {\widetilde\mC}_{-l} &= 0. \label{eq:sylv_adj}
\end{align}
% Notice that the convolution sum in both equations above is such that the indices always add up to $k$.
It is easy to verify (see, e.g., \cite{godunov}) that the solution of the two algebraic Sylvester equations \eqref{eq:sylv_fwd} and \eqref{eq:sylv_adj} is given by
\begin{align}
\mP_{k} &= \frac{1}{2\pi}\int_{-\infty}^\infty
\left(i\gamma\mI - (-i k\omega\mI + \mJ)\right)^{-1} \sum_{l\in\mathbb{Z}} \widetilde\mB_{k+l} \widetilde\mB^*_l\left(-i\gamma\mI - \mJ^*\right)^{-1}\,\mathrm{d}\gamma \label{eq:Pkmj}\\
 \mQ_{k} &= \frac{1}{2\pi}\int_{-\infty}^\infty
\left(-i\gamma\mI - (i k\omega\mI + \mJ^*)\right)^{-1} \sum_{l\in\mathbb{Z}} \widetilde\mC^*_{-l-k} \widetilde\mC_{-l}\left(i\gamma\mI - \mJ\right)^{-1}\,\mathrm{d}\gamma. \label{eq:Qkmj}
\end{align}
It can be shown that the above integrals converge for all $k$ as long as no Floquet
exponent (i.e., eigenvalue of $\mJ$) lies on the imaginary axis.
This implies that $\mP_k$ and $\mQ_k$ are well-defined also when one or more of the Floquet exponents lie in the open right-half plane (i.e., when the periodic dynamics are unstable).
In the next subsection, we show that when the dynamics are unstable, the Gramians in \eqref{eq:Pkmj} and \eqref{eq:Qkmj} satisfy corresponding differential Lyapunov equations.
% It is then reasonable to ask: what equations do \eqref{eq:Pkmj} and \eqref{eq:Qkmj} satisfy when the dynamics in \eqref{eq:sys_ltp} are unstable?
% This will be the focus of the next subsection.

\subsection{Unstable dynamics}

In this subsection we assume that one or more of the eigenvalues of~$\mJ$ lie in the open right-half plane, so that the periodic dynamics are unstable.
We henceforth let $\mathcal{P}_s$ and $\mathcal{P}_u$ denote the projections onto the stable and unstable eigenspaces of $\mJ$.
Notice that since $\mJ$ is a diagonal matrix, $\mathcal{P}_s$ and $\mathcal{P}_u$ are orthogonal projections (and, in fact, diagonal themselves).
% In this subsection we assume that $N_s$ Floquet exponents lie in the left-half plane, $N-N_s$ lie in the right-half plane and no exponent lies on the imaginary axis.
% Since $\mJ$ is diagonal and since its diagonal elements (i.e., the Floquet exponents) have been arranged according to their stability type (see formula \eqref{eq:mJ}), it is immediate to see that
% \begin{equation}
%     \mathcal{P}_s = \text{diag}(1,\ldots,\underbrace{1}_{N_s},0,\ldots,\underbrace{0}_{N}),\quad \mathcal{P}_u = \mI - \mathcal{P}_s = \text{diag}(0,\ldots,\underbrace{0}_{N_s},1,\ldots,\underbrace{1}_{N})
% \end{equation}
% denote the orthogonal projections onto the stable and unstable subspaces of the diagonalized dynamics \eqref{eq:sys_ltp_lti}.
We can then state the following result.

\begin{proposition}
\label{prop:unstable_lyap_eqtn}
Suppose that no eigenvalue of~$\mJ$ lies on the imaginary axis.
Then the Fourier coefficients $ \mP_{k}$ and $ \mQ_{k}$ defined in \eqref{eq:Pkmj} and \eqref{eq:Qkmj} satisfy
the following Sylvester equations,
\begin{align}
\left(-i k \omega\mI + \mJ\right) \mP_{k} +  \mP_{k} \mJ^* + \mathcal{P}_s\sum_{l\in\mathbb{Z}} \widetilde\mB_{k+l} \widetilde\mB^*_l\mathcal{P}_s - \mathcal{P}_u\sum_{l\in\mathbb{Z}}\widetilde\mB_{k+l} \widetilde\mB^*_l\mathcal{P}_u &= 0 \label{eq:sylv_fwd_unstable}\\
\left(i k \omega\mI + \mJ^*\right) \mQ_{k} +  \mQ_{k} \mJ +\mathcal{P}_s\sum_{l\in\mathbb{Z}} \widetilde\mC^*_{-l-k} \widetilde\mC_{-l}\mathcal{P}_s - \mathcal{P}_u\sum_{l\in\mathbb{Z}}\widetilde\mC^*_{-l-k} \widetilde\mC_{-l}\mathcal{P}_u &= 0.  \label{eq:sylv_adj_unstable}
\end{align}
Via inverse Fourier transform, it follows that the $T$-periodic Gramians $\mP(t)$ and $\mQ(t)$
% \begin{equation}
%     \mP(t) = \sum_{k\in\mathbb{Z}} \mP_k e ^{i k \omega t}\quad\text{and}\quad \mQ(t) = \sum_{k\in\mathbb{Z}} \mQ_k e^{i k \omega t}
% \end{equation}
satisfy the differential Lyapunov equations below
\begin{align}
    \ddt \mP(t) &= \mJ\mP(t) + \mP(t)\mJ^* + \mathcal{P}_s\widetilde{\mB}(t)\widetilde{\mB}(t)^*\mathcal{P}_s - \mathcal{P}_u\widetilde{\mB}(t)\widetilde{\mB}(t)^*\mathcal{P}_u \label{eq:fwd_lyap_unstable} \\
    -\ddt \mQ(t) &= \mJ^*\mQ(t) + \mQ(t)\mJ + \mathcal{P}_s\widetilde{\mC}(t)^*\widetilde{\mC}(t)\mathcal{P}_s - \mathcal{P}_u\widetilde{\mC}(t)^*\widetilde{\mC}(t)\mathcal{P}_u. \label{eq:adj_lyap_unstable}
\end{align}
\end{proposition}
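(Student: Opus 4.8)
The plan is to use the fact that $\mJ$ is diagonal to reduce each of the two Sylvester identities to a single scalar contour integral, and then to recover the differential Lyapunov equations by Fourier synthesis.

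I would begin entrywise. Since $\mJ=\mathrm{diag}(j_1,\dots,j_N)$, both resolvents appearing in \eqref{eq:Pkmj} as well as the projections $\mathcal{P}_s,\mathcal{P}_u$ are diagonal, so the scalar $(m,n)$ entry of $\mP_k$ is
\[
  (\mP_k)_{mn} = \big(\mathcal{B}_k\big)_{mn}\; g\!\left(j_m - ik\omega,\, j_n\right),
  \qquad
  g(\alpha,\beta) := \frac{1}{2\pi}\int_{-\infty}^{\infty}\frac{\mathrm{d}\gamma}{(i\gamma-\alpha)(-i\gamma-\bar\beta)},
\]
where $\mathcal{B}_k := \sum_{l\in\mathbb{Z}}\widetilde{\mB}_{k+l}\widetilde{\mB}_l^*$ is the $k$-th Fourier coefficient of $\widetilde{\mB}(t)\widetilde{\mB}(t)^*$. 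The advantage of working entrywise is that $g(\alpha,\beta)$ is an absolutely convergent integral (the integrand decays like $\gamma^{-2}$), so no interchange of limits is needed. Its two poles in $\gamma$, at $-i\alpha$ and $i\bar\beta$, stay off the real axis precisely because no Floquet exponent is on the imaginary axis (note $\mathrm{Re}(j_m-ik\omega)=\mathrm{Re}\,j_m$, so ``$\alpha$ stable'' means ``$j_m$ stable''). Closing the contour in the upper half-plane and collecting residues, I expect
\[
  g(\alpha,\beta)=
  \begin{cases}
    -\,(\alpha+\bar\beta)^{-1}, & \mathrm{Re}\,\alpha<0 \ \text{and}\ \mathrm{Re}\,\beta<0,\\
    +\,(\alpha+\bar\beta)^{-1}, & \mathrm{Re}\,\alpha>0 \ \text{and}\ \mathrm{Re}\,\beta>0,\\
    0, & \text{the two mixed cases,}
  \end{cases}
\]
so that $(\alpha+\bar\beta)\,g(\alpha,\beta)\in\{-1,+1,0\}$ in the three cases, respectively.

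Next I would multiply the entry formula by $(j_m-ik\omega)+\bar j_n$ and substitute these values: the entries for which $j_m,j_n$ are both stable reassemble into $\mathcal{P}_s\mathcal{B}_k\mathcal{P}_s$, those for which both are unstable into $-\mathcal{P}_u\mathcal{B}_k\mathcal{P}_u$, and the mixed entries vanish, which is exactly $(-ik\omega\mI+\mJ)\mP_k+\mP_k\mJ^*+\mathcal{P}_s\mathcal{B}_k\mathcal{P}_s-\mathcal{P}_u\mathcal{B}_k\mathcal{P}_u=0$, i.e.\ \eqref{eq:sylv_fwd_unstable}. The identical computation with $\widetilde{\mC}$ in place of $\widetilde{\mB}$ and the index shifts of \eqref{eq:Qkmj} yields \eqref{eq:sylv_adj_unstable}. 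Finally, to obtain the differential Lyapunov equations I would synthesize: since $\widetilde{\mB},\widetilde{\mC}$ are smooth and $T$-periodic their Fourier coefficients decay rapidly, and because the resolvent factors in \eqref{eq:Pkmj}--\eqref{eq:Qkmj} are bounded (off-axis hypothesis) and decay in $k$, the coefficients $\mP_k,\mQ_k$ decay rapidly as well; hence $\mP(t)=\sum_k\mP_k e^{ik\omega t}$ and $\mQ(t)=\sum_k\mQ_k e^{ik\omega t}$ are $C^1$ and may be differentiated term by term. Multiplying \eqref{eq:sylv_fwd_unstable} by $e^{ik\omega t}$, summing over $k\in\mathbb{Z}$, and using $\sum_k\mathcal{B}_k e^{ik\omega t}=\widetilde{\mB}(t)\widetilde{\mB}(t)^*$ turns the left-hand side into $-\ddt\mP(t)+\mJ\mP(t)+\mP(t)\mJ^*$ and the source into $\mathcal{P}_s\widetilde{\mB}(t)\widetilde{\mB}(t)^*\mathcal{P}_s-\mathcal{P}_u\widetilde{\mB}(t)\widetilde{\mB}(t)^*\mathcal{P}_u$, which is \eqref{eq:fwd_lyap_unstable}; the same manipulation applied to \eqref{eq:sylv_adj_unstable} gives \eqref{eq:adj_lyap_unstable}, the sign flip of the time derivative there coming from the $+ik\omega$ (rather than $-ik\omega$) that appears in the adjoint Sylvester equation.

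The step I expect to be the main obstacle is the residue evaluation of $g$: one must keep the bookkeeping straight across all four sign combinations of $(\mathrm{Re}\,\alpha,\mathrm{Re}\,\beta)$ and check that the two mixed cases genuinely cancel (closing in either half-plane leaves a sum of two residues that are negatives of one another), and one must confirm that neither pole ever reaches the real $\gamma$-axis — which is exactly where the hypothesis that no Floquet exponent lies on the imaginary axis is used. The remaining analytic point, namely that $\mP_k$ and $\mQ_k$ decay fast enough to justify differentiating the Fourier series term by term, is routine given the smoothness of $\widetilde{\mB}$ and $\widetilde{\mC}$, but should be recorded.
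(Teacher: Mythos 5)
Your proposal is correct, and it takes the same route the paper does — the paper simply outsources the core computation to the linear time-invariant argument in Section 10.2 of Godunov, whereas you carry it out explicitly by exploiting the diagonality of $\mJ$ to reduce \eqref{eq:Pkmj}--\eqref{eq:Qkmj} to the scalar contour integral $g(\alpha,\beta)$. Your residue bookkeeping checks out in all four sign cases (in particular the two mixed cases vanish, matching the absence of $\mathcal{P}_s(\cdot)\mathcal{P}_u$ cross terms in \eqref{eq:sylv_fwd_unstable}--\eqref{eq:sylv_adj_unstable}), and the term-by-term Fourier synthesis correctly produces the $-\ddt\mP$ versus $+\ddt\mQ$ sign difference; this is a complete, self-contained version of the proof the paper only sketches.
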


\begin{proof}
If all the Floquet exponents lie in the left-half plane, then $\mathcal{P}_u = \mathbf{0}$ and $\mathcal{P}_s = \mI$, and equations \eqref{eq:sylv_fwd_unstable} and \eqref{eq:sylv_adj_unstable} agree with \eqref{eq:sylv_fwd} and \eqref{eq:sylv_adj}, respectively.
In the general case when $\mathcal{P}_u \neq \mathbf{0}$, the proof is analogous to the linear time-invariant case in Section 10.2 of \cite{godunov}.
The second part of the proposition follows immediately via inverse Fourier transform.
\end{proof}

We have therefore established that the frequency-domain representation of the Gramians is well-defined for both stable and unstable dynamics.
By contrast, the time-domain integral representation of the Gramians is
well-defined only if the dynamics are stable; otherwise, the initial condition
response in~\eqref{eq:sol_fwd_lyap} would blow up for $t\gg t_0$.
In the next section we will take steps to make practical use of the results discussed in this section.

% In order to make practical use of the results discussed in this section, we need to address two issues.
% First and foremost, in balanced truncation we are not interested in explicitly computing the Fourier coefficients of the Gramians.
% Instead, we would like to compute the Gramians $\mP(t)$ and $\mQ(t)$ at desired time instances.
% Conveniently, we will show in the next subsection that the Gramians at any time $t\in[0,T)$ can be computed as an outer product of frequency-domain factors.
% Second, we need to understand how the Gramians can be computed using operators that are defined in the original physical coordinates $\vx(t)$.
% This is necessary because working in Floquet coordinates $\vz(t)$ is not computationally tractable, since computing the diagonalizing Floquet transform is almost always infeasible in high-dimensional systems.
% This will be addressed in section \ref{sec:connection_hr}.

\section{Towards an efficient algorithm to compute the Gramians}
\label{sec:towards_efficient_algorithm}

In this section we address two main issues.
First and foremost, in balanced truncation we are not interested in explicitly computing the Fourier coefficients of the Gramians.
Instead, we would like to evaluate the Gramians at desired time $t\in[0,T)$.
Conveniently, we show in the upcoming subsection that the Gramians at any time $t$ can be computed as an outer product of frequency-domain quantities.
In the subsequent subsections, we address the second issue.
Namely, we would like to compute the Gramians in the physical coordinates $\vx(t)$.
While the diagonalized Floquet coordinates $\vz(t)$ have proven useful to understand the structure of the Gramians, they are not well-suited for computation.
Specifically, for high-dimensional systems, it is generally infeasible to compute the diagonalizing Floquet transformation that maps $\vx(t)$ coordinates to $\vz(t)$ coordinates.

\subsection{Gramians as an outer product}

For the time being, we still work in Floquet coordinates $\vz(t)$, and we begin by showing that the Gramians $\mP(t)$ and $\mQ(t)$ may be computed at specific times $t$ as an outer product of frequency-domain factors.
This is convenient from a computational standpoint, since integrals written as outer products lend themselves to straightforward numerical quadrature.
The content of this subsection may therefore be understood as a first step towards developing an algorithm to compute the Gramians using frequency-domain variables.

\begin{proposition}
\label{prop:grams_as_outer_product}
The $T$-periodic reachability and observability Gramians $\mP(t)$ and $\mQ(t)$ can be written as
\begin{equation}
    \mP(t) = \frac{1}{2\pi}\int_{-\infty}^{\infty}\mZ(\gamma,t)\mZ(\gamma,t)^*\,\mathrm{d}\gamma, \quad
    \mQ(t) = \frac{1}{2\pi}\int_{-\infty}^{\infty}\mY(\gamma,t)\mY(\gamma,t)^*\,\mathrm{d}\gamma,\label{eq:mP_mQ_int}
\end{equation}
where
\begin{align}
    \mZ(\gamma,t) &= \sum_{m\in\mathbb{
    Z}}\mZ_m(\gamma) e^{i m\omega t} = \sum_{m\in\mathbb{Z}}\left(\left(i\gamma\mI - (-i m\omega\mI + \mJ)\right)^{-1} \widetilde\mB_{m}\right)e^{i m \omega t}\label{eq:mZ} \\
    \mY(\gamma,t) &= \sum_{m\in\mathbb{Z}}\mY_m(\gamma)e^{i m\omega t} = \sum_{m\in\mathbb{Z}}\left(\left(-i\gamma\mI - (i m\omega\mI + \mJ^*)\right)^{-1} \widetilde\mC_{-m}^*\right)e^{i m \omega t}.\label{eq:mY}
\end{align}
\end{proposition}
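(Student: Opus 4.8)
The plan is to establish \eqref{eq:mP_mQ_int} by computing the Fourier series in $t$ of the right-hand side and matching its coefficients against the Fourier coefficients $\mP_k$ and $\mQ_k$ of the Gramians already obtained in \eqref{eq:Pkmj}--\eqref{eq:Qkmj}. I will carry this out for $\mP(t)$; the argument for $\mQ(t)$ is identical after the substitutions $\widetilde\mB\mapsto\widetilde\mC^{*}$ and $\mJ\mapsto\mJ^{*}$ and the bookkeeping of indices already present in \eqref{eq:mY}.

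First I would insert the series \eqref{eq:mZ} into the integrand $\mZ(\gamma,t)\mZ(\gamma,t)^{*}$ and collect terms of common temporal frequency: writing $n=m-k$, the coefficient of $e^{ik\omega t}$ in $\mZ(\gamma,t)\mZ(\gamma,t)^{*}$ is $\sum_{m\in\mathbb{Z}}\mZ_m(\gamma)\mZ_{m-k}(\gamma)^{*}$. Using the elementary identity $\bigl[\bigl(i\gamma\mI-(-im\omega\mI+\mJ)\bigr)^{-1}\bigr]^{*}=\bigl(-i\gamma\mI-(im\omega\mI+\mJ^{*})\bigr)^{-1}$, each summand becomes
\[
\mZ_m(\gamma)\mZ_{m-k}(\gamma)^{*} = \bigl(i\gamma\mI-(-im\omega\mI+\mJ)\bigr)^{-1}\widetilde\mB_m\widetilde\mB_{m-k}^{*}\bigl(-i\gamma\mI-(i(m-k)\omega\mI+\mJ^{*})\bigr)^{-1}.
\]
Next I would integrate in $\gamma$ and, in the $m$-th term, perform the change of variable $\gamma\mapsto\gamma+(m-k)\omega$: this turns the right-hand resolvent into $(-i\gamma\mI-\mJ^{*})^{-1}$ and the left-hand one into $\bigl(i\gamma\mI-(-ik\omega\mI+\mJ)\bigr)^{-1}$. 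Re-indexing the sum by $l=m-k$ then gives $\sum_m\widetilde\mB_m\widetilde\mB_{m-k}^{*}=\sum_l\widetilde\mB_{k+l}\widetilde\mB_l^{*}$, so the coefficient of $e^{ik\omega t}$ equals precisely the expression for $\mP_k$ in \eqref{eq:Pkmj}. Summing $\mP_k e^{ik\omega t}$ over $k$ then yields the first identity in \eqref{eq:mP_mQ_int}, and the second follows in the same way from \eqref{eq:Qkmj} and \eqref{eq:mY}.

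The step I expect to require the most care is the justification of the operations on infinite sums and improper integrals — interchanging $\sum_m$ with $\int_{-\infty}^{\infty}\mathrm{d}\gamma$, and shifting the integration variable by the $m$-dependent amount $(m-k)\omega$. Both are legitimate under the standing assumption that no eigenvalue of $\mJ$ lies on the imaginary axis: then $\|(i\gamma\mI-(-im\omega\mI+\mJ))^{-1}\|$ is bounded uniformly in $m$ and $\gamma\in\mathbb{R}$ and decays like $|\gamma|^{-1}$ as $|\gamma|\to\infty$, while the Fourier coefficients $\widetilde\mB_m$ of the smooth $T$-periodic operator $\widetilde\mB(t)$ decay rapidly; hence each $\gamma$-integral converges absolutely with an $O(|\gamma|^{-2})$ integrand whose total mass is bounded uniformly in $m$, and $\sum_m\|\widetilde\mB_m\|\,\|\widetilde\mB_{m-k}\|<\infty$, so Fubini's theorem and the translation invariance of Lebesgue measure on $\mathbb{R}$ apply. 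I would finish with the observation that, in the stable case, the same identity drops out more transparently from Parseval's theorem applied to the truncated forward impulse response $s\mapsto\mathbf{1}_{\{s\ge0\}}e^{\mJ s}\widetilde\mB(t-s)$, whose Fourier transform in $s$ is exactly $\mZ(\gamma,t)$ and whose squared $L^{2}$-norm equals $\mP(t)=\lim_{t_0\to-\infty}\mG_R(t_0,t)$; this shortcut is, however, unavailable when the dynamics are unstable, in which case only the frequency-domain computation above is valid.
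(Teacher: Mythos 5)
Your proof is correct, and it reaches the result by a genuinely more direct route than the paper's. The paper's argument decomposes the candidate Fourier coefficient as $\hat\mP_k=\sum_{m=k+l}\hat\mP_{m-l}$ with $\hat\mP_{m-l}=\frac{1}{2\pi}\int_{-\infty}^{\infty}\mZ_m(\gamma)\mZ_l(\gamma)^*\,\mathrm{d}\gamma$, verifies that each piece satisfies an algebraic Sylvester equation of the form $(-im\omega\mI+\mJ)\hat\mP_{m-l}+\hat\mP_{m-l}(il\omega\mI+\mJ^*)+\cdots=0$, and then sums these equations, using linearity (and, implicitly, uniqueness of the solution of \eqref{eq:sylv_fwd_unstable}) to identify $\hat\mP_k$ with $\mP_k$. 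You never touch the Sylvester equations: you compute the coefficient of $e^{ik\omega t}$ in $\mZ(\gamma,t)\mZ(\gamma,t)^*$, translate the integration variable by $(m-k)\omega$ term by term so that the two resolvents become $m$-independent, and observe that the resulting expression is literally the integral formula \eqref{eq:Pkmj}; I checked the conjugation identity, the shift, and the re-indexing $l=m-k$, and they all come out right (likewise for $\mQ$). The two arguments are two faces of the same underlying fact --- the $\gamma$-integral representation of solutions of Sylvester equations --- but yours buys a cleaner logical structure, since no uniqueness appeal is needed once you match directly against the defining formula \eqref{eq:Pkmj}, which is valid in both the stable and unstable cases. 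It also forces you to confront the interchange of $\sum_m$ with $\int\mathrm{d}\gamma$ and the $m$-dependent translation of the integration variable, which you justify correctly via the uniform $O(|\gamma|^{-1})$ resolvent decay off the imaginary axis and the rapid decay of $\widetilde\mB_m$; the paper's proof leaves these convergence issues implicit. Your closing Parseval observation in the stable case is a useful sanity check tying \eqref{eq:mP_mQ_int} back to the time-domain definition \eqref{eq:Gr}, and you are right that it is unavailable when $\mJ$ has unstable eigenvalues.
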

\begin{proof}
The proof relies on the linearity of the Sylvester equations \eqref{eq:sylv_fwd_unstable} and \eqref{eq:sylv_adj_unstable}, so that it can be shown that the Fourier coefficients $\mP_k$ and $\mQ_k$ of the Gramians can be written as linear combination of quantities that satisfy Sylvester equations similar to \eqref{eq:sylv_fwd_unstable} and \eqref{eq:sylv_adj_unstable}.
Details can be found in  \ref{app_subsec:proof_gram_as_outer_products}.
\end{proof}

We have therefore written the reachability Gramian $\mP(t)$ as an outer product of a matrix-valued function $\mZ(\gamma,t)$ and its complex conjugate transpose.
It is now easy to see that for any fixed time $t \in [0,T)$, one can estimate $\mP(t)$ by numerically evaluating the integral in \eqref{eq:mP_mQ_int} via quadrature.
The most computationally-intensive part of evaluating this integral is the
computation of $\mZ_m(\gamma)$.  Remarkably, however, this computation only
needs to be performed once: one may compute and store $\mZ_m(\gamma)$ for every
$m$ and $\gamma$, and then, for every desired time $t \in [0,T)$, we simply have to rotate $\mZ_m(\gamma)$ by~$e^{i m \omega t}$ and evaluate the integral.

\subsection{Gramians in physical coordinates and connection with the harmonic resolvent}

We are now ready to transition to physical coordinates $\vx(t)$, which are well-suited for computation.
Given the Gramians $\mP(t)$ and $\mQ(t)$ in the Floquet coordinates $\vz(t)$, one may verify that the Gramians in the original $\vx(t)$ coordinates are given by
\begin{equation}
    \mP_\vx(t) = \mV(t)\mP(t)\mV(t)^*,\quad \mQ_\vx(t) = \mW(t)\mQ(t)\mW(t)^*,
\end{equation}
where, as before, the $T$-periodic matrices $\mV(t)$ and $\mW(t)$ define the Floquet change of coordinates in \eqref{eq:sys_ltp_lti}.
Using the integral representation of $\mP(t)$ and $\mQ(t)$ in Proposition \ref{prop:grams_as_outer_product}, it readily follows that
\begin{equation}
\label{eq:mPx_mQx_integrals}
    \mP_\vx(t) = \frac{1}{2\pi}\int_{-\infty}^{\infty}\mZ_{\vx}(\gamma,t)\mZ_{\vx}(\gamma,t)^*\,\mathrm{d}\gamma,\,\,
    \mQ_\vx(t) = \frac{1}{2\pi}\int_{-\infty}^{\infty}\mY_{\vx}(\gamma,t)\mY_{\vx}(\gamma,t)^*\,\mathrm{d}\gamma,
\end{equation}
where
\begin{align}
    \mZ_{\vx}(\gamma,t) &\coloneqq \mV(t)\mZ(\gamma,t) = \sum_{k,m\in\mathbb{Z}} \underbrace{\mV_{k-m}\mZ_m(\gamma)}_{\coloneqq\mZ_{\vx,k}(\gamma)}e^{i k\omega t},\label{eq:mZvx}\\
    \mY_{\vx}(\gamma,t) &\coloneqq \mW(t)\mY(\gamma,t) = \sum_{k,m\in\mathbb{Z}} \underbrace{\mW_{k-m}\mY_m(\gamma)}_{\coloneqq\mY_{\vx,k}(\gamma)}e^{i k\omega t}.
\end{align}
Using the definition of $\mZ_m(\gamma)$ and $\mY_m(\gamma)$ in Proposition \ref{prop:grams_as_outer_product}, and the definition of $\widetilde\mB(t)$ and $\widetilde\mC(t)$ in \eqref{eq:sys_ltp_lti}, we may further expand $\mZ_{\vx,k}(\gamma)$ and $\mY_{\vx,k}(\gamma)$ as follows
\begin{align}
    \mZ_{\vx,k}(\gamma) &= \sum_{m\in\mathbb{Z}}\mV_{k-m}\left(i\gamma\mI - (-i m \omega\mI + \mJ)\right)^{-1} \sum_{j\in\mathbb{Z}}\mW_{j-m}^*\mB_j\label{eq:mZx}\\
    \mY_{\vx,k}(\gamma) &= \sum_{m\in\mathbb{Z}}\mW_{k-m}\left(-i\gamma\mI - (i m\omega\mI + \mJ^*)\right)^{-1} \sum_{j\in\mathbb{Z}}\mV_{j-m}^*\mC_{-j}^*.\label{eq:mYx}
\end{align}
We now show that $\mZ_{\vx,k}(\gamma)$ and $\mY_{\vx,k}(\gamma)$ can be computed using the harmonic resolvent operator.

We begin with a short derivation of the harmonic resolvent operator.
More details may be found in \cite{padovan2022}.
Starting from \eqref{eq:sys_ltp} we write the state vector $\vx(t)$ as
\begin{equation}
\label{eq:emp}
    \vx(t) = e^{i\gamma t}\sum_{k\in\mathbb{Z}}\vx_{k+\gamma} e^{i k \omega t},\quad \gamma \in [0,\omega/2].
\end{equation}
The signal above is known as an exponentially modulated periodic (EMP) signal, where the $T$-periodic component inside the sum is modulated by the complex exponential $e^{i\gamma t}$.
It is well-known that EMPs are the appropriate class of signals for the analysis of time-periodic signals (see, e.g., \cite{johnson} or \cite{Wereley91}).
For later reference, we observe that the signal $\vx(t)$ is a sum of
Fourier modes with frequencies in the set $\Omega_\gamma = \gamma + \omega\mathbb{Z}$, where $+$ denotes element-wise addition.
Since all the linear operators in \eqref{eq:sys_ltp} are periodic with period $T$, they may be written in a Fourier series analogous to \eqref{eq:fourier_mP_mQ}.
Then, writing $\vu(t)$ as an EMP, formula \eqref{eq:sys_ltp} may be written in the frequency domain as
\begin{equation}
\label{eq:freq_domain_sys}
    \begin{aligned}
    i (\gamma + k\omega)\vx_{k+\gamma} &= \sum_{j\in\mathbb{Z}}\mA_{k-j}\vx_{j+\gamma} + \sum_{j\in\mathbb{Z}}\mB_{k-j}\vu_{j+\gamma} \\
    \vy_{k+\gamma} &= \sum_{j\in\mathbb{Z}} \mC_{k-j}\vx_{j+\gamma}.
    \end{aligned}
\end{equation}
Letting $\hat\vx_\gamma = \left(\ldots,\vx_{-1+\gamma},\vx_{\gamma},\vx_{1+\gamma},\ldots\right)$ denote an infinite-dimensional vector that contains all the coefficients of the EMP
signal \eqref{eq:emp}, we can define the infinite-dimensional linear operator $\mT$ as
\begin{equation}
\label{eq:mT}
    \left[\mT\hat\vx_\gamma\right]_{k} = - i k\omega \vx_{k+\gamma} + \sum_{j\in\mathbb{Z}}\mA_{k-j}\vx_{j+\gamma}.
\end{equation}
Notice that $\mT$ is independent of $\gamma$, as it depends only on $\omega$ and on the Fourier coefficients of $\mA(t)$.
From this definition and from formula \eqref{eq:freq_domain_sys}, it follows that $\vx_{k+\gamma}$ is given by
\begin{equation}
\label{eq:vx}
    \vx_{k+\gamma} = \sum_{j,l\in\mathbb{Z}}\underbrace{\left[\left(i\gamma \mI - \mT\right)^{-1}\right]_{k,j}}_{\mH_{k,j}(\gamma)}\mB_{j-l}\vu_{l+\gamma}
\end{equation}
where the operator
\begin{equation}
\label{eq:hr_gamma}
    \mH(\gamma) = \left(i\gamma \mI - \mT\right)^{-1}
\end{equation}
is known as the harmonic resolvent operator evaluated at $\gamma$, and $\mH_{k,j}(\gamma)$ is the block of $\mH(\gamma)$ that maps inputs at frequency $(\gamma + j\omega)\in\Omega_\gamma$ to outputs at frequency $(\gamma + k\omega)\in\Omega_\gamma$.
We can now state the desired result.

\begin{proposition}
\label{prop:gram_factors_hr}
The Fourier coefficients $\mZ_{\vx,k}(\gamma)$ and $\mY_{\vx,k}(\gamma)$ of the Gramian factors in formulas \eqref{eq:mZx} and \eqref{eq:mYx} may be written in terms of the harmonic resolvent $\mH(\gamma)$ as follows
\begin{equation}
    \mZ_{\vx,k}(\gamma) = \sum_{j\in\mathbb{Z}}\mH_{k,j}(\gamma)\mB_j,\quad \mY_{\vx,k}(\gamma) = \sum_{j\in\mathbb{Z}}[\mH(\gamma)^*]_{k,j}[\mC^*]_j.
\end{equation}
\end{proposition}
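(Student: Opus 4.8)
The plan is to establish the two identities by showing that the Floquet-coordinate expressions in \eqref{eq:mZx} and \eqref{eq:mYx} are nothing but block-row expansions of $\left(i\gamma\mI - \mT\right)^{-1}\mB$ and its adjoint applied to $\mC^*$, once we recognize how the Floquet transformation block-diagonalizes $\mT$. The key structural fact is that the change of coordinates $\vx(t) = \mV(t)\vz(t)$, which in the time domain turns $\mA(t)$ into the constant $\mJ$, acts in the frequency domain as an infinite block-Toeplitz operator: if $\mathcal{V}$ denotes the operator with $[\mathcal{V}]_{k,m} = \mV_{k-m}$ and $\mathcal{W}$ the one with $[\mathcal{W}]_{k,m} = \mW_{k-m}$, then $\mathcal{W}^*\mathcal{V} = \mI$ (from $\mW(t)^*\mV(t) = \mI$), and, crucially, $\mT = \mathcal{V}\,\mathrm{diag}\!\left(\ldots,(-i m\omega\mI + \mJ),\ldots\right)\mathcal{W}^*$. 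This last identity is just the frequency-domain statement of the Floquet relation $\frac{\mathrm d}{\mathrm dt}(\mV\vz) = \mA(\mV\vz)$ rewritten for EMP signals, i.e. $-ik\omega\vx_{k+\gamma} + \sum_j \mA_{k-j}\vx_{j+\gamma}$ equals $\mathcal{V}$ acting on $(-im\omega\mI + \mJ)$ acting on $\mathcal{W}^*\hat\vx_\gamma$; I would spell this out in an appendix-style lemma but it is a routine Fourier-coefficient manipulation.

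Granting that, the first identity follows by a direct computation. From \eqref{eq:hr_gamma} and the factorization of $\mT$ we get
\begin{equation}
\mH(\gamma) = \left(i\gamma\mI - \mT\right)^{-1} = \mathcal{V}\,\mathrm{diag}\!\left(\left(i\gamma\mI - (-im\omega\mI + \mJ)\right)^{-1}\right)\mathcal{W}^*,
\end{equation}
so that the $(k,j)$ block is $\mH_{k,j}(\gamma) = \sum_{m}\mV_{k-m}\left(i\gamma\mI - (-im\omega\mI+\mJ)\right)^{-1}\mW_{j-m}^*$. Substituting this into $\sum_j \mH_{k,j}(\gamma)\mB_j$ and reindexing the sum over $j$ so that $\mB_j$ pairs with $\mW_{j-m}^*$ reproduces exactly \eqref{eq:mZx}: the inner sum $\sum_j \mW_{j-m}^*\mB_j$ is the coefficient appearing there. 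This gives $\mZ_{\vx,k}(\gamma) = \sum_j \mH_{k,j}(\gamma)\mB_j$.

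For the observability factor I would run the adjoint version of the same argument. Taking the Hermitian adjoint of the factored $\mH(\gamma)$ gives $\mH(\gamma)^* = \mathcal{W}\,\mathrm{diag}\!\left(\left(-i\gamma\mI - (im\omega\mI + \mJ^*)\right)^{-1}\right)\mathcal{V}^*$, whose $(k,j)$ block is $\sum_m \mW_{k-m}\left(-i\gamma\mI - (im\omega\mI+\mJ^*)\right)^{-1}\mV_{m-j}^*$ (being careful that adjoining the block-Toeplitz $\mathcal{V}$ sends $\mV_{k-m}$ to $\mV_{m-k}^*$, i.e. transposes block indices and conjugate-transposes each block). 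Then $\sum_j [\mH(\gamma)^*]_{k,j}[\mC^*]_j = \sum_j [\mH(\gamma)^*]_{k,j}\mC_{-j}^*$, and reindexing to pair $\mC_{-j}^*$ with $\mV_{m-j}^*$ — writing $l = j$ and matching against $\sum_j \mV_{j-m}^*\mC_{-j}^*$ — recovers \eqref{eq:mYx}, hence $\mY_{\vx,k}(\gamma) = \sum_j [\mH(\gamma)^*]_{k,j}[\mC^*]_j$.

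The main obstacle is not any single computation but the bookkeeping: one must verify the operator factorization $\mT = \mathcal{V}\,\mathrm{diag}(-im\omega\mI+\mJ)\,\mathcal{W}^*$ rigorously (including that $\mathcal{V}$ is boundedly invertible on the relevant sequence space with inverse $\mathcal{W}^*$, so that inverting $i\gamma\mI - \mT$ block-by-block is legitimate), and then track the two convolution reindexings without sign or shift errors. I would isolate the factorization as a lemma and relegate the index-chasing to the appendix referenced for Proposition \ref{prop:grams_as_outer_product}, since it is the same style of argument.
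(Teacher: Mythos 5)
Your proposal is correct and follows essentially the same route as the paper's own proof: both reduce the claim to the identity $\mH_{k,j}(\gamma)=\sum_{m}\mV_{k-m}\left(i\gamma\mI-(-im\omega\mI+\mJ)\right)^{-1}\mW_{j-m}^*$, which you obtain from the block-Toeplitz factorization $\mT=\mathcal{V}\,\mathrm{diag}(-im\omega\mI+\mJ)\,\mathcal{W}^*$ and the paper obtains equivalently by writing the Floquet-diagonalized system as an EMP input--output map and comparing with \eqref{eq:vx}. The only blemish is an index slip in your adjoint block, which should read $[\mathcal{V}^*]_{m,j}=\mV_{j-m}^*$ rather than $\mV_{m-j}^*$ (your own transposition rule gives the former), and which you implicitly correct when matching against \eqref{eq:mYx}.
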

\begin{proof}
Using a Floquet change of coordinates, it can be shown that
\begin{equation}
    \mH_{k,j}(\gamma) = \sum_{m\in\mathbb{Z}}\mV_{k-m}\left(i\gamma\mI -(-i m\omega\mI +\mJ)\right)^{-1}\mW^*_{j-m},
\end{equation}
and this concludes the proof. More details can be found in \ref{app_subsec:proof_gram_factors_hr}.
\end{proof}

Formula \eqref{eq:mPx_mQx_integrals} and Proposition \ref{prop:gram_factors_hr} give us the necessary building blocks for a practical algorithm to compute the Gramians at desired times $t\in[0,T)$.
In particular, the proposition shows us how to evaluate the integrals in \eqref{eq:mPx_mQx_integrals} using the easily accessible harmonic resolvent operator.

\section{An algorithm to compute a factorization of the Gramians}
\label{sec:algorithm_for_gramians}

In this section we highlight some of the features of the harmonic resolvent~$\mH(\gamma)$ that can be exploited to minimize computational cost.
We then provide an explicit algorithm to compute the factors $\mZ_{\vx}(\gamma,t)$ and $\mY_{\vx}(\gamma,t)$.

\subsection{Symmetries in the harmonic resolvent operator}

We henceforth focus on the reachability Gramian $\mP(t)$, since the computation of the observability Gramian $\mQ(t)$ can be carried out in a similar fashion.
For a given time $t\in[0,T)$, we can estimate the integral \eqref{eq:mPx_mQx_integrals} as follows
\begin{equation}
\label{eq:mP_quadrature}
    \mP_\vx(t) = \frac{1}{2\pi}\int_{-\infty}^{\infty}\mZ_\vx(\gamma,t)\mZ_\vx(\gamma,t)^*\mathrm{d}\gamma \approx \frac{1}{2\pi}\sum_{i \in \mathbb{N}} \xi_i \mZ_{\vx}(\gamma_i,t)\mZ_{\vx}(\gamma_i,t)^*
\end{equation}
where $\xi_i$ are quadrature coefficients, $\gamma_i$ are discrete samples over the interval $(-\infty,\infty)$ and
\begin{equation}
    \mZ_{\vx}(\gamma_i,t) = \sum_{k\in\mathbb{Z}}\mZ_{\vx,k}(\gamma_i)e^{i k\omega t} =  \sum_{k\in\mathbb{Z}}\left(\sum_{j\in\mathbb{Z}}\mH_{k,j}(\gamma_i)\mB_j\right) e^{i k \omega t}\in\mathbb{C}^{N\times M}
\end{equation}
by Proposition \ref{prop:gram_factors_hr}.
In order to evaluate $\mZ_\vx(\gamma,t)$ at different times, we can simply
compute and store the Fourier coefficients $\mZ_{\vx,k}(\gamma)$ and then rotate them using the complex exponential.
This is advantageous, since computing $\mZ_{\vx,k}(\gamma)$ is an expensive operation that could easily become computationally intractable if it had to be performed multiple times.
In particular, computing matrix-matrix products of the form $\mH_{k,j}(\gamma)\mB_j$ requires inverting the operator $i\gamma \mI - \mT$ (see formula \eqref{eq:vx}).
The cost of solving these linear systems is dominated either by the computation of a complete factorization of $i \gamma \mI - \mT$ (e.g., LU decomposition), or by the computation of a preconditioner to assist the convergence of iterative solvers such as GMRES.
This cost could be intractable if it had to be sustained for many values $\gamma$ in the interval $(-\infty,\infty)$.
Fortunately, we now show that the factorization (or computation of a
preconditioner) needs to be performed for only a few values of $\gamma$ in the interval $[0,\omega/2]$.

% The first step is simple, and it leverages the fact that the original system~\eqref{eq:sys_ltp} is real-valued. In particular, it is easy to verify that $\mZ_\vx(-\gamma,t) = \overline{\mZ_{\vx}(\gamma,t)}$,
% so that~$\mP_\vx(t)$ in \eqref{eq:mP_quadrature} may be approximated as
% \begin{equation}
% \label{eq:mPx_approx}
%     \mP_{\vx}(t) \approx \frac{1}{\pi}\sum_{i \in \mathbb{Z}_+} \xi_i \left( \mZ_{\vx,\text{r}}(\gamma_i,t)\mZ_{\vx,\text{r}}(\gamma_i,t)^* + \mZ_{\vx,\text{i}}(\gamma_i,t)\mZ_{\vx,\text{i}}(\gamma_i,t)^*\right),
% \end{equation}
% where the subscripts ``$\text{r}$" and ``$\text{i}$" denote the real and imaginary parts of $\mZ_{\vx}(\gamma_i,t)$, and $\mathbb{Z}_+$ denotes the set of positive integers (including zero).
% This simple observation shrinks the interval of interest to $[0,\infty)$.

We begin with the following proposition, which states that $\mZ(\alpha,t)$ for any $\alpha \in (-\infty,\infty)$ may be computed using the harmonic resolvent $\mH(\gamma)$ evaluated at $\gamma \in (-\omega/2,\omega/2]$.

\begin{proposition}
\label{prop:alpha_gamma}
For any $\alpha \in \mathbb{R}$, there exists an integer $m$ such that $\gamma = \alpha - m \omega \in (-\omega/2,\omega/2]$ and
\begin{equation}
    \mZ_{\vx}(\alpha,t) = \sum_{k,j\in\mathbb{Z}}\mH_{k,j}(\gamma)\mB_{j-m}e^{i (k-m) \omega t}.
\end{equation}
\end{proposition}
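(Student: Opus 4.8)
The plan is to reduce the claim to a single shift identity for the blocks of the harmonic resolvent, namely
\[
  \mH_{k,j}(\alpha) = \mH_{k+m,\,j+m}(\gamma)\qquad\text{whenever }\alpha = \gamma + m\omega ,
\]
and then to reindex the double sum that defines $\mZ_{\vx}(\alpha,t)$. First I would fix $\alpha\in\mathbb{R}$ and note that there is a unique integer $m$ with $\gamma\coloneqq\alpha - m\omega\in(-\omega/2,\omega/2]$, which is elementary. I would then apply Proposition~\ref{prop:gram_factors_hr} at the real argument $\alpha$; this is legitimate because, under the standing hypothesis that no Floquet exponent lies on the imaginary axis, the inverses in the Floquet formula $\mH_{k,j}(\gamma)=\sum_{n\in\mathbb{Z}}\mV_{k-n}\bigl(i\gamma\mI-(-in\omega\mI+\mJ)\bigr)^{-1}\mW^*_{j-n}$ from its proof exist for every real argument. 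This gives
\[
  \mZ_{\vx}(\alpha,t) = \sum_{k\in\mathbb{Z}}\mZ_{\vx,k}(\alpha)\,e^{ik\omega t} = \sum_{k,j\in\mathbb{Z}}\mH_{k,j}(\alpha)\,\mB_j\,e^{ik\omega t},
\]
so everything reduces to rewriting $\mH_{k,j}(\alpha)$ through $\mH(\gamma)$.

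For the shift identity I would work with the operator $\mT$ directly. Introduce the bilateral block shift $\mathcal{S}$ on bi-infinite sequences, $[\mathcal{S}\vv]_k = \vv_{k-1}$. In the definition~\eqref{eq:mT} of $\mT$ the only $k$-dependent ingredient is the diagonal term $-ik\omega\mI$, the convolution against the Fourier coefficients of $\mA$ being translation-invariant; a direct computation then gives $\mathcal{S}^{-1}\mT\mathcal{S} = \mT - i\omega\mI$, hence $\mathcal{S}^{-m}\mT\mathcal{S}^m = \mT - im\omega\mI$ by induction. Consequently $i\alpha\mI - \mT = i\gamma\mI + im\omega\mI - \mT = \mathcal{S}^{-m}(i\gamma\mI - \mT)\mathcal{S}^m$, and since $\mathcal{S}$ is invertible, $\mH(\alpha) = \mathcal{S}^{-m}\mH(\gamma)\mathcal{S}^m$. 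Reading off the $(k,j)$ block using $[\mathcal{S}^m]_{q,\ell} = \delta_{\ell,\,q-m}\mI$ yields $\mH_{k,j}(\alpha) = \mH_{k+m,\,j+m}(\gamma)$. Alternatively, one reaches the same conclusion more in the spirit of the paper by substituting $\alpha = \gamma + m\omega$ into the explicit Floquet expression for $\mH_{k,j}$ recalled above and reindexing $n\mapsto n-m$.

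Finally I would substitute this into the double sum for $\mZ_{\vx}(\alpha,t)$ and change variables $k\mapsto k-m$, $j\mapsto j-m$:
\[
  \mZ_{\vx}(\alpha,t) = \sum_{k,j\in\mathbb{Z}}\mH_{k+m,\,j+m}(\gamma)\,\mB_j\,e^{ik\omega t} = \sum_{k,j\in\mathbb{Z}}\mH_{k,j}(\gamma)\,\mB_{j-m}\,e^{i(k-m)\omega t},
\]
which is the asserted identity. I expect the main obstacle to be purely bookkeeping: keeping the direction of the shift $\mathcal{S}$ and the sign in $\gamma = \alpha - m\omega$ mutually consistent through the conjugation $\mathcal{S}^{-1}\mT\mathcal{S} = \mT - i\omega\mI$ and the subsequent block extraction. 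Beyond that, one should only remark that the rearrangement of the series is justified by the same absolute-convergence argument used when these sums were first introduced.
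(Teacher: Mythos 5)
Your proposal is correct and rests on the same underlying mechanism as the paper's proof: the observation that shifting the block index of $\mT$ by $m$ is equivalent to shifting the spectral parameter by $m\omega$, followed by a reindexing of the sums. The paper carries this out by substituting $\alpha=\gamma+m\omega$ directly into the block equations $i(\alpha+k\omega)\mZ_{\vx,k+\alpha}=\sum_l\mA_{k-l}\mZ_{\vx,l+\alpha}+\mB_k$ and relabelling indices, whereas you isolate the same fact as the cleaner standalone identity $\mH(\alpha)=\mathcal{S}^{-m}\mH(\gamma)\mathcal{S}^{m}$, i.e.\ $\mH_{k,j}(\alpha)=\mH_{k+m,\,j+m}(\gamma)$ — a repackaging, not a genuinely different argument, and your sign bookkeeping checks out.
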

\begin{proof}
Here we present an intuitive reason why this result holds.
Recall that $\mH(\alpha)$ maps inputs over the frequency set $\Omega_\alpha = \alpha + \omega\mathbb{Z}$ to outputs over the same set $\Omega_\alpha$.
Clearly, if $\alpha =\gamma + m\omega$ (for an integer $m$), then $\Omega_\alpha = \Omega_\gamma = \gamma + \omega\mathbb{Z}$.
So, in order to compute $\mZ_{\vx}(\alpha,t)$, we can use the harmonic resolvent $\mH(\gamma)$ evaluated at $\gamma$.
The rigorous proof is in \ref{app_subsec:proof_alpha_gamma}.
\end{proof}

A second observation that we can make to reduce the computational cost stems from the real-valued nature of the dynamics in \eqref{eq:sys_ltp}.
In particular, it can be shown that for every $\gamma$, we have $\mZ(-\gamma,t)=\overline{\mZ(\gamma,t)}$, where the overline denotes complex conjugation. Thus, the desired Gramian $\mP(t)$ may be approximated as
\begin{equation}
\label{eq:mPx_approx}
    \mP_{\vx}(t) \approx \frac{1}{\pi}\sum_{i \in \mathbb{N}} \xi_i \left[ \mZ_{\vx,\text{r}}(\alpha_i,t)\mZ_{\vx,\text{r}}(\alpha_i,t)^* + \mZ_{\vx,\text{i}}(\alpha_i,t)\mZ_{\vx,\text{i}}(\alpha_i,t)^*\right]c_i,
\end{equation}
where $\alpha_i \geq 0$, the subscripts ``$\text{r}$'' and ``$\text{i}$'' denote the real and
imaginary parts of $\mZ_{\vx}(\alpha_i,t)$, and $c_i=1$ for $\alpha_i >0$ and 1/2 for $\alpha_i = 0$.
In other words, the integral can be approximated by considering only positive values $\alpha\in[0,\infty)$.
Putting together this observation and the result from Proposition \ref{prop:alpha_gamma}, it is clear that we only need to factorize (or compute a preconditioner for) $\mH(\gamma)$ at values $\gamma \in [0,\infty)\cap (-\omega/2,\omega/2] = [0,\omega/2]$.

\subsection{Practical algorithm to compute the Gramian factors}

As a first step for practical implementation, we need to truncate the Fourier representations of the periodic components of the dynamics.
In particular, we truncate $\mA(t)$, $\mB(t)$, and $\mC(t)$ at frequency $r_b$, so that, for instance, we write
\begin{equation}
\label{eq:mA_fourier}
    \mA(t) = \sum_{k=-r_b}^{r_b}\mA_k e ^{i k \omega t}.
\end{equation}
Similarly, we truncate the EMP signal in \eqref{eq:emp} as follows
\begin{equation}
\label{eq:vx_emp}
    \vx(t) = e^{i\gamma t}\sum_{k=-r}^r\vx_k e^{i k \omega t},
\end{equation}
where we take $r \geq r_b$.
It follows that $\mT$ is a square matrix with size $(2 r + 1)N$ and structure shown below,
\begin{equation}
\label{eq:bmat}
    \mT = \begin{bmatrix}
    \ddots & \ddots & \ddots & \ddots \\
    \ddots & \hat\mR_{-2} & \hat\mA_{-1} & \hat\mA_{-2} & \ddots \\
    \ddots & \hat\mA_{1} & \hat\mR_{-1} & \hat\mA_{-1} & \hat\mA_{-2} & \ddots \\
    \ddots & \hat\mA_{2} & \hat\mA_{1} & \hat\mR_{0} & \hat\mA_{-1} & \hat\mA_{-2} &\ddots \\
     & \ddots & \hat\mA_{2} & \hat\mA_{1} & \hat\mR_{1} & \hat\mA_{-1} &\ddots \\
     &  & \ddots & \hat\mA_{2} & \hat\mA_{1} & \hat\mR_{2} & \ddots\\
     & & & \ddots & \ddots & \ddots & \ddots \\
    \end{bmatrix},
\end{equation}
where $\hat\mR_{k} = \big({- i k\omega} \mI + \hat\mA_0\big) \in \mathbb{C}^{N\times N}$.
Recalling that $i\gamma \mI-\mT$ acts on vectors~$\hat\vx_\gamma$, we henceforth use the notation $[\hat\vx_\gamma]_k$ to indicate the portion of the vector~$\hat\vx_\gamma$ that gets multiplied by $k$th block-column of $\mT$.
(For clarity, the $k$th block-column is the one containing the block $\hat\mR_{k}$.)
We can now observe that for fixed $m$ and $\gamma$, the quantity $\mZ_{\vx,k}(\alpha) \coloneqq \sum_{j}\mH_{k,j}(\gamma)\mB_{j-m}$ (with $\alpha = \gamma - m\omega$, see Proposition \ref{prop:alpha_gamma}) may be computed simultaneously for all $k \in\{-r,\ldots,r\}$.
For example, taking $\gamma = 0$ and $m = 1$, the quantities $\mZ_{\vx,k}(\omega) = \sum_{j}\mH_{k,j}(0)\mB_{j-1}$ satisfy
\begin{equation}
    \underbrace{-\xbracketMatrixstack{
      \ddots & \ddots & & & \\
      \ddots & \mR_{-1} & \mA_{-1} & & \\
      & \mA_{1} & \mR_{0} & \mA_{-1} & & \\
      & & \mA_{1} & \mR_{1} & \ddots & \\
      & & & \ddots & \ddots
    }}_{i\gamma \mI - \mT = - \mT}
    \underbrace{\xbracketVectorstack{
      \vdots \\ \mZ_{\vx,-1}(\omega) \\ \mZ_{\vx,0}(\omega)  \\ \mZ_{\vx,1}(\omega)  \\ \vdots
    }}_{\hat\mZ_{\vx,\gamma}^{(m)}=\hat\mZ_{\vx,0}^{(1)}} =
    \underbrace{\xbracketVectorstack{
      \vdots \\ \mB_{-2} \\ \mB_{-1} \\ \mB_{0} \\ \vdots
    }}_{\hat\mB^{(m)}= \hat\mB^{(1)}}.
\end{equation}
Notice that the matrix $\hat\mB^{(m)}$ is defined such that $[\hat\mB^{(m)}]_k = \mB_{k-m}$.
In general then, given fixed $m$ and $\gamma$, we compute $\mZ_{\vx,k}(\alpha) = \sum_j\mH_{k,j}(\gamma)\mB_{j-m}$ as follows:
\begin{equation}
    \text{solve}\quad (i\gamma \mI - \mT)\hat\mZ^{(m)}_{\vx,\gamma} = \hat\mB^{(m)},\quad \text{extract}\quad \mZ_{\vx,k}(\alpha) \coloneqq [\hat\mZ^{(m)}_{\vx,\gamma}]_{k}.
\end{equation}

\begin{algorithm}
\caption{Compute factor $\widetilde\mZ_{\vx}(t_n)$ for $t_n \in [0,T)$}\label{alg:compute_factor}
\begin{algorithmic}[1]
\Require Matrix $\mT$ and discrete points $\gamma_i \in [0,\omega/2]$ with $i \in \{1,2,\ldots,L\}$
\Ensure Matrix $\widetilde\mZ_{\vx}(t)$ at time $t\in[0,T)$
\Algphase{Part I: Compute $\mZ_{\vx,k}(\gamma_i + m\omega)\coloneqq \sum_{j}\mH_{k,j}(\gamma)\mB_{j-m}$ for all $\gamma_i$ and $m$}
\State Initialize matrix $\mX \in \mathbb{C}^{(2 r +1)N\times ((L-2)(2 r+1) + 2(r+1))M}$
\For{$i \in \{1,2,\ldots,L\}$}
    \State Compute factorization (or preconditioner) of $i \gamma_i \mI - \mT$
    \If{$\gamma_i \neq 0$ and $\gamma_i \neq \omega/2$}
    \State Range $=\{-r,\ldots,0,\ldots,r\}$
    \Else
    \State Range $=\{0,\ldots,r\}$
    \EndIf
    \For {$m \in$ Range}
        \State Solve $(i\gamma_i \mI - \mT)\hat\mZ_{\vx,\gamma_i}^{(m)} = \hat\mB^{(m)}$, where $[\hat\mZ_{\vx,\gamma_i}^{(m)}]_k = \mZ_{\vx,k}(\gamma_i+m\omega)$
        \State Store $\hat\mZ^{(m)}_{\vx,\gamma_i}$ in $\mX$
    \EndFor
\EndFor
\State \textbf{Return:} Matrix $\mX$
\Algphase{Part II: Compute $\widetilde\mZ_\vx(t_n)$ at some desired time $t_n\in[0,T)$}

\State Initialize $\widetilde\mZ_{\vx}(t_n) \in \mathbb{R}^{N \times 2 ((L-2)(2 r+1) + 2(r+1))M}$
\For{$i \in \{1,2,\ldots,L\}$}
    \If{$\gamma_i \neq 0$ and $\gamma_i \neq \omega/2$}
    \State Range $=\{-r,\ldots,0,\ldots,r\}$
    \Else
    \State Range $=\{0,\ldots,r\}$
    \EndIf
    \If{$\gamma_i = 0$}
    \State $c_i = 1/2$
    \Else
    \State $c_i = 1$
    \EndIf
    \For {$m \in$ Range}
        \State Extract the component $\hat\mZ_{\gamma_i}^{(m)}$ from $\mX$
        \State Compute $\mZ_{\vx,\gamma_i}^{(m)}(t_n)\leftarrow \sum_{k\in\mathbb{Z}_X}[\hat\mZ_{\gamma_i}^{(m)}]_k e^{i (k-m)\omega t_n}$
        \State Store $\sqrt{\frac{c_i\xi_i}{\pi}}\text{Real}(\mZ_{\vx,\gamma_i}^{(m)}(t_n))$ and $\sqrt{\frac{c_i\xi_i}{\pi}}\text{Imag}(\mZ_{\vx,\gamma_i}^{(m)}(t_n))$ into $\widetilde\mZ_{\vx}(t_n)$
    \EndFor
\EndFor
\State \textbf{Return:} Factor $\widetilde\mZ_{\vx}(t_n)$
\end{algorithmic}
\end{algorithm}

We are now ready to present an algorithm to compute the Gramian factors.
In particular, Algorithm \ref{alg:compute_factor} will output a matrix $\widetilde\mZ_{\vx}(t)$ with columns
\begin{equation}
    \frac{1}{\sqrt{\pi}} \left\{\sqrt{c_i\xi_i}\mZ_{\vx,\text{r}}(\alpha_i,t),\sqrt{c_i\xi_i}\mZ_{\vx,\text{i}}(\alpha_i,t)\right\},\quad \alpha_i \in [0,\infty),
\end{equation}
so that, per equation \eqref{eq:mPx_approx}, we have $\mP_\vx(t) \approx \widetilde\mZ_{\vx}(t)\widetilde\mZ_{\vx}(t)^*$.

In the first part of the algorithm, we compute the frequency-domain factors and store them.
This is the most computationally intensive part of the algorithm, as we need to solve several linear systems of size $(2r+1)N$.
In the second part, we simply rotate the previously computed factors using the complex exponential, and evaluate the factor $\mZ_{\vx}(t)$ at the desired time $t\in [0,T)$.
This part of the algorithm is virtually free of cost
compared to the first part.
% For the sake of clarity, we also observe that when $\gamma_l = 0$ or when $\gamma_l = \omega/2$, we only loop over non-negative indices $m$, since for every $\alpha \in \Omega_0$ (resp. $\Omega_{\omega/2}$), the set $\Omega_0$ (resp. $\Omega_{\omega/2}$) also contains the frequency $-\alpha$.
% \cwrremark{Check this, and possibly explain more clearly.}
% So, if we looped over positive and negative indices, we would be double-counting the contribution of the sets $\Omega_0$ and $\Omega_{\omega/2}$ to the Gramians.
We close this section by observing that the factors $\widetilde\mY_\vx(t)$ may be computed using Algorithm \ref{alg:compute_factor} by replacing $i\gamma_i \mI - \mT$ with its complex conjugate transpose and by replacing $\mB$ with $\mC^*$.
The observability Gramian at any time $t$ may then be evaluated via quadrature as $\mQ_\vx(t)\approx\widetilde\mY_{\vx}(t)\widetilde\mY_{\vx}(t)^*$.

\subsection{Computational considerations}
\label{subsec:computational_considerations}

We now discuss the computational advantages and drawbacks of our proposed approach to compute the Gramian factors in the frequency domain.
Once again, we consider the reachability Gramian $\mP_{\vx}(t)$, since analogous logic applies to the observability Gramian $\mQ_{\vx}(t)$.
In order to appreciate the benefits and drawbacks of using frequential Gramians, it is instructive to understand how the Gramians would be computed in the time domain.
For a \emph{stable} system \eqref{eq:sys_ltp}, we recall that
\begin{equation}
\label{eq:mPx_int_time}
    \mP_{\vx}(t) = \mP_{\vx}(t+T) \coloneqq \lim_{n\rightarrow\infty}\mG_{R,\vx}(0,n T + t),
\end{equation}
where $n$ is a positive integer and
\begin{equation}
    \mG_{R,\vx}(0,s) = \int_{0}^{s}\mF(s,\tau)\mB(\tau)\mB(\tau)^*\mF(s,\tau)^*\mathrm{d}\tau.
\end{equation}
Here, $\mF(s,\tau)$ denotes the fundamental solution of \eqref{eq:sys_ltp};
i.e., $\mathrm{d}/\mathrm{d}s (\mF(s,\tau)) = \mA(s) \mF(s,\tau)$ with $\mF(\tau,\tau)=\mI$.
For a fixed $\tau \leq s$, the quantity
\begin{equation}
    \vx_{\tau}(s) \coloneqq \mF(s,\tau)\mB(\tau) \in \mathbb{R}^{N\times M}
\end{equation}
is the time-$\tau$ impulse response of \eqref{eq:sys_ltp}, which can be computed numerically by solving
\begin{equation}
\label{eq:time_step}
    \frac{\mathrm{d}}{\mathrm{d} t}\vx^{(j)}_{\tau}(t) = \mA(t)\vx^{(j)}_{\tau}(t),\quad \vx^{(j)}_{\tau}(\tau) = \mB_j(\tau),\quad j \in \{1,2,\ldots,M\},
\end{equation}
from $t=\tau$ to $t = s$.
Formula \eqref{eq:mPx_int_time} naturally lends itself to numerical quadrature, so that $\mP_{\vx}(t)$ may be approximated as $\mP_{\vx}(t) \approx\lim_{n\rightarrow\infty} \mX(n T + t)\mX(n T + t)^*$, where $\mX(s)$ is given below
\begin{equation}
    \mX(s) =\bigg[\sqrt{\xi_1} \vx_{\tau_1}(s),\sqrt{\xi_2} \vx_{\tau_2}(s),\ldots,\sqrt{\xi_S}\vx_{\tau_S}(s)\bigg]\in\mathbb{C}^{N\times M S},\quad \tau_i \in [0,s],
\end{equation}
and $\xi_i$ are quadrature coefficients.
By contrast, we recall from part II of Algorithm \ref{alg:compute_factor} that the Gramian $\mP_{\vx}(t)$ may be approximated in the frequency domain as $\mP_{\vx}(t)\approx \widetilde\mZ_{\vx}(t)\widetilde\mZ_{\vx}(t)^*$, where $\widetilde\mZ_{\vx}(t)$ has size $N\times 2 M I$ and $I$ is equal to the number of discrete $\alpha_i$ in equation \eqref{eq:mPx_approx}.
The factor of $2$ comes from the fact that the real and imaginary parts of $\mZ_{\vx}(\alpha,t)$ in equation \eqref{eq:mPx_approx} are stored separately.

With this information at hand, we see that evaluating $\widetilde\mZ_{\vx}(t)$ at some fixed time $t$ requires solving $M I$ algebraic systems of equations.
By contrast, evaluating $\mX(s)$ at some fixed time $s = n T + t$ requires solving $M S$ initial-value problems \eqref{eq:time_step} in the time domain.
The question we ask is, when is it convenient to compute $\mP_{\vx}(t)$ in the frequency domain, and when is it convenient to compute it in the time domain?
Although it is virtually impossible to provide a precise operations count for the two methods, we can still provide guidelines that the user may find useful.
For simplicity, let us assume that $O(I) = O(S)$. This can usually be taken to be the case in practice.
The fundamental difference between the two methods is that one requires time stepping, while the other does not.
Thus, although the number of required impulse responses $M S$ and required
linear solves $M I$ is comparable, the time-domain method is fundamentally
limited by \textit{(i)} numerical stability constraints associated with time
stepping (i.e., the time step in \eqref{eq:time_step} might have to be small,
depending on the specific properties of the time-stepper) and $\textit{(ii)}$
physical transients that time-steppers are forced to follow.
In particular, if the dynamics exhibit slowly-decaying transients, then $s = n T+t$ must be taken very large (see formula \eqref{eq:mPx_int_time}), and the cost of computing the Gramian using the time domain increases significantly.
By constrast, computing the Gramian factors in the frequency domain requires
solving algebraic systems of equations that do not suffer from the drawback of
time-stepping methods.
Another often overlooked aspect of time steppers is that they are inherently sequential in time, so that any type of computational parallelism can only be spatial (e.g., distributing the degrees of freedom of \eqref{eq:sys_ltp} across multiple processors).
On the other hand, algebraic systems of equations lend themselves to massive space-time parallelism, so that a higher number of processors can be deployed to accelerate the computations in the frequency domain.

Unfortunately, the benefits of the frequency domain are not free of cost.
The main drawback is that the $N$-dimensional time-periodic dynamics are ``lifted" into a higher-dimensional space of size $N(2 r + 1)$.
Therefore, solving the desired algebraic equations requires inverting the operator $i\gamma \mI -\mT$.
The main issue therefore lies in computing a factorization of $i\gamma - \mT$, or a suitable preconditioner (although, fortunately, this only has to be done for a few values $\gamma \in [0,\omega/2)$, as explained in the previous section).
Depending on the nature and size of the underlying problem, it may be possible
to compute an LU decomposition of $i\gamma \mI -\mT$ using parallelized
libraries such as MUMPS \citep{mumps}, or alternatively, one may use out-of-the-box preconditioners.
Given the structure of $\mT$, we recommend the use of Block-Jacobi as a starting point.
However, there may be problems whose size is such that an LU factorization cannot be performed, and whose structure is such that a Block-Jacobi preconditioner does not work particulalry well.
Identifying an efficient preconditioner tailored to the quasi block-Toeplitz structure of the matrix $i\gamma \mI -\mT$ remains an open question and the subject of future work.

As far as storage is concerned, the time-domain approach is more efficient.
Although both approaches yield Gramian factors of comparable size, part I of
Algorithm \ref{alg:compute_factor} shows that the matrix $\mX$ of size $N(2r + 1)\times M I$ needs to be held in memory.
Thus, the memory burden of the frequency-domain approach is $O(r)$ higher than the time-domain method.

Given this discussion, it becomes clear that the choice of algorithm is heavily dependent on the nature of the underlying dynamics.
In general, we recommend the use of the frequency domain for systems with state $N$ of moderate size $O(10^5)$ or less, so that the size of $\mT$ remains below $O(10^7)$ and an LU decomposition of $i\gamma\mI-\mT$ can be computed.
If the size of the system is much higher than $O(10^5)$, but the structure of $\mT$ is predominantly diagonal, then the frequency domain remains a feasible option, with out-of-the-box preconditioners such as Block-Jacobi assisting the convergence of Krylov solvers.
Systems where the structure of $\mT$ is dominantly diagonal (see, e.g., \eqref{eq:bmat}) are systems where $\lVert \mA_{k > 0}\rVert \ll \lVert \mA_0\rVert$, where $\mA_k$ is the $k$th Fourier coefficient of the operator $\mA(t)$.
We also recommend the use of the frequency domain if the underlying system exhibits slowly-decaying transients, since, as explained, these will inevitably drive up the cost of computing the Gramians in the time domain.
Finally, the frequency domain is also well-suited for unstable systems, while the time-domain procedure described herein cannot be applied since the limit in \eqref{eq:mPx_int_time} does not exist.
Theoretically, this issue could be solved via a splitting of the stable and unstable Floquet eigenspaces, but this comes at the price of higher computational cost.

\section{Continuous-time balanced truncation}
\label{sec:bt}

In this section we describe the continuous-time balanced truncation approach for model reduction.
Given a dynamical system of the form \eqref{eq:sys_ltp}, balanced truncation seeks a \emph{continuously differentiable} periodic change of coordinates $\vx(t) = \mPhi(t)\vq(t)$, with $\mPsi(t)^*\mPhi(t) = \mI \in \mathbb{R}^{N\times N}$, such that the $\vq(t)$-coordinate Gramians
\begin{equation}
    \mP_{\vq}(t) = \mPsi(t)^*\mP_{\vx}(t)\mPsi(t),\quad \mQ_{\vq}(t) = \mPhi(t)^*\mQ_{\vx}(t)\mPhi(t)
\end{equation}
are equal and diagonal.
In other words,
\begin{equation}
    \mP_\vq(t) = \mQ_\vq(t) = \mSigma(t) = \text{diag}(\sigma_1(t),\sigma_2(t),\ldots,\sigma_N(t)).
\end{equation}
We begin by illustrating the balancing scheme, and then we address some of the subtleties associated with the time-varying nature of the problem.
Given a factorization of the Gramians, e.g.,
\begin{equation}
    \mP_\vx(t) = \widetilde\mZ_\vx(t)\widetilde\mZ_\vx(t)^*,\quad \mQ_\vx(t) = \widetilde\mY_\vx(t)\widetilde\mY_\vx(t)^*,
\end{equation}
where the factors can be chosen to be $T$-periodic (see, e.g., the factors described in the previous sections),
the first step in computing the balancing change of coordinates is to compute the singular value decomposition (SVD)
\begin{equation}
\label{eq:svd}
    \widetilde\mY_\vx(t)^*\widetilde\mZ_\vx(t) = \mU(t)\mSigma(t)\mV(t)^*
\end{equation}
at all times $t\in [0,T)$.
Then, so long as $\mSigma(t)^{-1}$ exists for all $t$, the desired matrices $\mPhi(t)$ and $\mPsi(t)$ are given by
\begin{equation}
\label{eq:mPhi_mPsi}
    \mPhi(t) = \widetilde\mZ_\vx(t)\mV(t)\mSigma(t)^{-1/2},\quad \mPsi(t) = \widetilde\mY_\vx(t)\mU(t)\mSigma(t)^{-1/2},
\end{equation}
where it can be easily checked that $\mPsi(t)^*\mPhi(t) =\mI$ for all times.

The first subtlety stems from the fact that we  require the matrices $\mPhi(t)$ and $\mPsi(t)$ to be continuously differentiable.
As a consequence, while in linear time-invariant systems the (time-invariant) Hankel singular values $\sigma_i$ may be arranged in descending order, in the time-varying setting it may not be possible to enforce this arrangement for all times.
In fact, the differentiability requirement can cause the time-periodic
$\sigma_i(t)$ to cross each other for different~$i$.
This seemingly innocuous difference may lead to ambiguities when trying to determine a global (i.e., for all times) truncation rank $r \ll N$ to assemble a reduced-order model.
In fact, it is theoretically possible that the smallest singular value at time $t = 0$ becomes the largest at a later time.
% It is then unclear how to truncate the states based on the Hankel singular values.
Fortunately, while the singular values can certainly coalesce in practice, the crossing is usually localized, so that the aforementioned pathological behavior is uncommon.
For the sake of completeness, we observe that this issue can be addressed by allowing for the reduced-order system to have time-varying dimensions.
While this is a natural thing to do in discrete-time setting \citep{varga2000}, the possibility of time-varying state dimensions in continuous-time has been explored by \cite{sandberg2004} at the price of introducing discontinuities in the measured output $\vy(t)$.

Another subtlety that arises in continuous-time balanced truncation is the fact that even though the full-order model \eqref{eq:sys_ltp} is periodic with period $T$, the balanced model may have period $m T$ for some integer $m \geq 1$.
This is a direct consequence of the fact that smooth decompositions of $T$-periodic matrices (in this case $\widetilde\mY_\vx(t)^*\widetilde\mZ_\vx(t)$) may yield factors (see equation \eqref{eq:svd}) whose period is larger than $T$.
A thorough discussion on smoothness and periodicity of some matrix decompositions may be found in \cite{Chern2001}.
These two difficulties are inherently tied to the balancing procedure and they cannot be avoided.
Fortunately, however, they can be addressed in a straightforward fashion within our framework.
An algorithm is provided below.

\begin{algorithm}
\caption{Compute matrices $\mPhi(t)$ and $\mPsi(t)$}\label{alg:compute_balancing}
\begin{algorithmic}[1]
\Require Matrix $\mT$, discrete samples $\gamma_i \in [0,\omega/2]$, reduced-order model rank $r$, expected period $m T$ of $\mPhi(t)$ and $\mPsi(t)$, discrete time samples $t_n \in [0,m T)$
\Ensure Matrices $\mPhi(t_n),\,\mPsi(t_n)\in\mathbb{R}^{N\times r}$ at time instances $t_n \in [0,m T)$
\State Compute and store matrix $\mX$ using part I of Algorithm \ref{alg:compute_factor}
\For{$t_n \in [0,m T)$}
    \State Compute $\widetilde\mZ_\vx(t_n)$ and $\widetilde\mY_\vx(t_n)$ using part II of Algorithm \ref{alg:compute_factor}
    \State Compute the SVD of $\widetilde\mY_\vx(t_n)^*\widetilde\mZ_\vx(t_n) = \mU(t_n)\mSigma(t_n)\mV(t_n)^*$ as in \eqref{eq:svd}
    \State Truncate the SVD factors at rank $r$
    \If{$t_n > 0$}
        \State Order the SVD factors so that the are continuously differentiable
    \EndIf
    \State Compute $\mPhi(t_n)$ and $\mPsi(t_n)$ as in \eqref{eq:mPhi_mPsi}
\EndFor
\end{algorithmic}
\end{algorithm}

The ``if statement" in the algorithm may be understood as a mode-tracking step. Given a sufficiently finely sampled time interval, it is reasonable to expect the SVD factors $\mU(t_n)$ and $\mV(t_n)$ to be well-aligned with the factors $\mU(t_{n-1})$ and $\mV(t_{n-1})$.
Comparing the mode alignment at neighboring time instances allows us to detect any crossing of the singular values and to keep the factors continuously differentiable across the entire interval $[0,m T)$.
For completeness, it is worth observing that different mode-tracking logic can be implemented.
For example, this could be done using an approach similar to the dynamical low rank approximation described in \cite{othmar2007} and in \cite{lubich2014}.
In Algorithm \ref{alg:compute_balancing}, the time interval $[0,m T)$ itself is given as an input to the algorithm.
Unfortunately, there is no practical a-priori way of determining what the period of the factors will be \citep{Chern2001}, so the appropriate value $m T$ will be problem dependent.
Fortunately, it is inexpensive to try different values of $m$ (or even to choose $m$ sufficiently large and then identify the minimal period), since the computationally-intensive part of algorithm~\ref{alg:compute_balancing} consists in computing the matrix $\mX$ via part I of Algorithm \ref{alg:compute_factor}.

Given the matrices $\mPhi(t)$ and $\mPsi(t)$ of period $m T$, the desired $m T$-periodic $r$-dimensional reduced model is obtained by substituting $\vx(t) = \mPhi(t)\vq(t)$ into \eqref{eq:sys_ltp} and left-multiplying by $\mPsi(t)^*$,
\begin{equation}
\label{eq:rom}
    \begin{aligned}
    \ddt\vq(t) &= \underbrace{\mPsi(t)^*\left(\mA(t)\mPhi(t)-\ddt\mPhi(t)\right)}_{\mA_r(t)}\vq(t) + \underbrace{\mPsi(t)^*\mB(t)}_{\mB_r(t)}\vu(t) \\
    \vy(t) &= \underbrace{\mC(t)\mPhi(t)}_{\mC_r(t)}\vq(t).
    \end{aligned}
\end{equation}
This is a $m T$-periodic linear system whose size $r \ll N$ is suitable for control and estimation.

\section{Application to an axisymmetric jet}

In this section we demonstrate the balancing algorithms \ref{alg:compute_factor} and \ref{alg:compute_balancing} on a periodically-forced incompressible axisymmetric jet at two different Reynolds numbers, $Re = 1250$ and $Re = 1500$.
At $Re = 1250$, the flow admits a stable time-periodic solution characterized by
unpaired vortex rings.
However, this solution is extremely sensitive to subharmonic perturbations, so
that any small perturbation will cause neighboring vortex rings to pair and merge.
Here, we compute a reduced-order model and we design a disturbance-rejection feedback controller to delay and mitigate the pairing phenomenon.
At $Re = 1500$, the flow admits an \emph{unstable} time-periodic solution, also
characterized by unpaired vortex rings.
Given the unstable nature of the solution, however, the flow will naturally depart from the unstable unpaired configuration and it will settle onto a different periodic orbit characterized by paired rings.
In this case, we compute a reduced-order model and we use it to design a stabilizing feedback controller.
These two cases demonstrate the effectiveness of algorithms \ref{alg:compute_factor} and \ref{alg:compute_balancing} at delivering a reduced-order model both when the underlying dynamics are stable and when they are unstable.

\subsection{Flow description and numerical setup}

We begin by providing a brief description of the governing equations.
Throughout, velocities are non-dimensionalized by the jet centerline velocity~$U_0$ and lengths are non-dimensionalized by the jet diameter $D_0$, so that we may define the Reynolds number $Re = U_0 D_0/\nu$, where $\nu$ is the kinematic viscosity of the fluid.
The flow is governed by the incompressible Navier-Stokes equation along with the continuity equation over the spatial domain $\mathcal{D} = \{(z,\xi)\lvert \,z\in [0,L_z],\,\xi\in [0,L_\xi]\}$, with $L_z = 15$ and $L_\xi=4$.
In particular, given the (dimensionless) axial velocity $u$, the radial velocity $v$ and the pressure $p$, we have
\begin{subequations}
  \label{eq:ns-all}
\begin{align}
    \frac{\partial u}{\partial t} + u\frac{\partial u}{\partial z} + v\frac{\partial u}{\partial \xi} &= -\frac{\partial p}{\partial z} + \frac{1}{Re}\left(\frac{1}{\xi}\frac{\partial}{\partial \xi}\left(\xi\frac{\partial u}{\partial \xi}\right) + \frac{\partial^2 u}{\partial z^2} \right) \label{eq:ns_z}\\
    \frac{\partial v}{\partial t} + u\frac{\partial v}{\partial z} + v\frac{\partial v}{\partial \xi} &= -\frac{\partial p}{\partial \xi} + \frac{1}{Re}\left(\frac{1}{\xi}\frac{\partial}{\partial \xi}\left(\xi\frac{\partial v}{\partial \xi}\right) - \frac{v}{\xi^2} + \frac{\partial^2 v}{\partial z^2} \right)\label{eq:ns_xi} \\
    \frac{\partial u}{\partial z} + \frac{1}{\xi}\frac{\partial\left(\xi v\right)}{\partial \xi} &= 0.\label{eq:ns_cont}
\end{align}
\end{subequations}
At the centerline $\xi=0$ we impose axisymmetric boundary conditions, at the outflow and at the top boundary we impose a zero normal gradient boundary condition on both velocity components, and at the inflow we consider the axial velocity profile
\begin{equation}
    \label{eq:forcing_profile}
    u(\xi,z=0,t) = g(\xi) \left(1 + A\cos \omega t \right),
  \end{equation}
 where $A$ is the non-dimensional forcing amplitude, $\omega$ is the forcing frequency, and
 \begin{equation}
    g(\xi) = \frac{1}{2}\bigg\{1 - \tanh\left[\frac{1}{4\theta_0}\left(\xi - \frac{1}{4\xi} \right)\right]\bigg\}.
\end{equation}
The parameter $\theta_0$ may be understood as a non-dimensional vorticity thickness of the incoming profile.
The spatial domain is discretized on a fully-staggered grid using second-order finite differences, except for the advective term, which is treated using a third-order upwind-biased scheme.
Given the fully-staggered nature of the grid, we do not require explicit pressure boundary conditions.
Throughout, we work on a grid of size $N_z\times N_\xi = 600\times 200$, we fix $A = 0.05$, $\theta_0 = 0.025$ and $\omega = 2\pi 0.6$, and we consider two different Reynolds numbers $Re = 1250$ and $Re = 1500$.
For this choice of parameters, it is shown by \cite{ardali_jet19} via a Floquet stability analysis that the $T$-periodic solution is stable at $Re = 1250$ and unstable at $Re = 1500$.
For both cases, we compute the $T$-periodic solution via time-stepping of the Navier-Stokes equations \eqref{eq:ns-all} augmented with the time-delay feedback technique described in \cite{ardali_time_delayed_17}.
This technique is necessary to compute unstable solutions via time-stepping (as in the $Re = 1500$ case), but it can also be used to suppress transients in stable configurations (e.g., the $Re = 1250$ case) thereby accelerating the convergence to the desired post-transient solution.
Representative snapshots from the two solutions are shown in figure \ref{fig:bflow}.

\begin{figure}
\centering
\begin{minipage}{0.48\textwidth}
\begin{tikzonimage}[trim= 15 15 35 115,clip,width=0.95\textwidth]{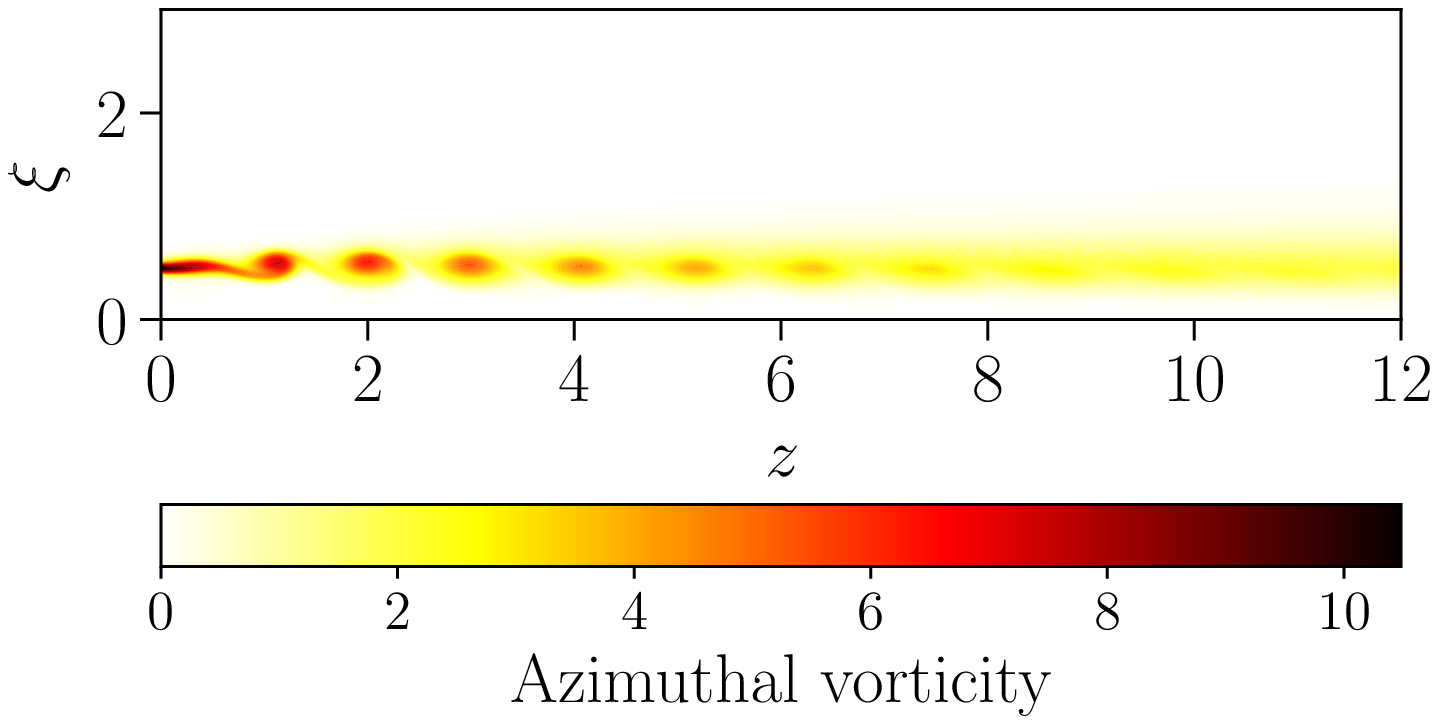}%[tsx/show help lines]
\node at (0.90,0.85) {\small $\textit{(a)}$};
\end{tikzonimage}
\end{minipage}
% \hspace{-4ex}
\begin{minipage}{0.48\textwidth}
\begin{tikzonimage}[trim= 15 15 35 115,clip,width=0.95\textwidth]{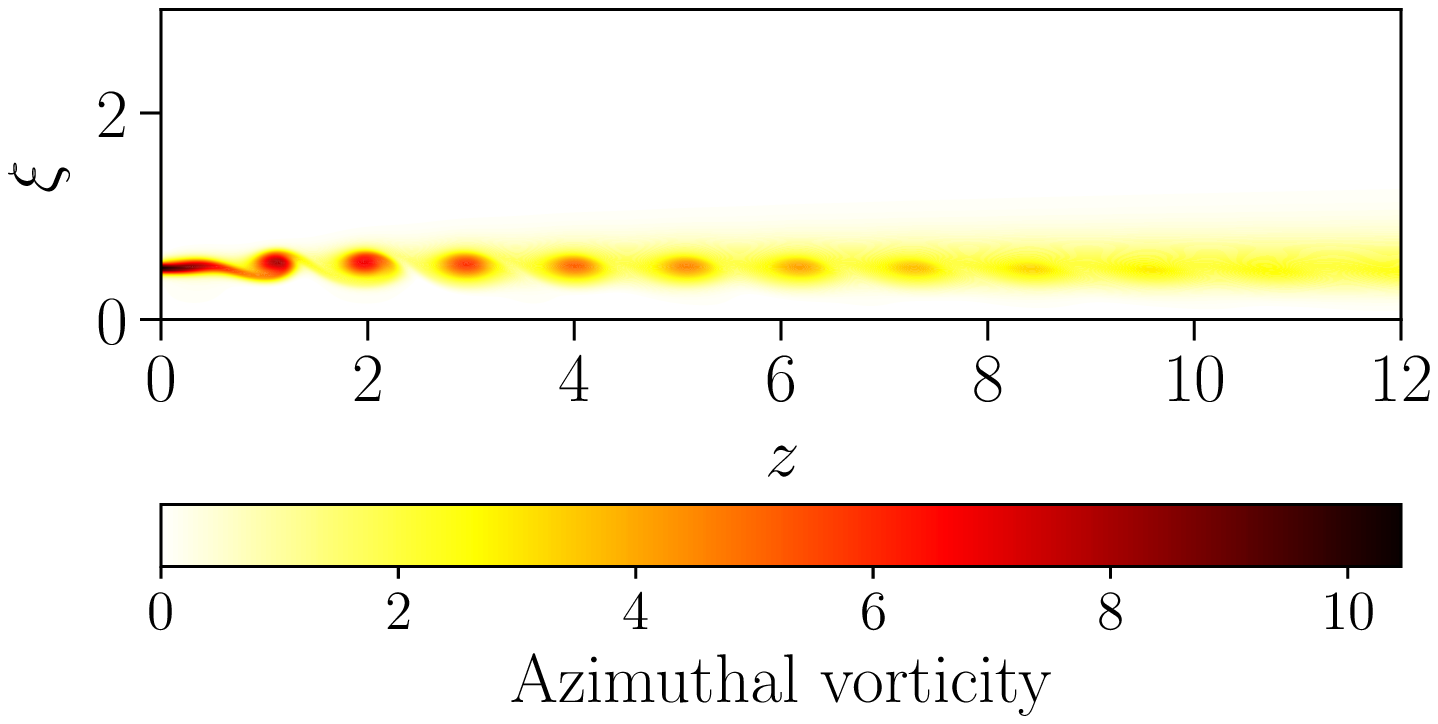}%[tsx/show help lines]
\node at (0.90,0.85) {\small $\textit{(b)}$};
\end{tikzonimage}
\end{minipage}
\caption{Vorticity snapshots from the $T$-periodic solution at time $t = 0$ for \textit{(a)} Reynolds number $Re = 1250$ and \textit{(b)} Reynolds number $Re = 1500$.}
\label{fig:bflow}
\end{figure}

\subsection{Actuator and sensor configurations}

% The objective of this section is to use of algorithms \ref{alg:compute_factor} and \ref{alg:compute_balancing} to compute a reduced-order model.
% We then use this model for control and estimation to suppress vortex pairing.

We now provide details concerning the model-reduction procedure and we also describe the actuator and sensor configurations.
Given the $T$-periodic solution $\vX = \left(U,V\right)$ of the Navier-Stokes equations \eqref{eq:ns-all}, the perturbed velocity field $\vx' = \left(u',v'\right)$ and the perturbed pressure $p'$, the linearized dynamics are governed by the equations

\begin{subequations}
\label{eq:ns_lin}
\begin{align}
    \frac{\partial u'}{\partial t} + B_z(\vx',\vX) &= -\frac{\partial p'}{\partial z} + \frac{1}{Re}\left(\frac{1}{\xi}\frac{\partial}{\partial \xi}\left(\xi\frac{\partial u'}{\partial \xi}\right) + \frac{\partial^2 u'}{\partial z^2} \right) \\
    \frac{\partial v'}{\partial t} + B_\xi(\vx',\vX) &= -\frac{\partial p'}{\partial \xi} + \frac{1}{Re}\left(\frac{1}{\xi}\frac{\partial}{\partial \xi}\left(\xi\frac{\partial v'}{\partial \xi}\right) - \frac{v'}{\xi^2} + \frac{\partial^2 v'}{\partial z^2} \right) \\
    \frac{\partial u'}{\partial z} + \frac{1}{\xi}\frac{\partial\left(\xi v'\right)}{\partial \xi} &= 0,
\end{align}
\end{subequations}
where
\begin{align*}
    B_z(\vx',\vX) &= u'\frac{\partial U}{\partial z} + U\frac{\partial u'}{\partial z} + v'\frac{\partial U}{\partial \xi} + V\frac{\partial u'}{\partial \xi} \\
    B_\xi(\vx',\vX) &= u'\frac{\partial V}{\partial z} + U\frac{\partial v'}{\partial z} + v'\frac{\partial V}{\partial \xi} + V\frac{\partial v'}{\partial \xi}.
\end{align*}
The boundary conditions on the perturbed velocity are analogous to those imposed
on the full velocity field, except for the inflow where we impose zero Dirichlet boundary conditions on both velocity components.
Upon removal of the pressure $p'$ via a Poisson equation and letting $\vx'(t)$ denote the spatially-discretized velocity at the cell faces of the computational grid, the system \eqref{eq:ns_lin} may be written as a linear time-periodic system in standard form
\begin{equation}
\label{eq:ns_ltp_unf}
    \ddt \vx'(t) = \mA(t)\vx'(t),\quad \mA(t) = \mA(t+T).
\end{equation}
Given our grid size, the state vector $\vx'(t)$ will have size $N = 2 N_z N_\xi = 2.4\times 10^5$.

At this point, we need to specify how the control input will enter the dynamics.
This is our first design choice, and we decide to actuate the flow through an
axial velocity body force localized in the proximity of $z_c = 1$ and $\xi_c = 0.5$,
with magnitude
% Specifically, we let the control input matrix $\mB \in \mathbb{R}^{N \times 1}$ be given by
\begin{equation}
\label{eq:mB_control}
    \exp\left[-\frac{1}{\theta_0}\left((z-z_c)^2 +
        (\xi-\xi_c)^2\right)\right] u(t),
\end{equation}
where $u(t)$ is our control input.  Thus, the matrix $\mB$ in equation~\eqref{eq:sys_ltp} is
a single column of height $N$.  Notice that here, the matrix~$\mB$ is time-invariant, which means that the control input always enters the dynamics at the same physical location.
We choose the location $(z_c,\xi_c)$ near the jet nozzle, since this is where one might be able to place an actuator in practice.
Additional insight into the actuator placement was also obtained from the sensitivity analysis in \cite{padovan2022}, where the authors showed that the flow is most sensitive to axial velocity perturbations in the proximity of the nozzle.

The second design choice concerns sensor placement.
Throughout, we choose to measure the axial velocity at four different locations with radial coordinate $\xi_c = 0.5$ and axial coordinates $z_c = \{1.5,2.5,5,6\}$.
This yields a time-invariant matrix $\mC \in \mathbb{R}^{4\times N}$, where each row is given by a spatial profile analogous to the one in \eqref{eq:mB_control}.
In choosing the sensor locations we considered the following.
First, one or more sensors should be placed in proximity of the actuator in order to mitigate the detrimental effect of delays between the input and the measured response.
Second, one or more sensors should be placed farther downstream since this is
the location of the vortex pairing phenomenon that we wish to suppress.
Given our $\mB$ and $\mC$ matrices, we henceforth work with the input-output system below
\begin{equation}
\label{eq:ltp_jet}
    \begin{aligned}
    \ddt \vx'(t) &= \mA(t)\vx'(t) + \mB u(t)\\
    \vy(t) &= \mC\vx'(t),
    \end{aligned}
\end{equation}
where $u(t)\in\mathbb{R}$ is our control input (which will be determined by an appropriate feedback law) and $\vy(t)\in\mathbb{R}^4$ is the measured output.

In order to compute the balancing transformation matrices $\mPhi(t)$ and $\mPsi(t)$ using algorithms \ref{alg:compute_factor} and \ref{alg:compute_balancing}, we need to assemble the matrix $\mT$ \eqref{eq:bmat} associated with the linearized dynamics \eqref{eq:ltp_jet}.
In particular, we truncate the Fourier representation of $\mA(t)$ at $r_b = 4$ harmonics of the fundamental frequency $\omega$ (see formula \eqref{eq:mA_fourier}) and we truncate the EMP representation of the state $\vx'(t)$ at $r = 6$ harmonics (see formula \eqref{eq:vx_emp}).
Thus, the matrix $\mT$ will have size $(2 r +1)N = 3.12\times 10^6$.
Algorithm \ref{alg:compute_factor} is implemented in a PETSc-based solver run on the Princeton Tiger Cluster, and the linear solvers in the algorithm are preconditioned with PETSc's built-in Block-Jocobi preconditioner.
The interval $[0,\omega/2]$ in Algorithm \ref{alg:compute_factor} is discretized using 11 equally-spaced points $\gamma_l\in[0,\omega/2]$.
After computing the balancing transformation matrices $\mPhi(t)$ and $\mPsi(t)$,
we can explicitly assemble a reduced-order model of the form~\eqref{eq:rom},
where in this case the matrices $\mB$ and $\mC$ are time-invariant.

\subsection{Feedback controller and state estimator design}
\label{subsec:controller_observer_design}
Given a reduced-order model of the form \eqref{eq:rom} with reduced state $\vq(t)\in\mathbb{R}^r$, we can now approach the task of designing a feedback controller to modify the full-order dynamics.
We design the feedback controller using the linear quadratic regulator (LQR) approach for linear time-periodic systems.
A thorough overview of the LQR problem for time-periodic systems is given in \cite{Wereley91}, while rigorous results on the existence and uniqueness of a periodic feedback law may be found in \cite{bittanti}.
Simply put, given the $m T$-periodic linear system \eqref{eq:rom}, the LQR method yields a $m T$-periodic feedback control law $u(t)=-\mK(t)\vq(t)$ by solving the optimization problem
\begin{equation}
    \min_{\mK(t)}\quad \mathcal{J}_{LQR} = \int_0^\infty \big(\vq(t)^*\mQ_q(t)\vq(t) + u(t)^2\big)\,\mathrm{d}t,
\end{equation}
subject to the dynamics in \eqref{eq:rom}.
Here, $\mQ_q(t)$ is a positive-semidefinite
$r\times r$ matrix that quantifies the relative importance
of driving the states to zero, versus maintaining small control inputs.
Our choice of $\mQ_q(t)$ is informed by the analysis carried out in section 5C of \cite{padovan2022}.
In particular, for $Re = 1250$, we demonstrated that the pairing phenomenon that we wish to suppress is driven exclusively by a $2 T$-periodic mode denoted~$\vxi(t)$.
In order to suppress (or mitigate) vortex pairing, we therefore need to design a controller that rejects perturbations whose projection onto $\vxi(t)$ is non-zero.
Given the full-order state $\vx'(t) = \mPhi(t)\vq(t)$, and letting $\vxi(t)$ be normalized such that
\begin{equation}
\label{eq:vxi_normalized}
    \frac{1}{2 T}\int_0^{2 T}\vxi(t)^*\vxi(t)\,\mathrm{d}t =  \sum_{k\in\mathbb{Z}}\hat\vxi_k^*\hat\vxi_k = 1,
\end{equation}
the projection of $\vx'(t)$ onto $\vxi(t)$ is given by $\vxi(t)^*\mPhi(t)\vq(t)$.
This information may be encoded into the LQR problem by choosing $\mQ_q(t)$ as follows,
\begin{equation}
\label{eq:mQ_lqr}
    \mQ_q(t) = \alpha \left(\mPhi(t)^*\vxi(t)\vxi(t)^*\mPhi(t)\right),
\end{equation}
where $\alpha$ is a positive scalar.
The matrix $\mQ_q(t)$ now contributes to the cost function $\mathcal{J}_{LQR}$ by measuring the projection of the full-state onto the ``most dangerous" mode.
Consequently, the resulting optimal feedback law will try to change the dynamics by minimizing the projection of the state onto $\vxi(t)$.
The same rationale was applied in the $Re = 1500$ case.

In order to implement the feedback law discussed above, it is necessary to design a state estimator (or observer), which, given the available sensor measurements, computes a state estimate $\widetilde\vq(t)$.
The desired control input will then be given by $u(t) = -\mK(t)\widetilde\vq(t)$.
Here, we design an observer using the linear quadratic estimator (LQE) approach, which assumes that the reduced state $\vq(t)$ and the measured output $\vy(t)$ are corrupted by Gaussian noise.
More specifically, we suppose that $\vq(t)$ and $\vy(t)$ are governed by
\begin{equation}
    \begin{aligned}
    \ddt\vq(t) &= \mA_r(t)\vq(t) + \mB_r(t)u(t) + \overline{\mB}_r(t)d(t) \\
    \vy(t) &= \mC_r(t)\vq(t) + \vn(t),
    \end{aligned}
\end{equation}
where the disturbance $d(t)$ and the sensor noise $\vn(t)$ are zero-mean
Gaussian processes with covariance $\mathbb{E}[d(t)d(\tau)]=\beta\,
\delta(t-\tau)$ and $\mathbb{E}[\vn(t)\vn(\tau)^*]=\mQ_n \delta(t-\tau)$, respectively.
The linear operator $\overline{\mB}_r(t)$ is $m T$-periodic and it is chosen by the user to model how the disturbance $d(t)$ enters the dynamics.
We will elaborate on this choice shortly.
It may be shown that the optimal state estimate $\widetilde\vq(t)$ is governed by the dynamics below
\begin{equation}
\label{eq:obsv}
    \ddt \widetilde\vq(t) = \left(\mA_r(t)-\mL(t)\mC_r(t)\right)\widetilde\vq(t) + \mB_r(t)u(t) + \mL(t)\vy(t),
\end{equation}
where the $m T$-periodic matrix $\mL(t)$ is chosen to minimize the expected
estimation error
\begin{equation*}
  \lim_{t\to\infty} \mathbb{E}\big[\big\|\vq(t) - \widetilde\vq(t)\big\|\big].
\end{equation*}
In our implementation, we choose $\mQ_n$ to be a diagonal matrix with entries
\begin{equation}
    [\mQ_n]_{i,i} = \frac{\max_t |\mC_r(t)|}{\max_t |\mC_{r,i}(t)|},
\end{equation}
where $\mC_{r,i}$ denotes the $i$th row of the output matrix $\mC_r(t)$.
This ensures that the resulting estimator responds equally strongly (or weakly) to changes in each measured output.
Finally, we design the matrix $\overline{\mB}_r(t)$ once again by leveraging the results from \cite{padovan2022}.
According to that analysis, the only external disturbances that have a
measurable effect on the flow are those that align with the aforementioned mode~$\vxi(t)$.  Therefore, we choose
\begin{equation}
    \overline{\mB}_r(t) = \mPsi(t)^*\vxi(t).
\end{equation}
That is, we model the disturbances that enter the dynamics via the ``most dangerous'' mode and we disregard all the other ones.
A block diagram of the observer-based feedback configuration is shown in figure \ref{fig:block_diags}.

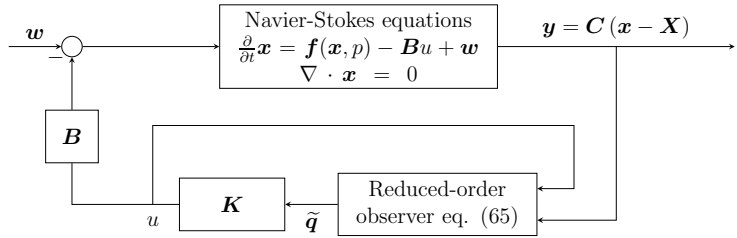
\begin{figure}
\centering
\scalebox{0.7}
{
\begin{tikzpicture}[auto, node distance=1cm,>=latex',>=stealth]
    \node [input, name=input] {};
    \node [sum, right=of input] (sum) {};
    % \node [block, right=of sum] (controller) {$C(s)$};
    \node [block, right=4 of input, text width=5cm,align=center] (plant)
    {Navier-Stokes equations $\frac{\partial}{\partial t}\vx = \vf(\vx,p) - \mB u + \vw$ $\nabla \cdot \vx = 0$};
    \node [output, right=4.5 of plant] (output) {};
    \node at ($(plant)+(1.5,-3)$) [block,text width=3.5cm,align=center] (obsv) {Reduced-order observer eq. \eqref{eq:obsv}};
    \node [block, left=1 of obsv] (feedback) {$\mK$};
    \draw [draw,->] (input) -- node {$\vw$} (sum);
    \draw [->] (sum) -- (plant);
    \draw [->] (plant) -- node [name=y] {$\vy =\mC \left(\vx - \vX\right)$}(output);
    \draw [->] (y) |- ([yshift=-0.3cm]obsv.east) ;
    \draw [draw,->] (obsv) -- node {$\widetilde\vq$} (feedback);
    \node [block_small, below=1 of sum] (B) {$\mB$}; 
    \draw [-] (feedback) -| (B);
    \draw [->] (B) -- node[pos=0.99] {$-$} (sum);
    \coordinate[left=0.5 of feedback,name=node0];
    \coordinate[above=1.5 of node0,name=node1];
    \coordinate[right=8 of node1,name=node2];
    \draw[->] (node0) -- (node1) -- (node2) |- ([yshift=0.3cm]obsv.east);
    \node [label={[xshift=-0.0cm, yshift=-0.7cm]$u$},left=0.35 of feedback]{};
\end{tikzpicture}
}
\caption{Block diagram for the observer-based feedback configuration. The plant
  (labelled ``Navier-Stokes equations'') represents equations \eqref{eq:ns-all}
  plus the additional feedback term $-\mB u$ and some external forcing input
  (or disturbance) $\vw$.}
\label{fig:block_diags}
\end{figure}

\subsection{Suppressing vortex pairing at $Re=1250$}
\label{subsec:results_Re1250}

We begin by considering the case $Re =1250$, where the $T$-periodic base flow is linearly stable.
However, as previously mentioned, almost every small perturbation triggers vortex pairing.
% In this section we proceed as follows:
% \begin{enumerate}
%     \item we use algorithms \ref{alg:compute_factor} and \ref{alg:compute_balancing} to assemble a reduced-order model of dimension $r=6$ and period $m T$ with $m=2$;
%     \item we use the reduced-order model to design a feedback controller to suppress vortex pairing and we test the efficacy of this controller using full-state feedback (see diagram in figure \ref{fig:block_diags}a);
%     \item we test the efficacy of the feedback controller when used alongside a suitably-designed observer (see diagram in figure \ref{fig:block_diags}b).
% \end{enumerate}
Here, we wish to design a reduced-order feedback controller and observer to suppress vortex pairing.
We begin by computing a reduced-order model using the balancing procedure described in algorithms \ref{alg:compute_factor} and \ref{alg:compute_balancing}.
The balanced model has period $2 T$ (recall that it is possible that the ROM has a higher period than the underlying full-order model) and we select model size $r = 6$.
The truncation rank is chosen based on the decay of the singular values and on the predictive accuracy of the ROM, both shown in \ref{app:roms_Re1250}.
Using the reduced-order model, we design a feedback controller using the strategy discussed in section \ref{subsec:controller_observer_design}, and we select $\mQ_q(t)$ in \eqref{eq:mQ_lqr} with
\begin{equation}
\label{eq:alpha_lqr}
    \alpha = \frac{ 10^{-3}}{\max_{t}\left(\mPhi(t)^*\vxi(t)\vxi(t)^*\mPhi(t)\right)}.
\end{equation}
We also design an estimator and we choose $\beta=10$.
At this point we are ready to verify if we can suppress (or at least mitigate) the vortex pairing phenomenon.
Since we know that vortex pairing is driven almost exclusively by the mode $\vxi(t)$ in Padovan and Rowley (2022), we induce vortex pairing by forcing the nonlinear Navier-Stokes equations \eqref{eq:ns-all} with the external forcing input
\begin{equation}
\label{eq:forcing_vw}
    \vw(t) = 10^{-3}\vxi(t),\quad \vw(t) = \vw(t+2 T),
\end{equation}
where $\vxi(t)$ is normalized as in \eqref{eq:vxi_normalized}.
The initial condition to the nonlinear full-order simulation is taken to be a state of heavy vortex pairing, while the initial condition for the reduced-order observer is set to zero.
This is to emulate the realistic scenario where we do not have a good guess for the initial reduced-order state.
We then integrate the (nonlinear) Navier-Stokes equations with and without
feedback control, using the observer-based feedback configuration shown in figure~\ref{fig:block_diags}.
The results are shown in figures \ref{fig:sensor_output_Re1250} and~\ref{fig:snapshots_Re1250}.

\begin{figure}
\centering
\begin{minipage}{0.48\textwidth}
\begin{tikzonimage}[trim= 10 50 0 75,clip,width=0.92\textwidth]{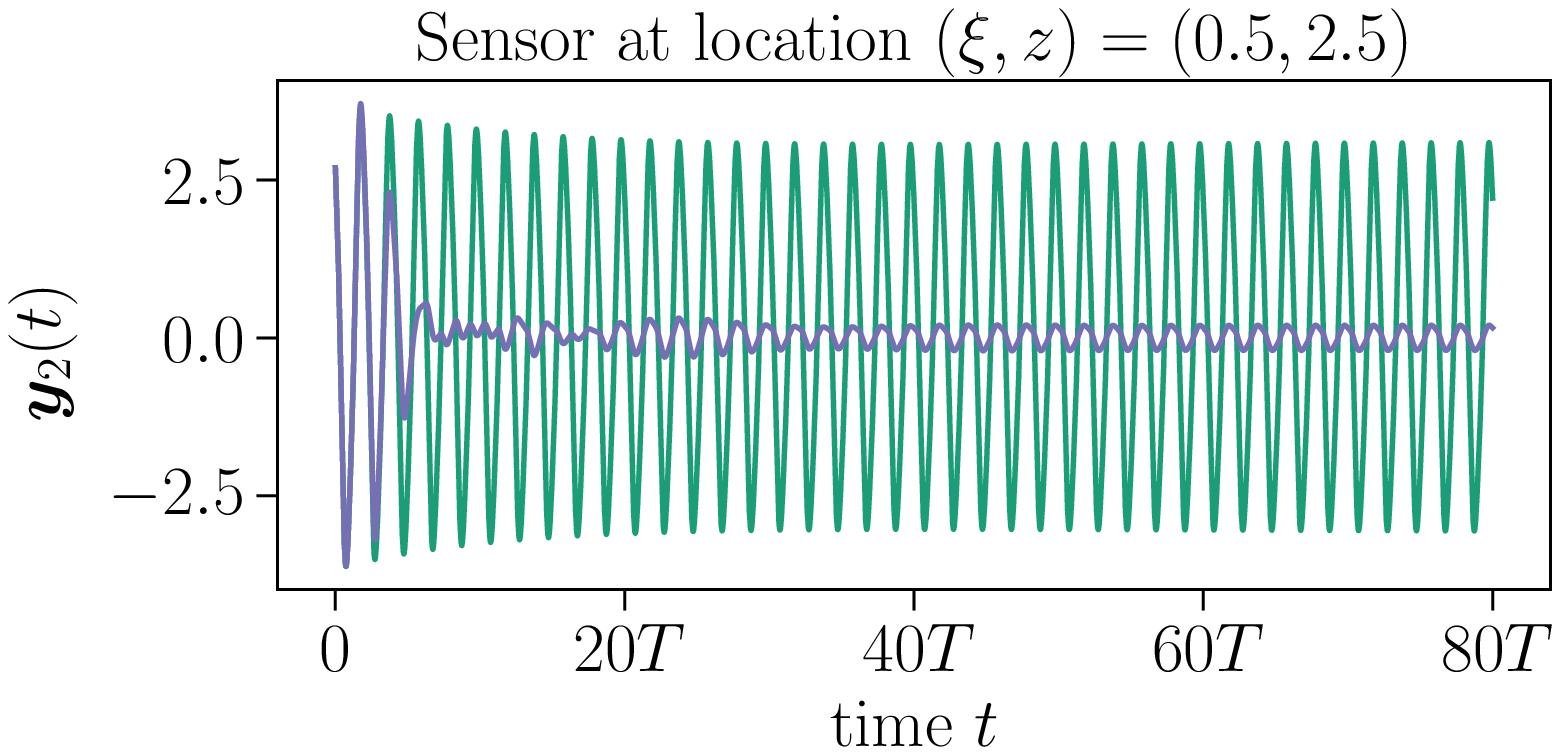}%[tsx/show help lines]
\node at (0.90,0.78) {\small $\textit{(a)}$};
\end{tikzonimage}
\end{minipage}
\begin{minipage}{0.48\textwidth}
\begin{tikzonimage}[trim= 10 50 0 75,clip,width=0.92\textwidth]{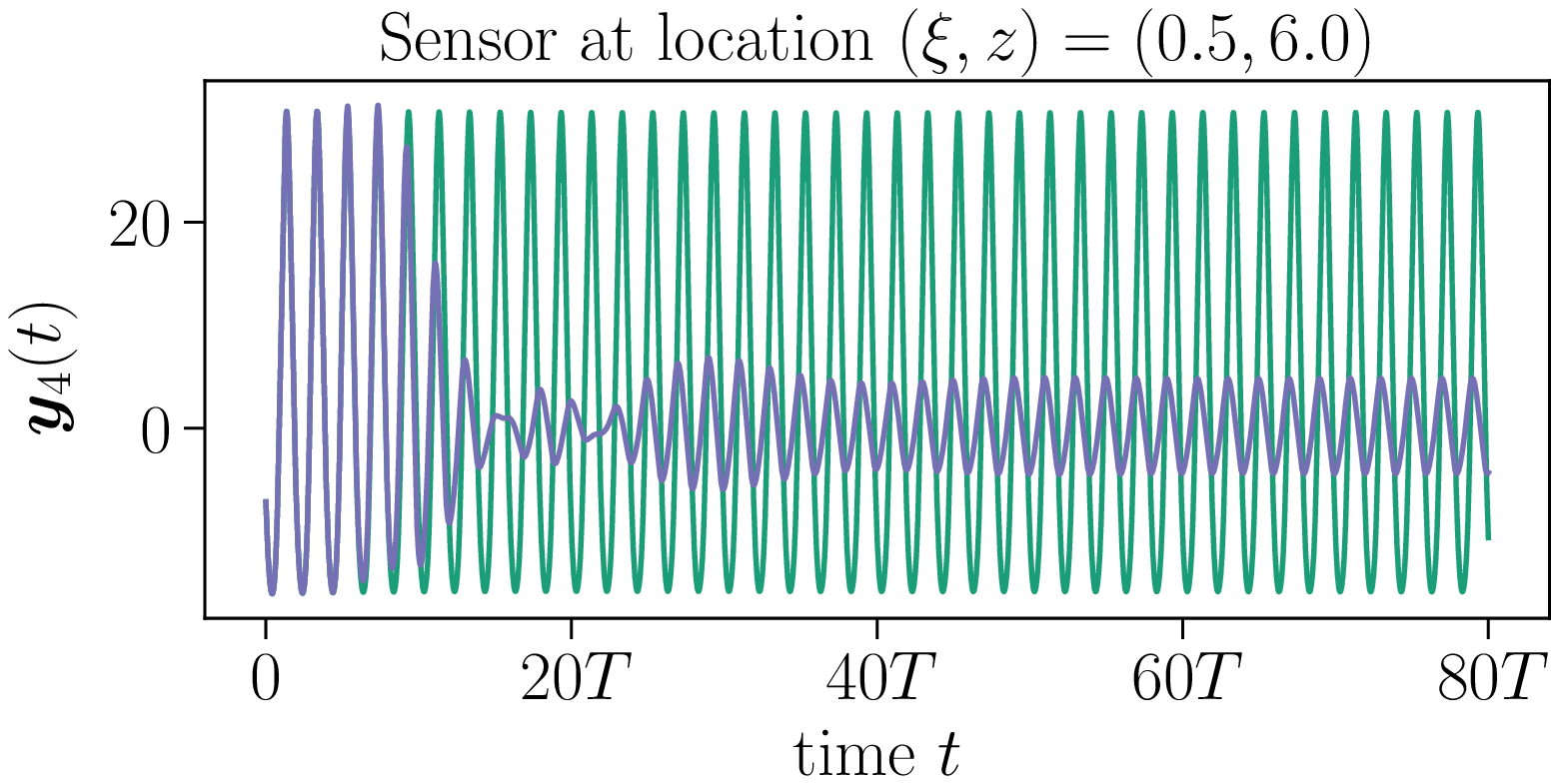}%[tsx/show help lines]
\node at (0.90,0.78) {\small $\textit{(b)}$};
\end{tikzonimage}
\end{minipage}
%% Second row
\begin{minipage}{0.48\textwidth}
\begin{tikzonimage}[trim= 10 50 0 75,clip,width=0.92\textwidth]{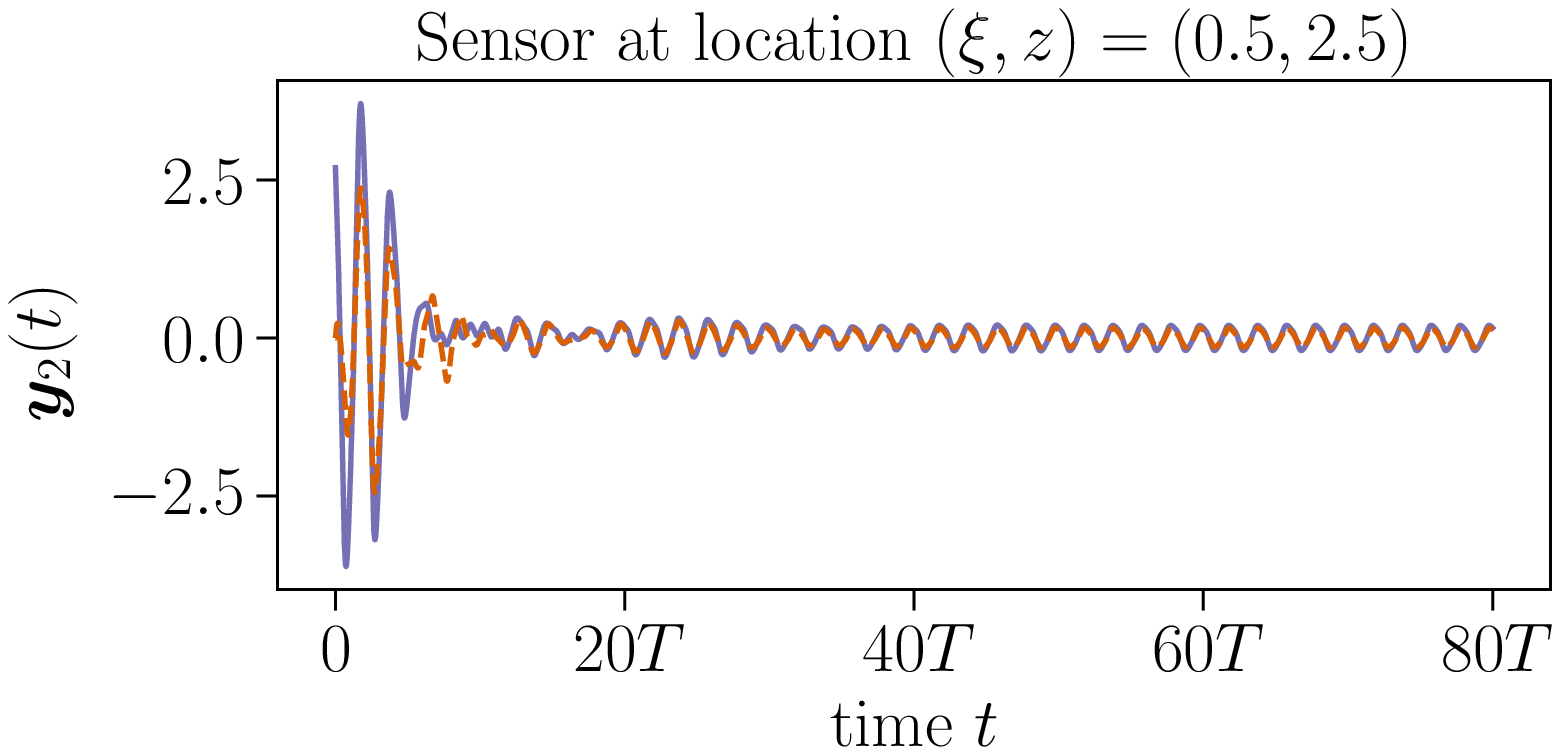}%[tsx/show help lines]
\node at (0.90,0.78) {\small $\textit{(c)}$};
\end{tikzonimage}
\end{minipage}
\begin{minipage}{0.48\textwidth}
\begin{tikzonimage}[trim= 10 50 0 75,clip,width=0.92\textwidth]{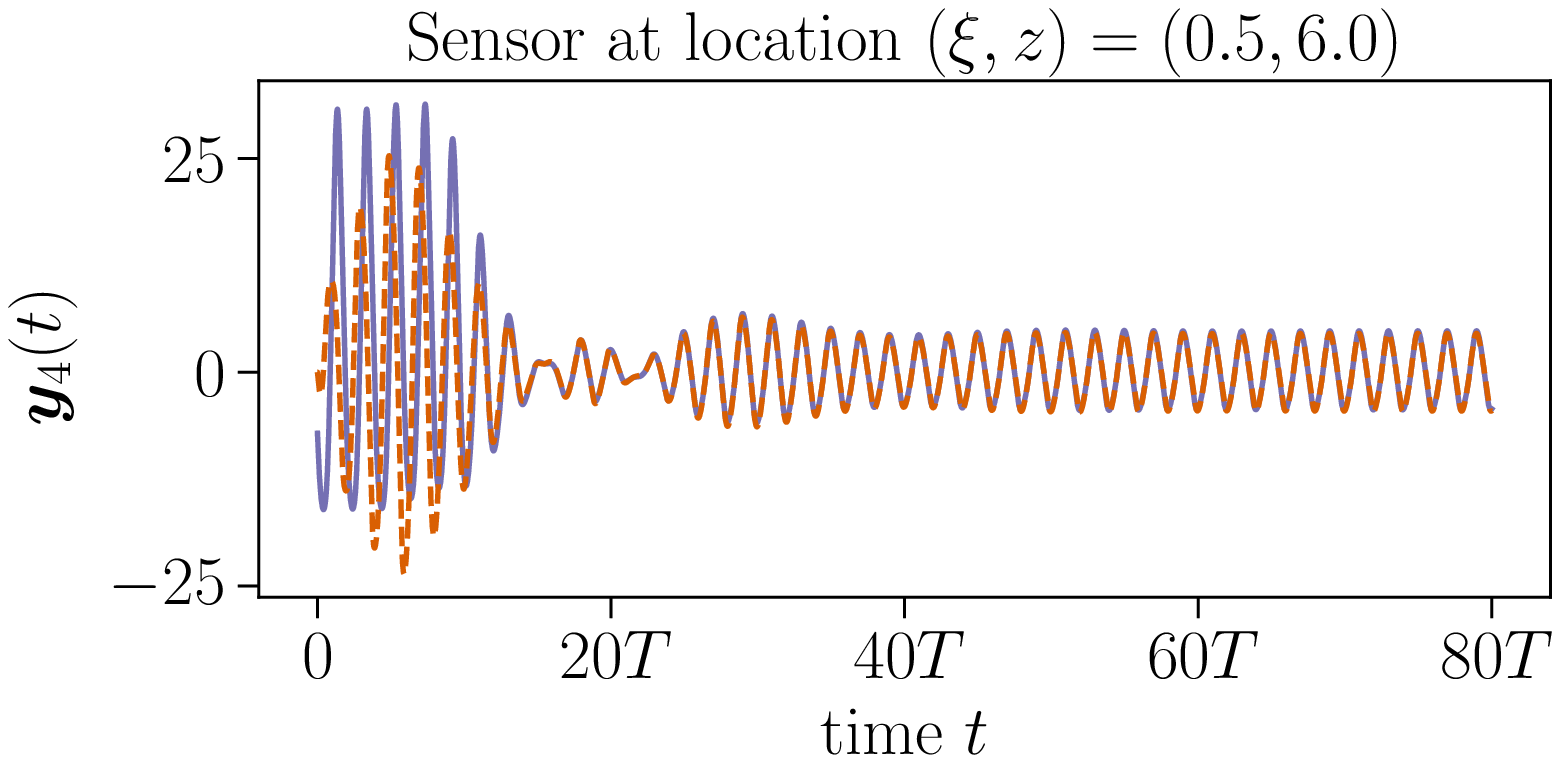}%[tsx/show help lines]
\node at (0.90,0.78) {\small $\textit{(d)}$};
\end{tikzonimage}
\end{minipage}
\caption{$Re = 1250$. In panels \textit{(a)} and \textit{(b)} we show the sensor
  measurement from two of the four sensors from the simulation without control
  (green) and with feedback control (blue).
In panels \textit{(c)} and \textit{(d)} we show the sensor measurement from the
simulation with control (blue) and the predicted sensor measurement from the reduced-order estimator (orange).
}
\label{fig:sensor_output_Re1250}
\end{figure}

\begin{figure}
\centering
\begin{tikzonimage}[trim= 5 50 0 80,clip,width=0.5\textwidth]{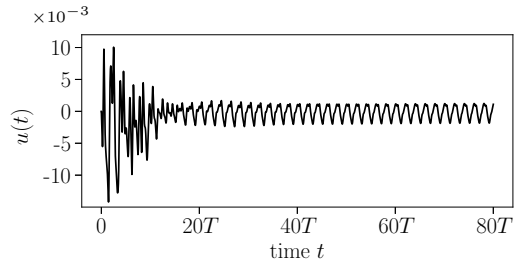}%[tsx/show help lines]
\node at (0.1,1) {\tiny $\times 10^{-3}$};
\end{tikzonimage}
\caption{Control input $u(t)$ for the closed-loop simulation at $Re = 1250$.}
\label{fig:input_Re1250}
\end{figure}

We begin with a brief discussion of figure \ref{fig:sensor_output_Re1250}, where we show the time history of the measurements from sensors 2 and 4 located at $(\xi,z)=(0.5,2.5)$ and $(\xi,z)=(0.5,6)$, respectively.
In panels \textit{(a)} and \textit{(b)} we show the measured velocity of the flow with no control (green) and with observed-based feedback (blue).
As hoped, we observe a significant reduction in the oscillation amplitude at both sensor locations when the feedback is active, meaning that the controller is successfully rejecting the disturbances that we are injecting into the flow.
Panels \textit{(c)} and \textit{(d)} show the measured velocity of the flow with observer-based feedback (blue) and the predicted velocity from the reduced-order observer.
Here, we see that at early times there are some discrepancies due to the fact that the nonlinear simulation is initialized with a non-zero initial condition, while the observer is initialized at zero.
At later times, we see that the predicted output converges to the ground truth output, and this explains the success of the reduced-order controller/observer system in suppressing the oscillatory behavior of the flow.

In figure \ref{fig:snapshots_Re1250} we show representative vorticity snapshots
from the simulations with and without feedback control.
Without control, the vortex rings pair at an axial location around $z\approx 5$.
When the flow is controlled, we see that vortex pairing is significantly
delayed/mitigated, and vortex rings pair further downstream, at around $z\approx 8$.
We conclude by observing that this controller/observer pair was able to modify the flow dynamics with a control input that never exceeds $O(10^{-2})$ and actually remains below $5\times 10^{-3}$ for most of the times (see figure \ref{fig:input_Re1250}).
This is indicative of the fact that the actuator is placed at a location where we have large control authority, and we therefore only require small-amplitude perturbations to modify the flow behavior.
In a more practical setting, this means that this controller requires a low external energy supply, since it should not be energetically expensive to provide a velocity perturbation with magnitude equal to one thousandth of the flow characteristic velocity.

\begin{figure}
\centering
\begin{minipage}{0.48\textwidth}
\begin{tikzonimage}[trim= 5 10 5 20,clip,width=\textwidth]{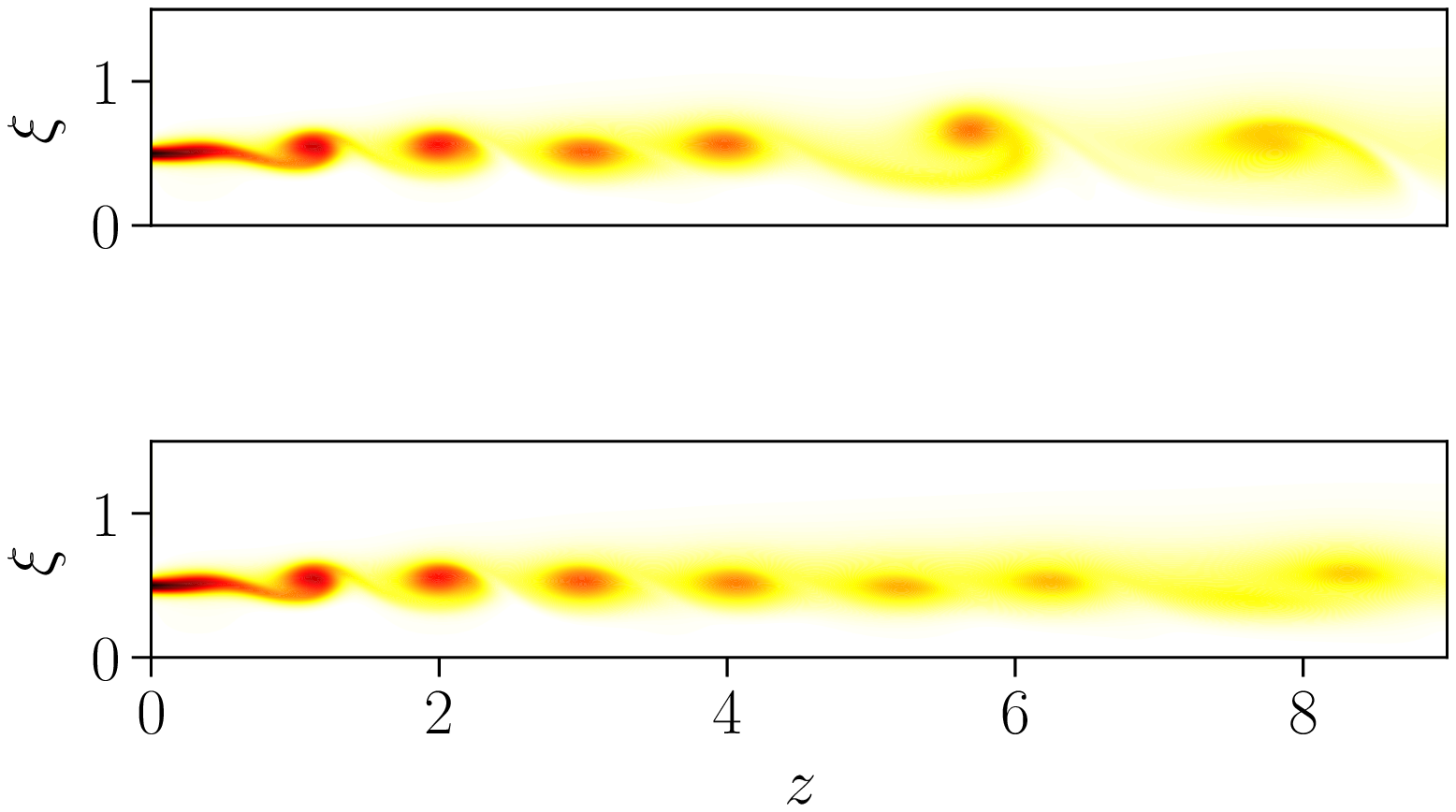}%[tsx/show help lines]
\node at (0.55,1.05) {\small Vorticity at time $t \approx 79 T$};
\node at (0.55,0.92) {\small no control};
\node at (0.55,0.5) {\small with control};
\end{tikzonimage}
\end{minipage}
\begin{minipage}{0.48\textwidth}
\begin{tikzonimage}[trim= 5 10 5 20,clip,width=\textwidth]{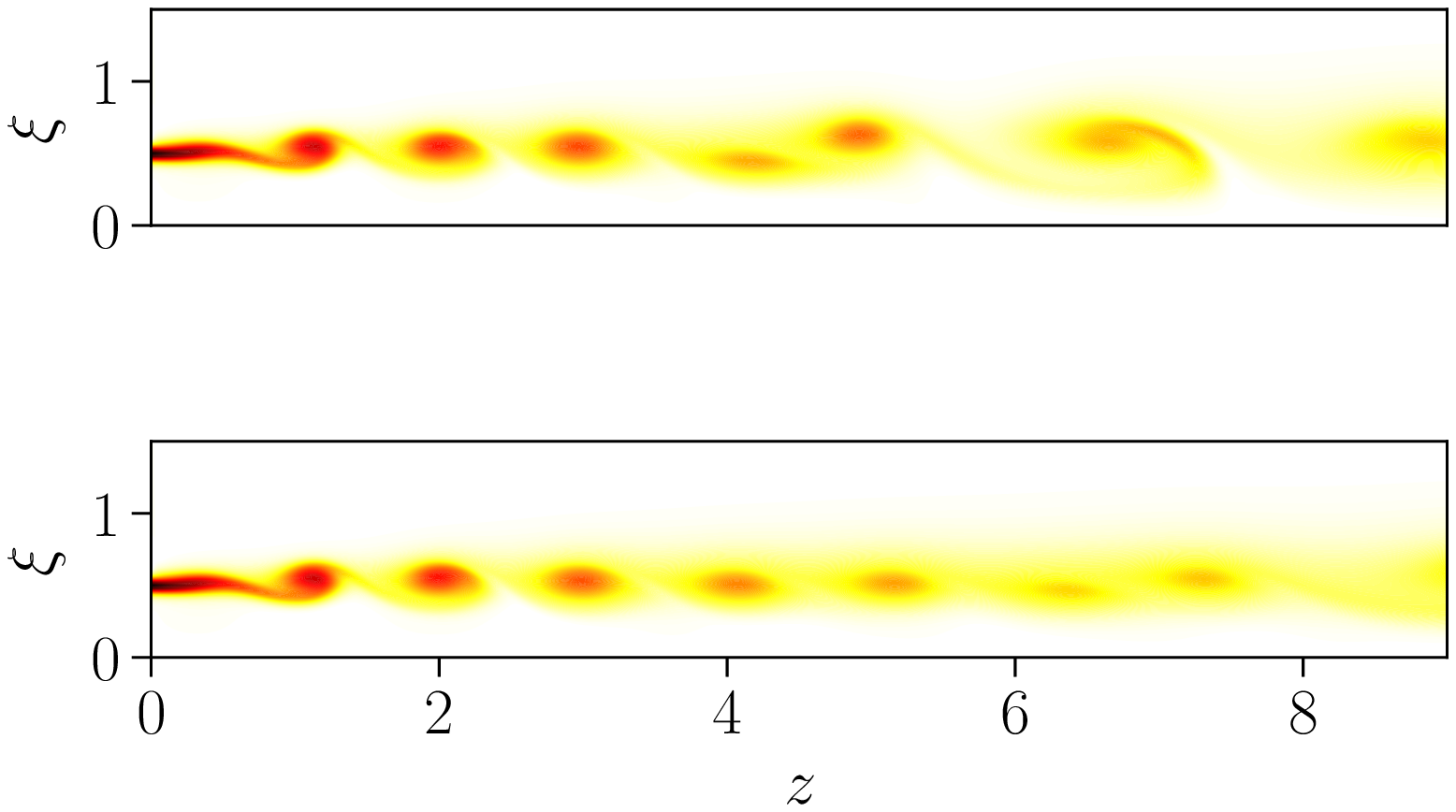}%[tsx/show help lines]
\node at (0.55,1.05) {\small Vorticity at time $t \approx 80 T$};
\node at (0.55,0.92) {\small no control};
\node at (0.55,0.5) {\small with control};
\end{tikzonimage}
\end{minipage}
\caption{$Re = 1250$. Vorticity snapshots at times $t \approx 79 T$ and $t \approx 80 T$ from the
  response with and without feedback control. The colorbar is the same as in
  figure~\ref{fig:bflow}.}
\label{fig:snapshots_Re1250}
\end{figure}

\subsection{Suppressing vortex pairing at $Re = 1500$}

We now consider $Re = 1500$, for which the $T$-periodic base flow is unstable,
and the flow spirals onto a $2T$-periodic limit cycle characterized by pairing
vortex rings.
As before, we compute a ROM of dimension $r=6$ (see \ref{app:roms_Re1500} for details), and we design a controller as well as an observer.
We choose the LQR weight $\mQ_{q}(t)$ as in \eqref{eq:mQ_lqr} with $\alpha$ given in \eqref{eq:alpha_lqr}.
Similarly, the LQE weight is chosen as $\beta = 10$.
As in the previous section, we externally force the flow with the ``most dangerous" forcing profile \eqref{eq:forcing_vw}, except that the magnitude is set to $10^{-4}$.
Since the base flow is unstable, vortex pairing will naturally occur even without the external forcing input.
Here, however, we use this input to emulate the presence of external disturbances that perturb the flow on top of the underlying instability.
The initial condition for the nonlinear full-order simulation is taken to be a state of heavy vortex pairing, while the initial condition for the reduced-order observer is set to zero.
As explained in the previous section, this is a realistic choice based on the fact that we often lack knowledge of the initial state of the system.
We then integrate the nonlinear Navier-Stokes equations in open loop and in closed-loop with the observed-based feedback configuration shown in figure \ref{fig:block_diags}.
Results analogous to those shown in the previous section are shown in figure \ref{fig:sensor_output_Re1500}.

In this figure we see that the controller/observer pair is capable of suppressing the highly oscillatory behavior of the flow at all four sensor locations.
Moreover, we see that the prediction of the measured output provided by the reduced-order estimator agrees well with the ground-truth measurements, especially at long times (see figures \ref{fig:sensor_output_Re1500}c and \ref{fig:sensor_output_Re1500}d).
As in the previous section, the initial discrepancies arise because the observer was initialized with a zero initial condition, while the full-state had a non-zero initial condition.

\begin{figure}
\centering
\begin{minipage}{0.48\textwidth}
\begin{tikzonimage}[trim= 10 50 0 75,clip,width=0.92\textwidth]{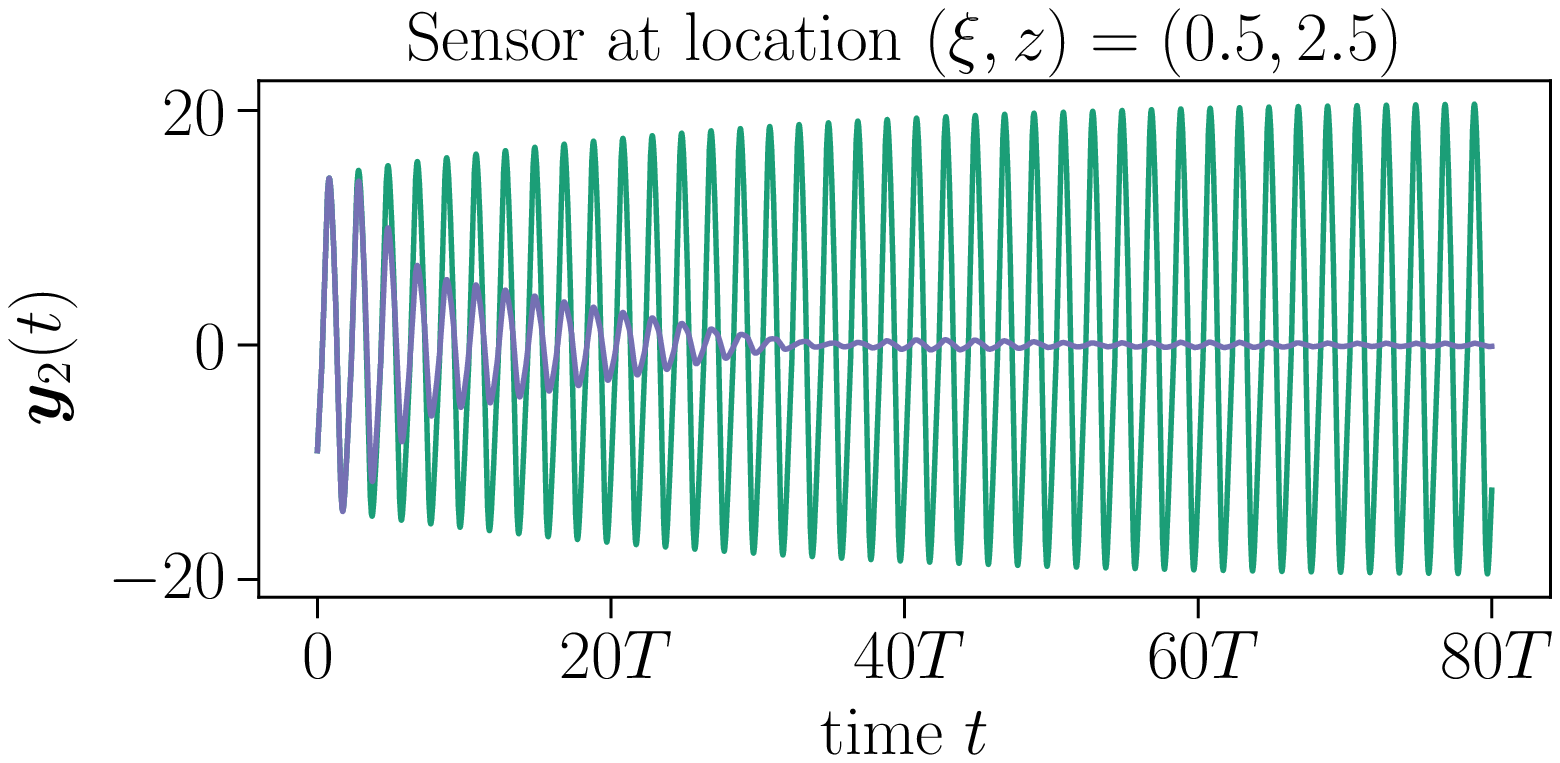}%[tsx/show help lines]
\node at (0.90,0.78) {\small $\textit{(a)}$};
\end{tikzonimage}
\end{minipage}
\begin{minipage}{0.48\textwidth}
\begin{tikzonimage}[trim= 10 50 0 75,clip,width=0.92\textwidth]{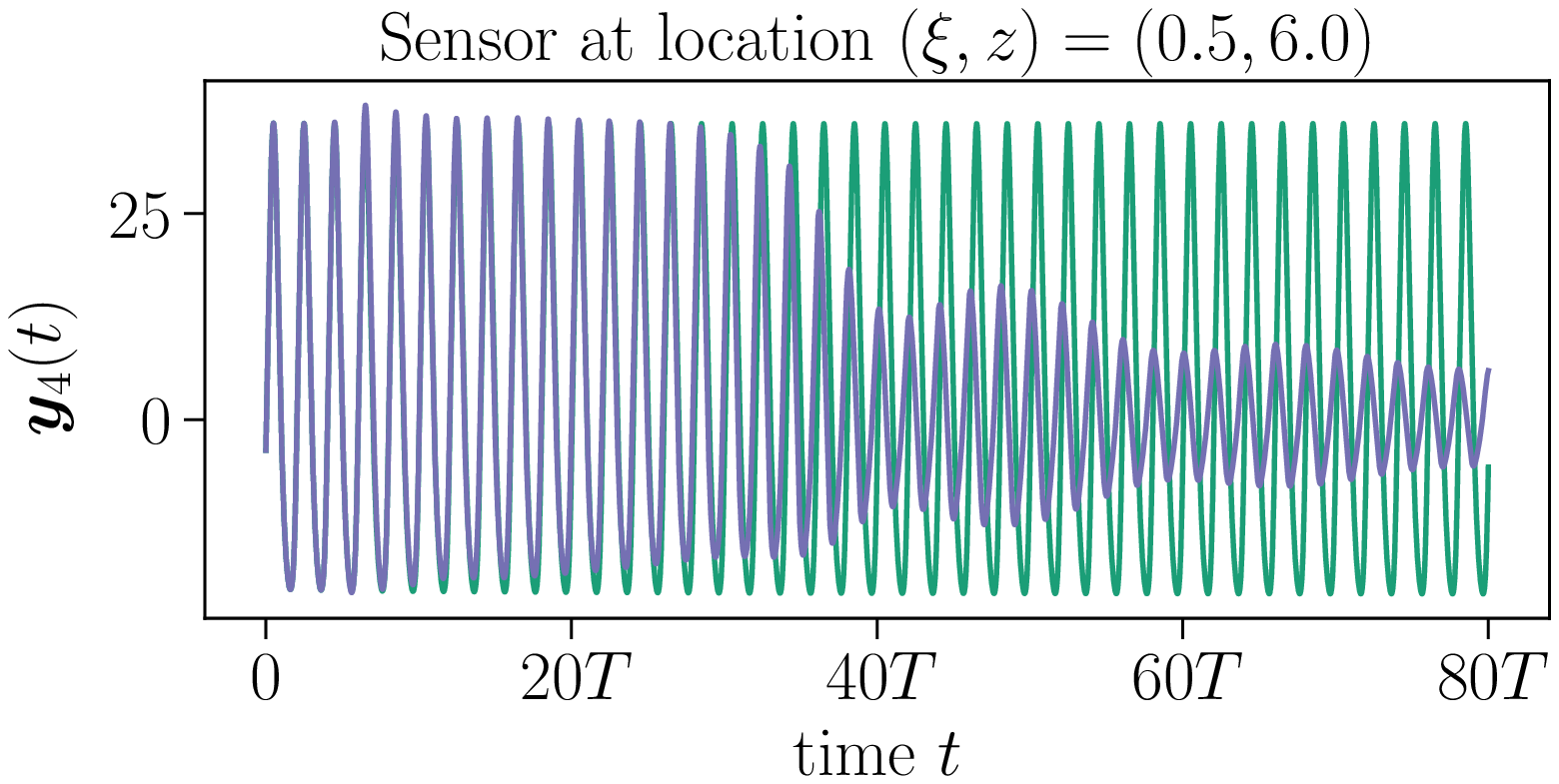}%[tsx/show help lines]
\node at (0.90,0.78) {\small $\textit{(b)}$};
\end{tikzonimage}
\end{minipage}
%% Second row
\begin{minipage}{0.48\textwidth}
\begin{tikzonimage}[trim= 10 50 0 75,clip,width=0.92\textwidth]{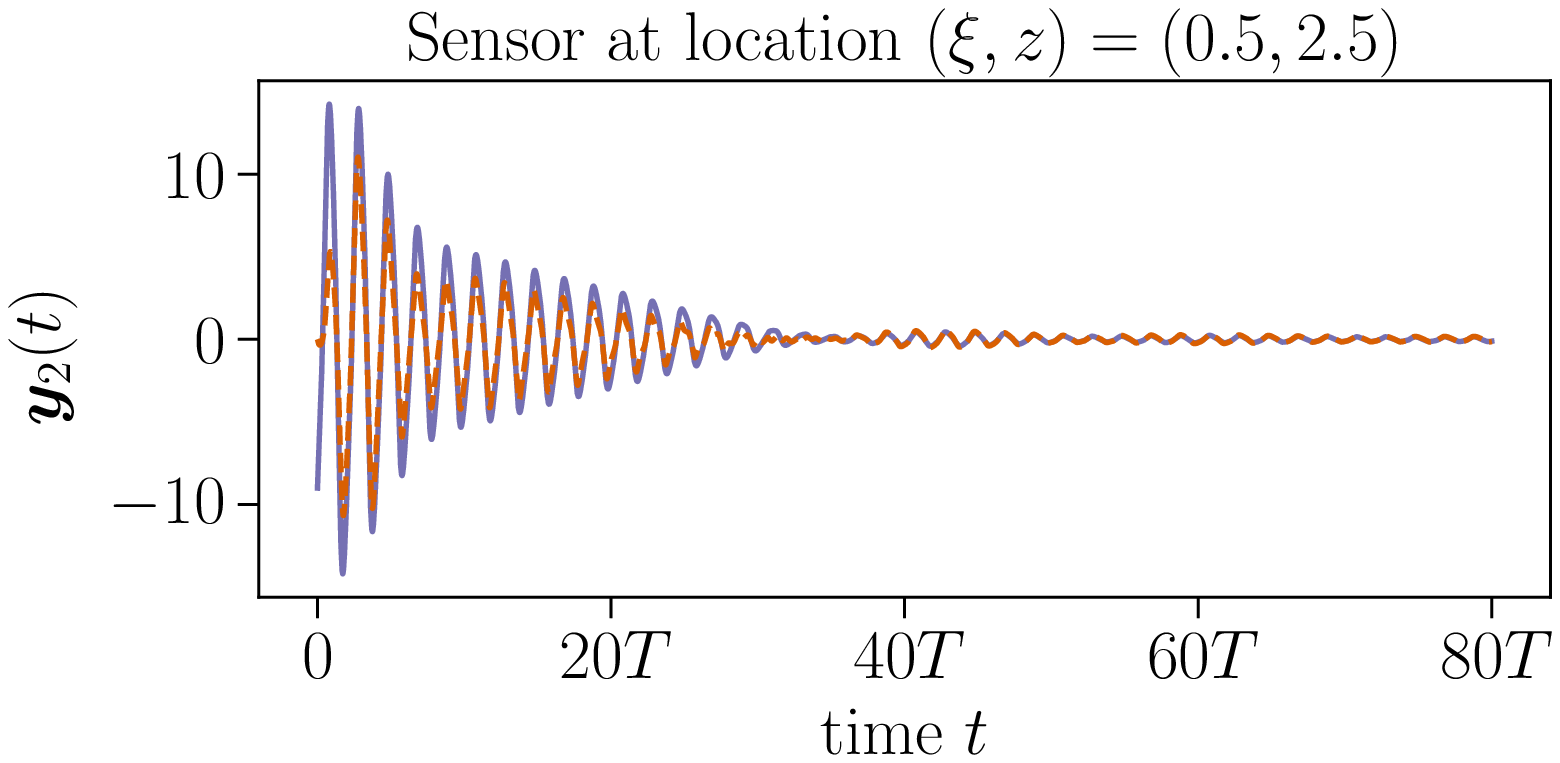}%[tsx/show help lines]
\node at (0.90,0.78) {\small $\textit{(c)}$};
\end{tikzonimage}
\end{minipage}
\begin{minipage}{0.48\textwidth}
\begin{tikzonimage}[trim= 10 50 0 75,clip,width=0.92\textwidth]{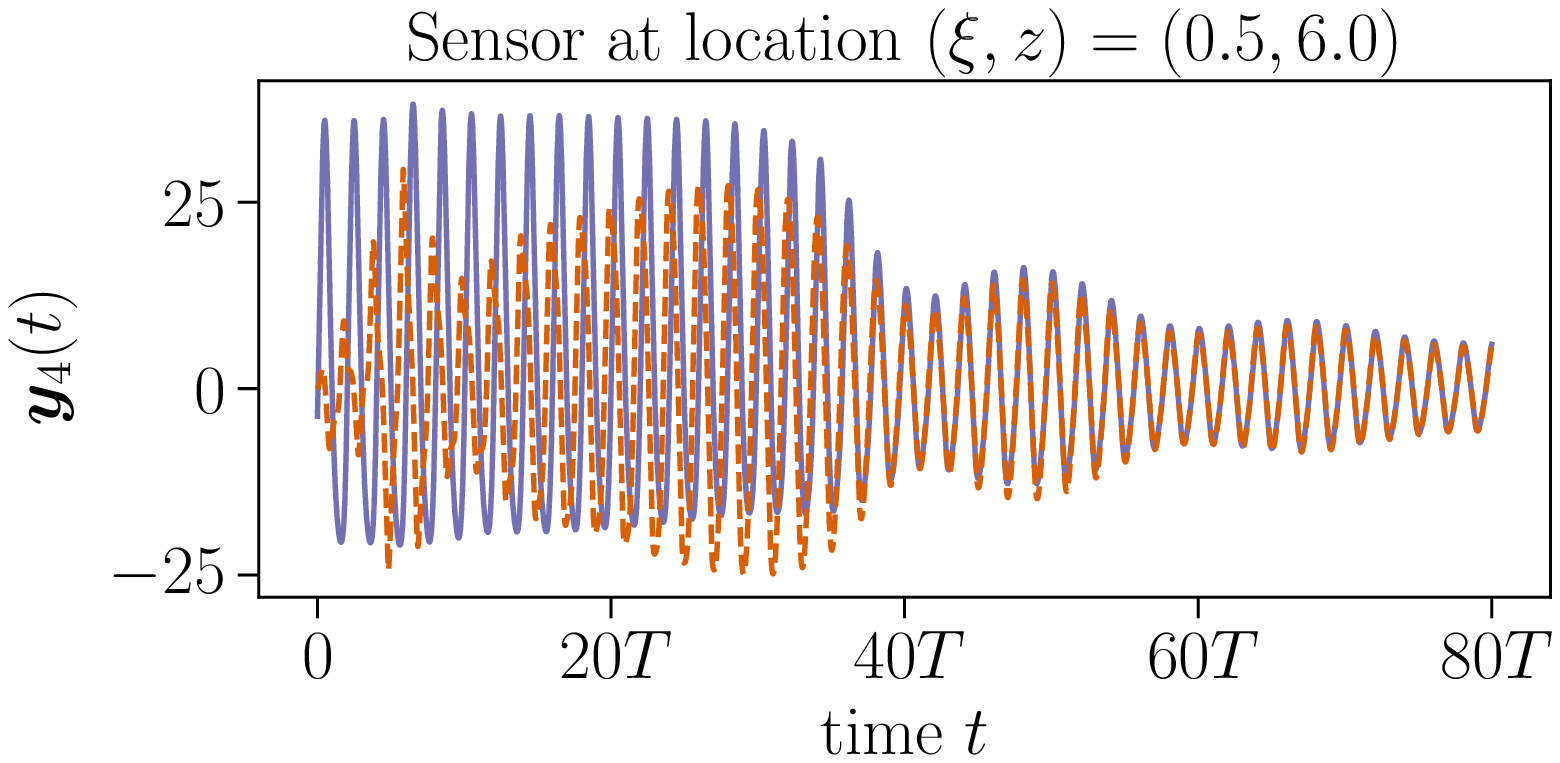}%[tsx/show help lines]
\node at (0.90,0.78) {\small $\textit{(d)}$};
\end{tikzonimage}
\end{minipage}
\caption{Analog of figure \ref{fig:sensor_output_Re1250} for $Re = 1500$.
}
\label{fig:sensor_output_Re1500}
\end{figure}

\begin{figure}
\centering
\begin{tikzonimage}[trim= 5 50 0 80,clip,width=0.5\textwidth]{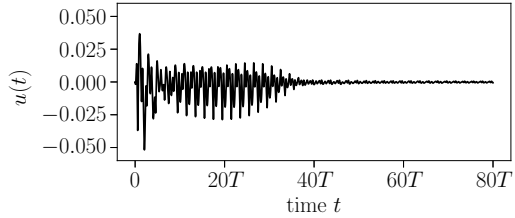}%[tsx/show help lines]
\end{tikzonimage}
\caption{Control input $u(t)$ for the closed-loop simulation at $Re = 1500$.}
\label{fig:input_Re1500}
\end{figure}

Representative snapshots of the simulations with and without feeback control are shown in figure~\ref{fig:snapshots_Re1500} at two different time instances.
Remarkably, we see that while the uncontrolled flow exhibits strong vortex
pairing (much stronger than the $Re = 1250$), the flow with feedback control does not.
This means that the controller/observer pair successfully managed to suppress the underlying instability and to reject the most dangerous disturbance that we are injecting into the flow.
Finally, we see from figure~\ref{fig:input_Re1500} that the feedback law generates a control input with magnitude $O(10^{-2})$ at early times, but once the upstream oscillations (see, e.g., figure~\ref{fig:sensor_output_Re1500}a) have been suppressed, the required control input drops by an order of magnitude.
We would like to point out that even at early times, the demanded control input is significantly lower in magnitude than the flow characteristic velocity.

\begin{figure}
\centering
\begin{minipage}{0.48\textwidth}
\begin{tikzonimage}[trim= 5 10 5 20,clip,width=\textwidth]{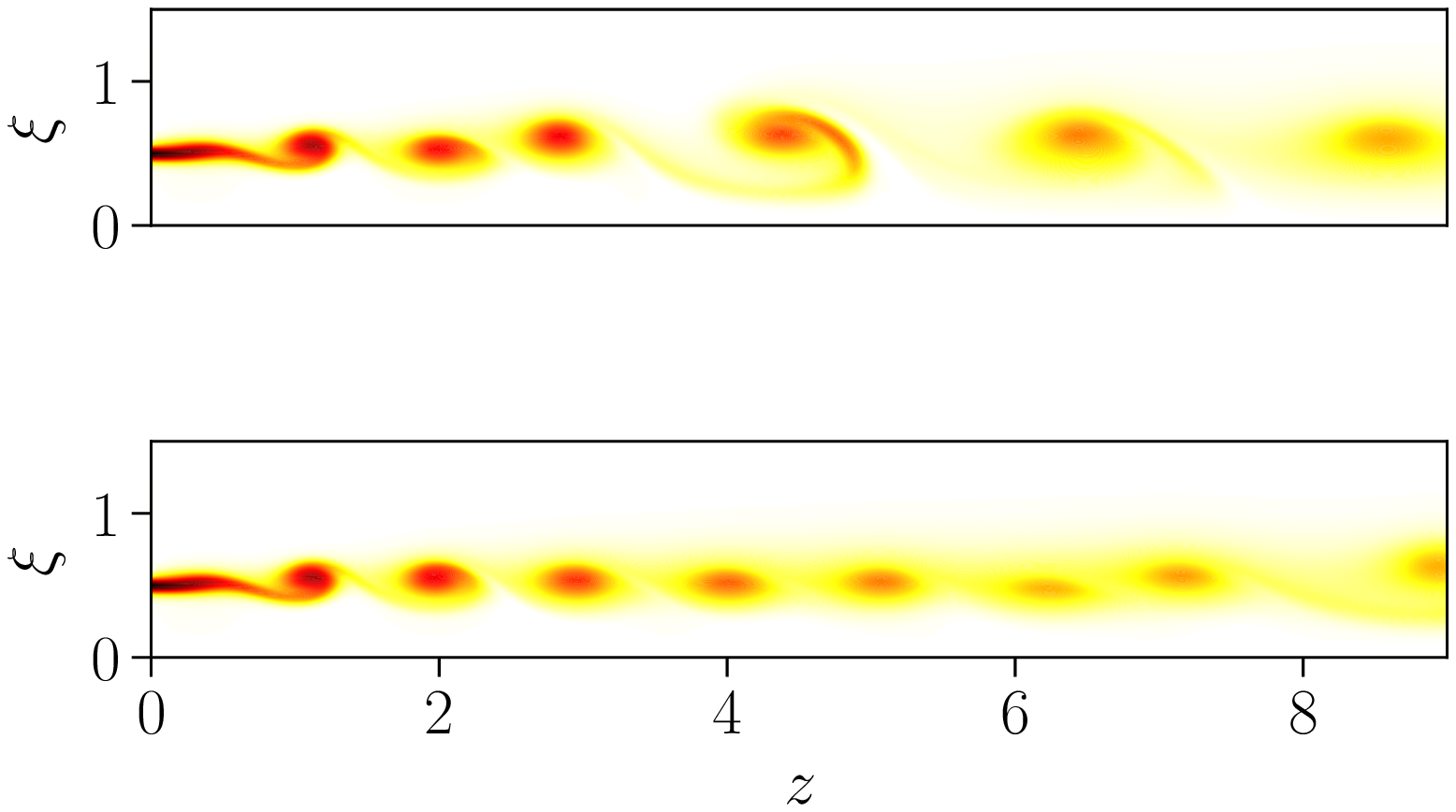}%[tsx/show help lines]
\node at (0.55,1.05) {\small Vorticity at time $t \approx 79 T$};
\node at (0.55,0.92) {\small no control};
\node at (0.55,0.5) {\small with control};
\end{tikzonimage}
\end{minipage}
\begin{minipage}{0.48\textwidth}
\begin{tikzonimage}[trim= 5 10 5 20,clip,width=\textwidth]{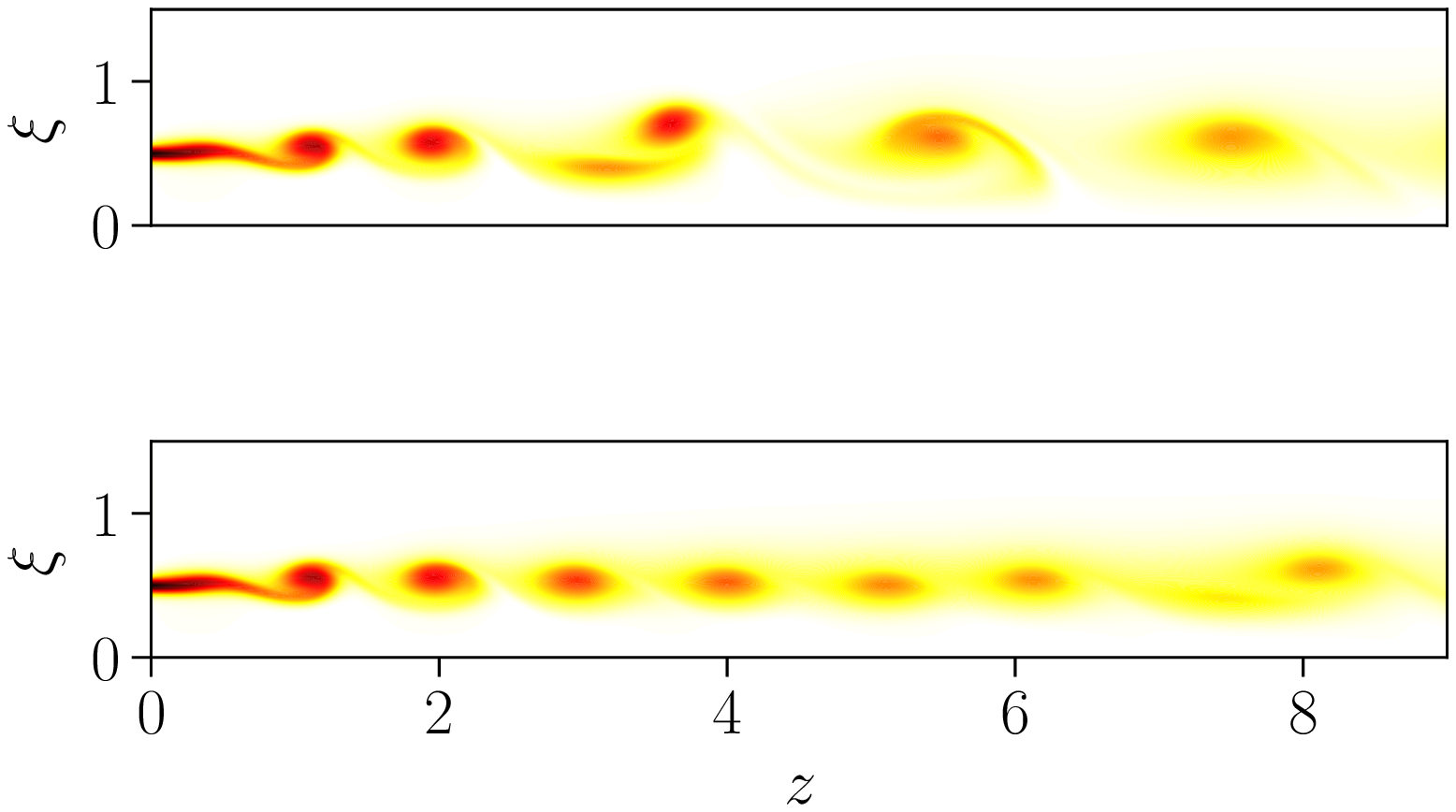}%[tsx/show help lines]
\node at (0.55,1.05) {\small Vorticity at time $t \approx 80 T$};
\node at (0.55,0.92) {\small no control};
\node at (0.55,0.5) {\small with control};
\end{tikzonimage}
\end{minipage}
\caption{Analog of figure \ref{fig:snapshots_Re1250} at $Re = 1500$.}
\label{fig:snapshots_Re1500}
\end{figure}

\section{Conclusion}

In this paper we perform continuous-time balanced truncation for time-periodic systems using the frequency-domain representation of the reachability and observability Gramians.
We have seen that these frequential Gramians are well-defined both when the underlying system is stable and when it is unstable.
Moreover, when have seen that computing the Gramians using their frequency-domain representation can offer computational benefits, especially if the dynamics exhibit slowly-decaying transients.
We demonstrated this approach on a periodically-forced axisymmetric jet at
Reynolds numbers $Re = 1250$ and $Re = 1500$ (corresponding to stable and
unstable equilibria), and in both cases we used the balanced model to design reduced-order controllers and observers to suppress the vortex pairing phenomenon.

\section*{Acknowledgements}
This work was supported by the Air Force Office of Scientific Research, award
FA9550-19-1-0005.

%% The Appendices part is started with the command \appendix;
%% appendix sections are then done as normal sections
\appendix

\section{Proofs}
\label{app:proofs}

\subsection{Proof of Proposition \ref{prop:grams_as_outer_product}}
\label{app_subsec:proof_gram_as_outer_products}

We prove the proposition for $\mP(t)$, since the result for $\mQ(t)$ follows analogously.
Let us define the quantity
\begin{equation}
\label{eq:hat_mP_k}
    \hat\mP_k = \sum_{\substack{m, l\in\mathbb{Z}\\m=k+l}}\hat\mP_{m-l} \coloneqq \sum_{\substack{m, l\in\mathbb{Z}\\m=k+l}}\frac{1}{2\pi}\int_{-\infty}^{\infty}\mZ_m(\gamma)\mZ_l(\gamma)^{*}\mathrm{d}\gamma.
\end{equation}
Given the definition of $\mZ_m(\gamma)$ in \eqref{eq:mZ} and using Proposition \ref{prop:unstable_lyap_eqtn}, it follows that $\hat\mP_{m-l}$ satisfies the algebraic Sylvester equation below
\begin{equation}
    \left(-i m \omega\mI + \mJ\right) \hat\mP_{m-l} +  \hat\mP_{m-l}\left(i l \omega\mI+ \mJ^*\right) + \mathcal{P}_s{\widetilde\mB}_{m} {\widetilde\mB}_l^*\mathcal{P}_s - \mathcal{P}_u{\widetilde\mB}_{m} {\widetilde\mB}_l^*\mathcal{P}_u = 0.
\end{equation}
Summing over $m$ and $l$ as in \eqref{eq:hat_mP_k}, writing $m = k+l$, and using the linearity of the Sylvester equation, one can see that $\hat\mP_k$ satisfies \eqref{eq:sylv_fwd_unstable}.
In other words, $\hat\mP_k$ is the $k$th Fourier coefficient of $\mP(t)$.
Writing $\mP(t)$ as
\begin{equation}
    \mP(t) = \sum_{k\in\mathbb{Z}}\hat\mP_k e^{i k \omega t} = \sum_{k\in\mathbb{Z}}\sum_{\substack{m,l\in\mathbb{Z}\\m=k+l}}\hat\mP_{m-l} e^{i (m-l) \omega t} = \sum_{m,l\in\mathbb{Z}}\hat\mP_{m-l}e^{i(m-l)\omega t}
\end{equation}
concludes the proof.

\subsection{Proof of Proposition \ref{prop:gram_factors_hr}}
\label{app_subsec:proof_gram_factors_hr}

We need to show that
\begin{equation}
\label{eq:mH_equiv}
    \mH_{k,j}(\gamma) = \sum_{m\in\mathbb{Z}}\mV_{k-m}\left(i\gamma\mI -(-i m\omega\mI +\mJ)\right)^{-1}\mW^*_{j-m}.
\end{equation}
Let us start from \eqref{eq:sys_ltp_lti} and write $\vz(t)$ and $\vu(t)$ as EMP signals (as in \eqref{eq:emp}), to obtain
\begin{equation}
    \vz_{m+\gamma} = \left(i\gamma\mI -(-i m\omega\mI +\mJ)\right)^{-1} \sum_{j,l\in\mathbb{Z}}\mW^*_{j-m}\mB_{j-l}\vu_{l+\gamma}.
\end{equation}
Using $\vx(t) = \mV(t)\vz(t)$, the coefficient $\vx_{k+\gamma}$ is given by
\begin{equation}
\label{eq:vxz}
    \sum_{m\in\mathbb{Z}}\mV_{k-m}\vz_{m+\gamma} = \sum_{m,j,l\in\mathbb{Z}}\mV_{k-m}\left(i\gamma\mI -(-i m\omega\mI +\mJ)\right)^{-1} \mW^*_{j-m}\mB_{j-l}\vu_{l+\gamma}.
\end{equation}
Comparing \eqref{eq:vx} and \eqref{eq:vxz} shows that \eqref{eq:mH_equiv} indeed holds, and this concludes the proof.

\subsection{Proof of Proposition \ref{prop:alpha_gamma}}
\label{app_subsec:proof_alpha_gamma}

The existence of such integer $m$ is immediate.
Let us consider the quantity $\mZ_{\vx,k+\alpha}e^{i k\omega t} \coloneqq \sum_{j\in\mathbb{Z}}\mH_{k,j}(\alpha)\mB_j e^{i k\omega t}$, which, from the definition of $\mH$, satisfies
\begin{equation}
\label{eq:resp_alpha}
    i(\alpha + k\omega)\mZ_{\vx,k+\alpha}e^{i k\omega t} = \sum_{l\in\mathbb{Z}}\mA_{k-l}\mZ_{\vx,l+\alpha}e^{i k\omega t}  + \mB_k e^{i k\omega t}.
\end{equation}
Substituting $\alpha = \gamma + m\omega$ and manipulating the indices inside the sum, we obtain
\begin{equation}
\label{eq:resp_gam}
    i(\gamma + (k+m)\omega)\mZ_{\vx,(k+m)+\gamma} e^{i k\omega t} = \sum_{l\in\mathbb{Z}}\mA_{(k+m)-l}\mZ^{(j)}_{\vx,l+\gamma}e^{i k\omega t}+ \mB_k e^{i k\omega t}.
\end{equation}
Changing variables according to $n = k +m$, we have
\begin{equation}
    i(\gamma + n\omega)\mZ_{\vx,n+\gamma} e^{i (n-m)\omega t}= \sum_{l\in\mathbb{Z}}\mA_{n-l}\mZ_{\vx,l+\gamma}e^{i (n-m)\omega t} + \mB_{n-m}e^{i (n-m)\omega t}.
\end{equation}
Thus $\mZ_{\vx,k+\alpha} e^{i k \omega t} = \sum_{j\in\mathbb{Z}} \mH_{k,j}(\gamma)\mB_{j-m}e^{i(k-m)\omega t}$, and this concludes the proof.

\section{Reduced-order models at $Re = 1250$}
\label{app:roms_Re1250}

In this section we study the performance of different reduced-order models (ROMs) as a function of the model size.
Throughout, the Reynolds number $Re = 1250$, which gives us a linearly-stable periodic base flow with period $T$.

We begin by computing the Hankel singular values $\sigma_i(t)$ of the product $\mY(t)^*\mX(t)$ (see Algorithm \ref{alg:compute_balancing}).
From these singular values, we can compute the left-over variance, defined as
\begin{equation}
\label{eq:leftover_var}
    \lambda_i(t) = 1-\frac{\sum_{j=1}^{i}\sigma_j(t)^2}{\sum_{j=1}^{N}\sigma_j(t)^2}.
\end{equation}
This quantity is shown in figure \ref{fig:svals_Re1250}, and we see that the input-output dynamics of the jet flow at $Re = 1250$ are very low rank since the first few Hankel singular values capture the greatest majority of the variance.

\begin{figure}
\centering
\begin{tikzonimage}[trim= 35 10 70 0,clip,width=0.50\textwidth]{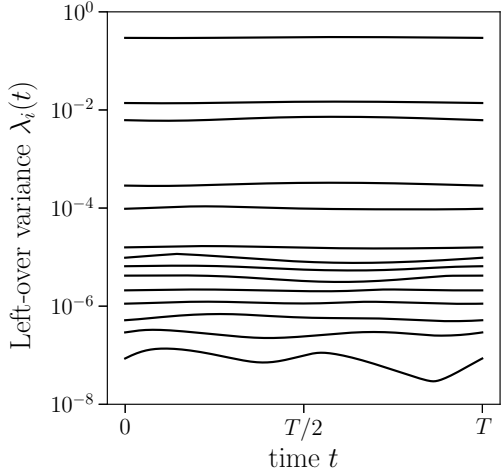}%[tsx/show help lines]
\end{tikzonimage}
\caption{Left-over variance \eqref{eq:leftover_var}, computed from the Hankel singular values of $\mY(t)^*\mX(t)$ at $Re = 1250$.}
\label{fig:svals_Re1250}
\end{figure}

Moving forward, we select ROM sizes $r = 2$, $r = 4$ and $r = 6$, and we study the performance of these ROMs.
In particular, we wish to see how well they can predict the output $\vy(t)$ in response to (linear) impulses $\mB\vu(t) = \mB\delta(t-\tau)$.
In other words, we compare the ROM to the ground truth obtained from numerical integration of the linearized Navier-Stokes equations.
The outputs from an impulse at time $\tau = 0$ is shown in figure \ref{fig:linear_sensor_output_Re1250}.
Here we see that as we increase the ROM dimension (i.e., as we capture more of the variance), the predictive capabilities of the ROM improve.
In particular, even at $r=4$, the ROM is capable of correctly predicting the amplitude and phase of the response.
With $r=6$, we further improve on the early-time prediction.
For completeness, it is worth mentioning that the ROMs have similar performance also for impulses at times $\tau \neq 0$ (although we do not show the corresponding plots here).

\begin{figure}
\centering
\begin{minipage}{0.48\textwidth}
\begin{tikzonimage}[trim= 5 0 0 0,clip,width=0.85\textwidth]{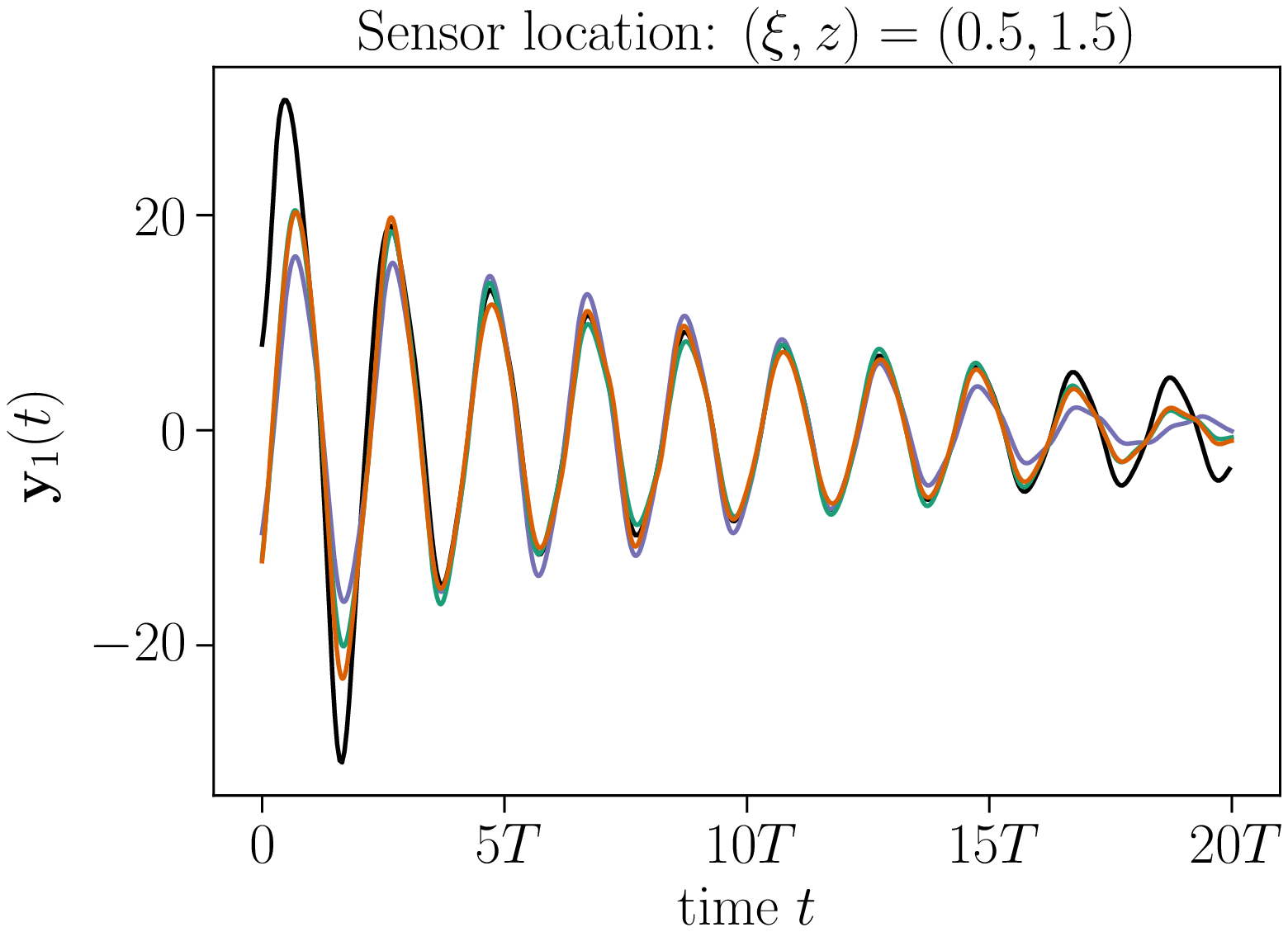}%[tsx/show help lines]
\node at (0.90,0.8) {\small $\textit{(a)}$};
\end{tikzonimage}
\end{minipage}
\begin{minipage}{0.48\textwidth}
\begin{tikzonimage}[trim= 5 0 0 0,clip,width=0.85\textwidth]{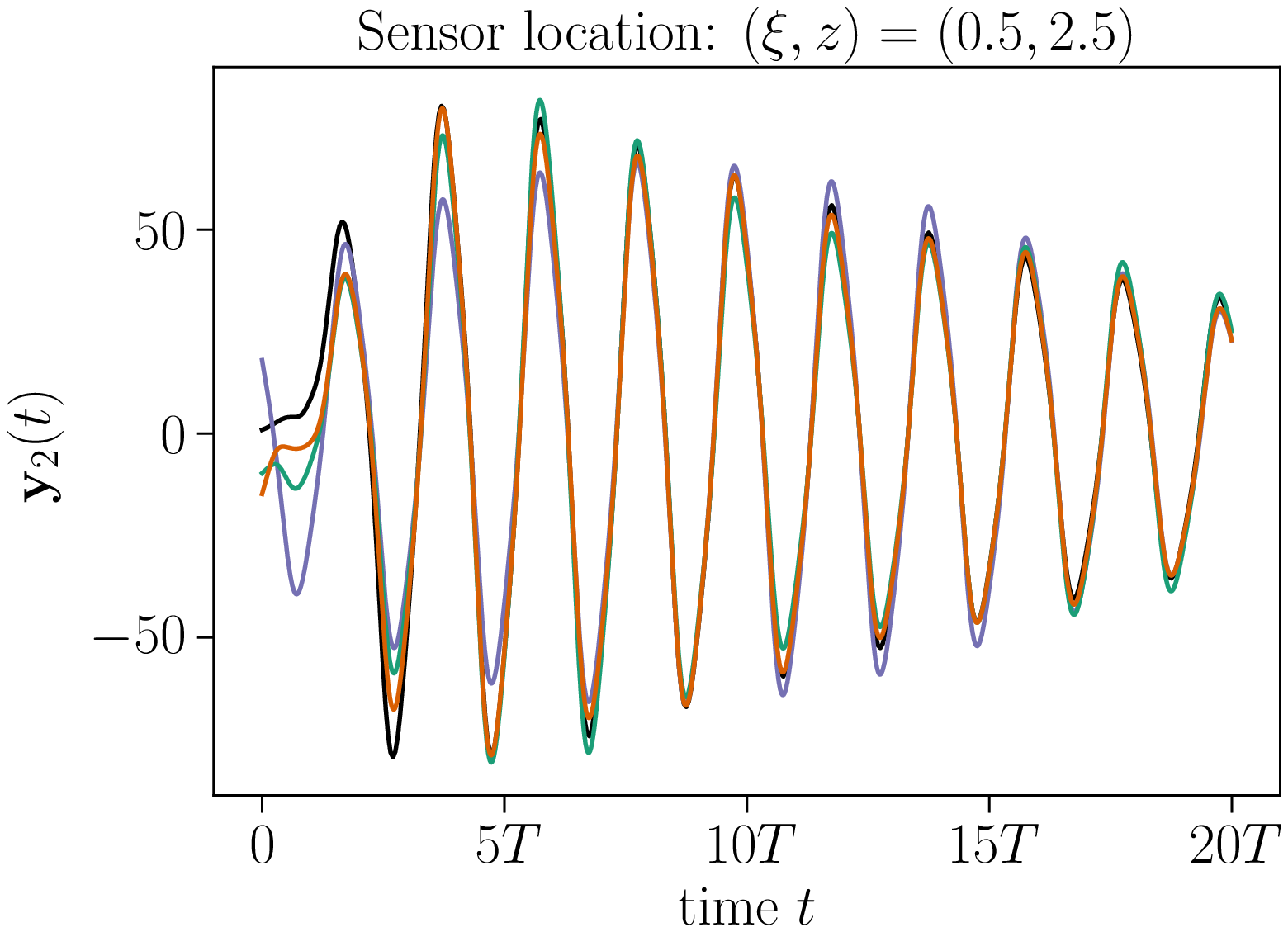}%[tsx/show help lines]
\node at (0.90,0.8) {\small $\textit{(b)}$};
\end{tikzonimage}
\end{minipage}
%% Second row
\begin{minipage}{0.48\textwidth}
\begin{tikzonimage}[trim= 5 0 0 0,clip,width=0.85\textwidth]{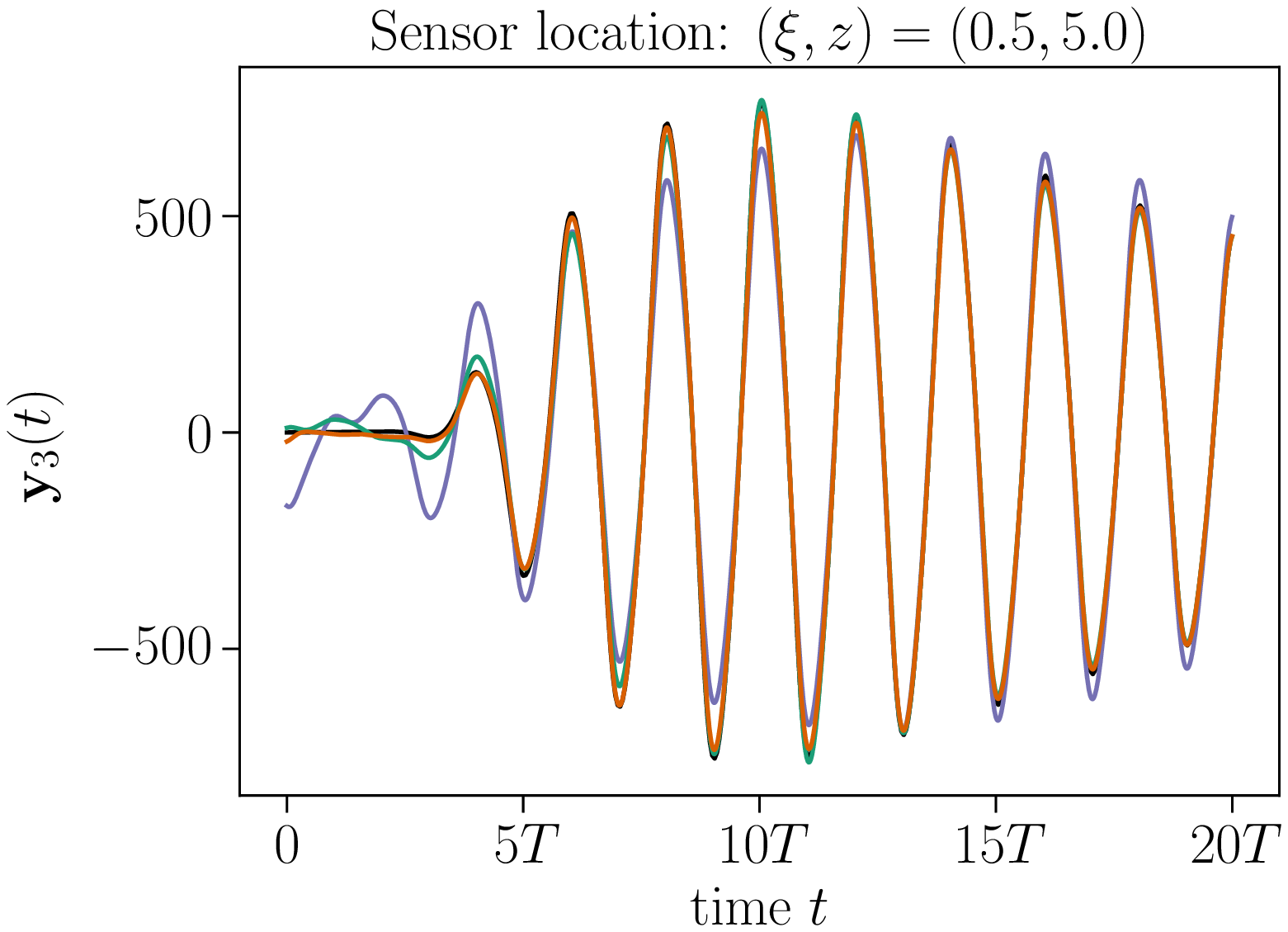}%[tsx/show help lines]
\node at (0.90,0.8) {\small $\textit{(c)}$};
\end{tikzonimage}
\end{minipage}
\begin{minipage}{0.48\textwidth}
\begin{tikzonimage}[trim= 5 0 0 0,clip,width=0.85\textwidth]{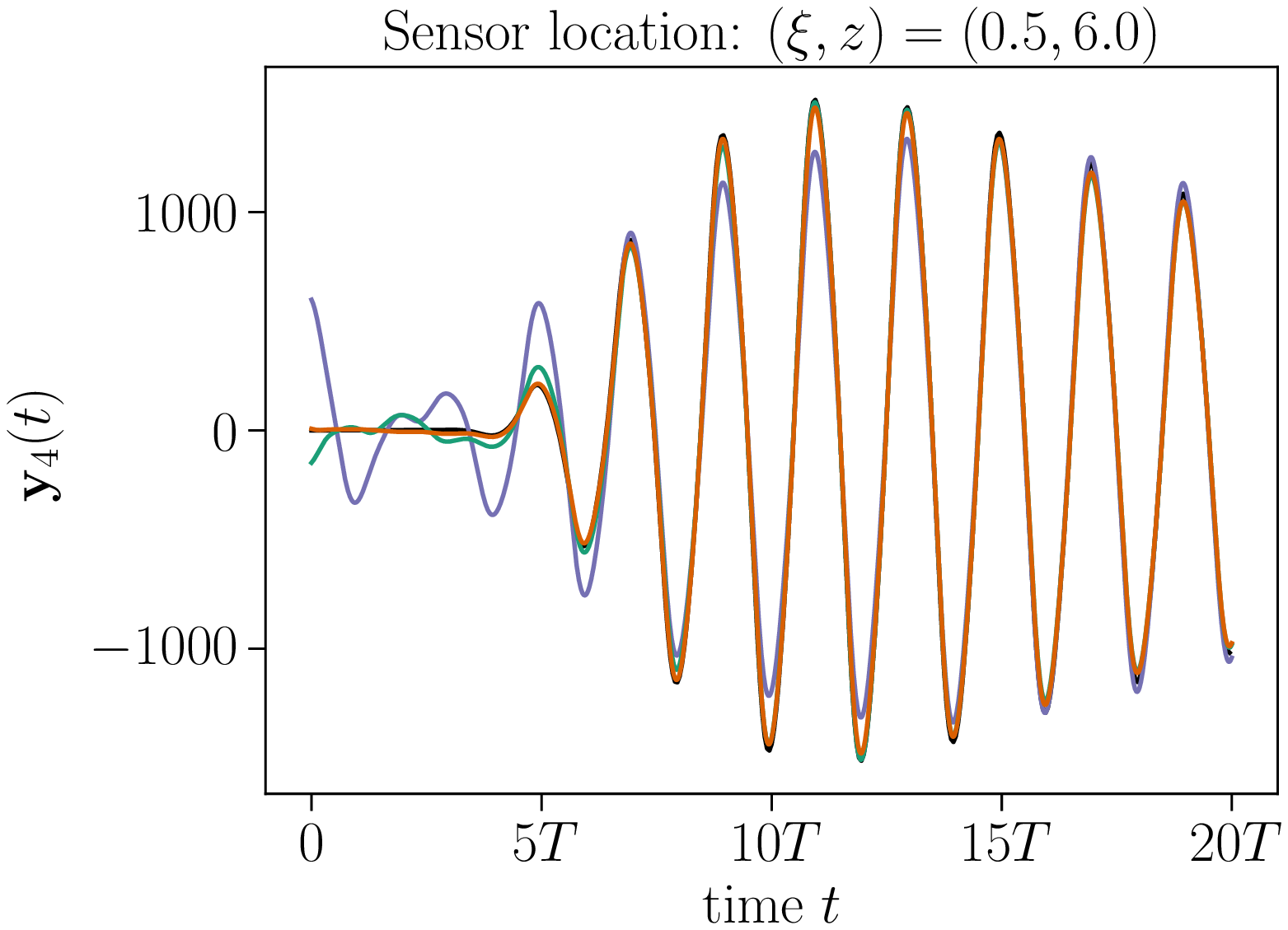}%[tsx/show help lines]
\node at (0.90,0.8) {\small $\textit{(d)}$};
\end{tikzonimage}
\end{minipage}
\caption{Measured output $\vy_i(t)$ at all four sensor location from the linear impulse response at $Re = 1250$. Ground truth (black), ROM with size $r =2$ (purple), ROM with size $r=4$ (green), ROM with size $r = 6$ (orange).}
\label{fig:linear_sensor_output_Re1250}
\end{figure}

\section{Reduced-order models at $Re = 1500$}
\label{app:roms_Re1500}

Here, we perform the same analysis as in \ref{app:roms_Re1250}, except that we consider $Re = 1500$.
Recall that for this configuration the base flow is linearly unstable.
The left-over variance is shown in figure \ref{fig:svals_Re1500}, while the measured output from a linear impulse response at time $\tau = 0$ is shown in figure \ref{fig:linear_sensor_output_Re1500}. In the latter, we see that the ROMs of sizes $r=4$ and $r=6$ correctly predict the amplitude and phase of the measured outputs, as well as the linear growth rate due to the instability in the underlying base flow.

\begin{figure}
\centering
\begin{tikzonimage}[trim= 35 10 70 0,clip,width=0.50\textwidth]{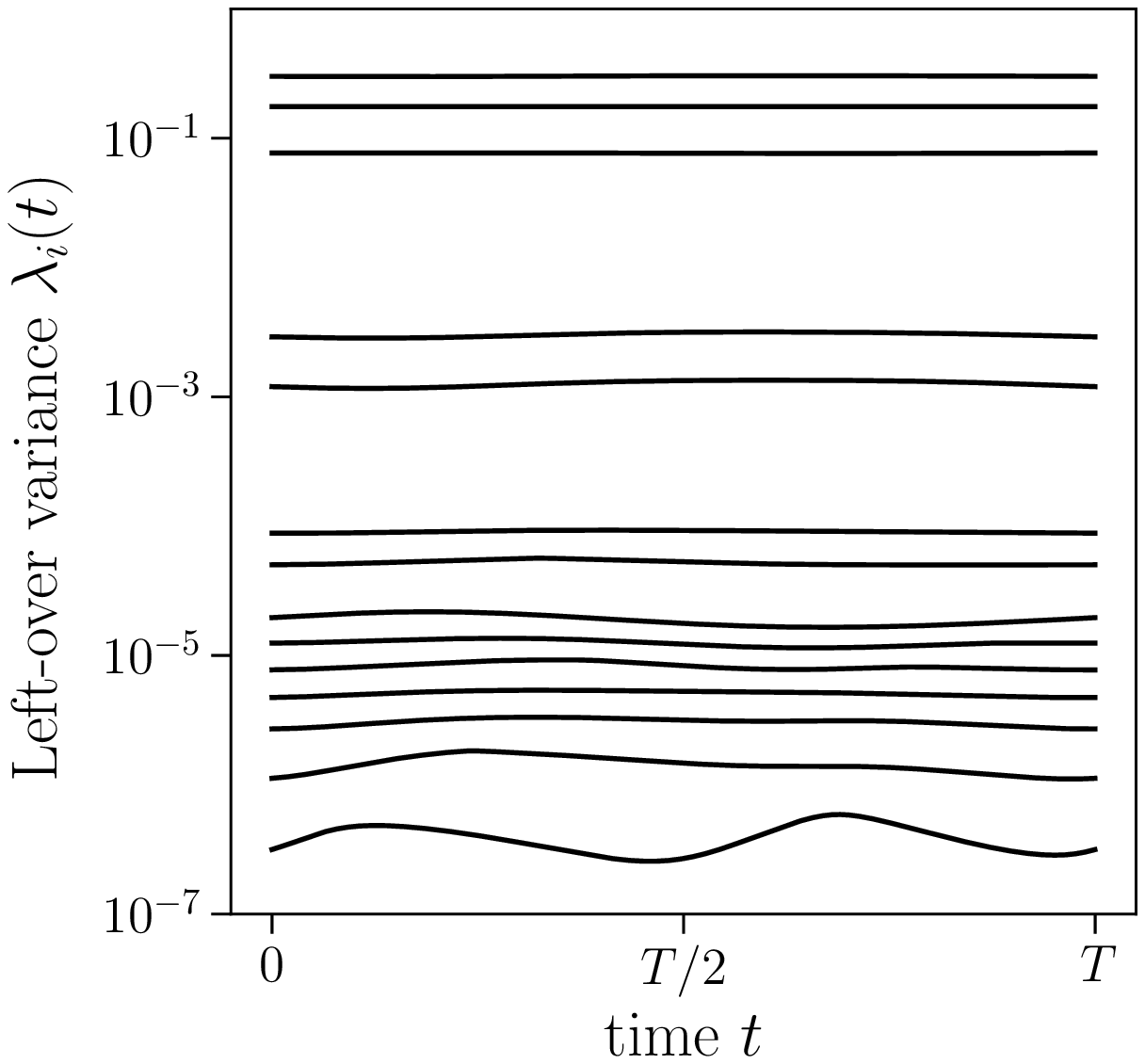}%[tsx/show help lines]
\end{tikzonimage}
\caption{Analog of figure \ref{fig:svals_Re1250} at $Re = 1500$.}
\label{fig:svals_Re1500}
\end{figure}

\begin{figure}
\centering
\begin{minipage}{0.48\textwidth}
\begin{tikzonimage}[trim= 5 0 0 0,clip,width=0.85\textwidth]{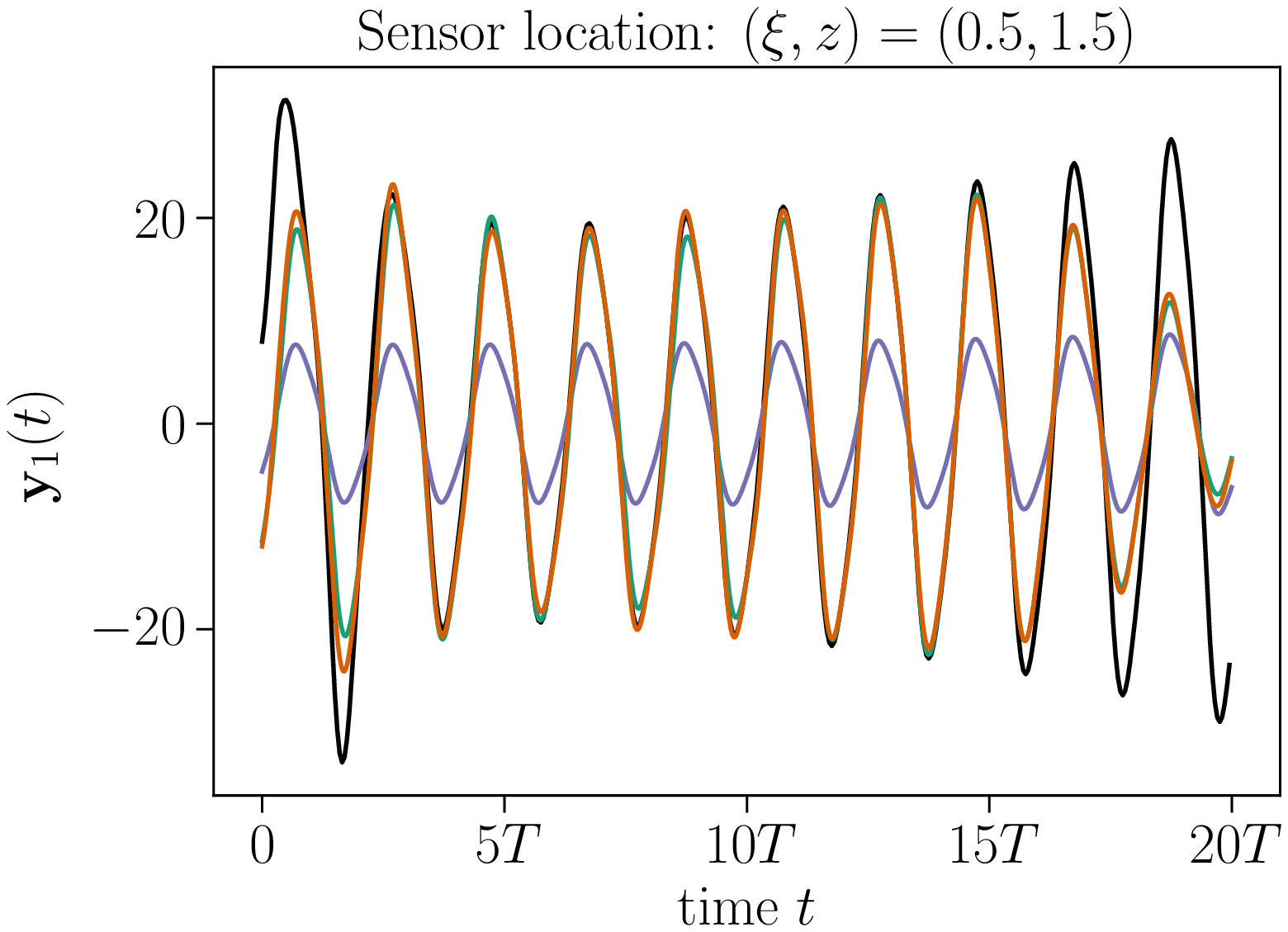}%[tsx/show help lines]
\node at (0.90,0.8) {\small $\textit{(a)}$};
\end{tikzonimage}
\end{minipage}
\begin{minipage}{0.48\textwidth}
\begin{tikzonimage}[trim= 5 0 0 0,clip,width=0.85\textwidth]{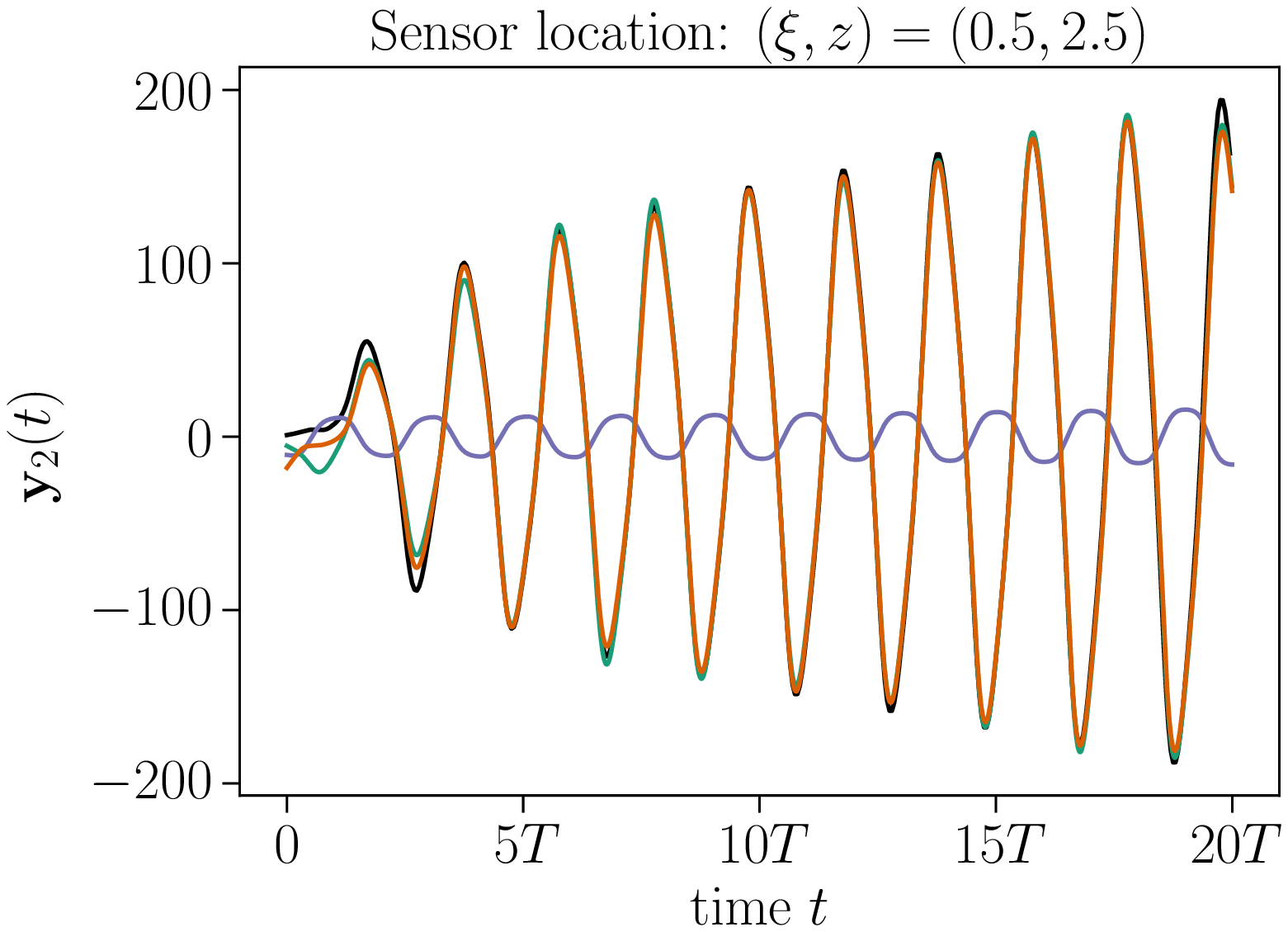}%[tsx/show help lines]
\node at (0.90,0.8) {\small $\textit{(b)}$};
\end{tikzonimage}
\end{minipage}
%% Second row
\begin{minipage}{0.48\textwidth}
\begin{tikzonimage}[trim= 5 0 0 0,clip,width=0.85\textwidth]{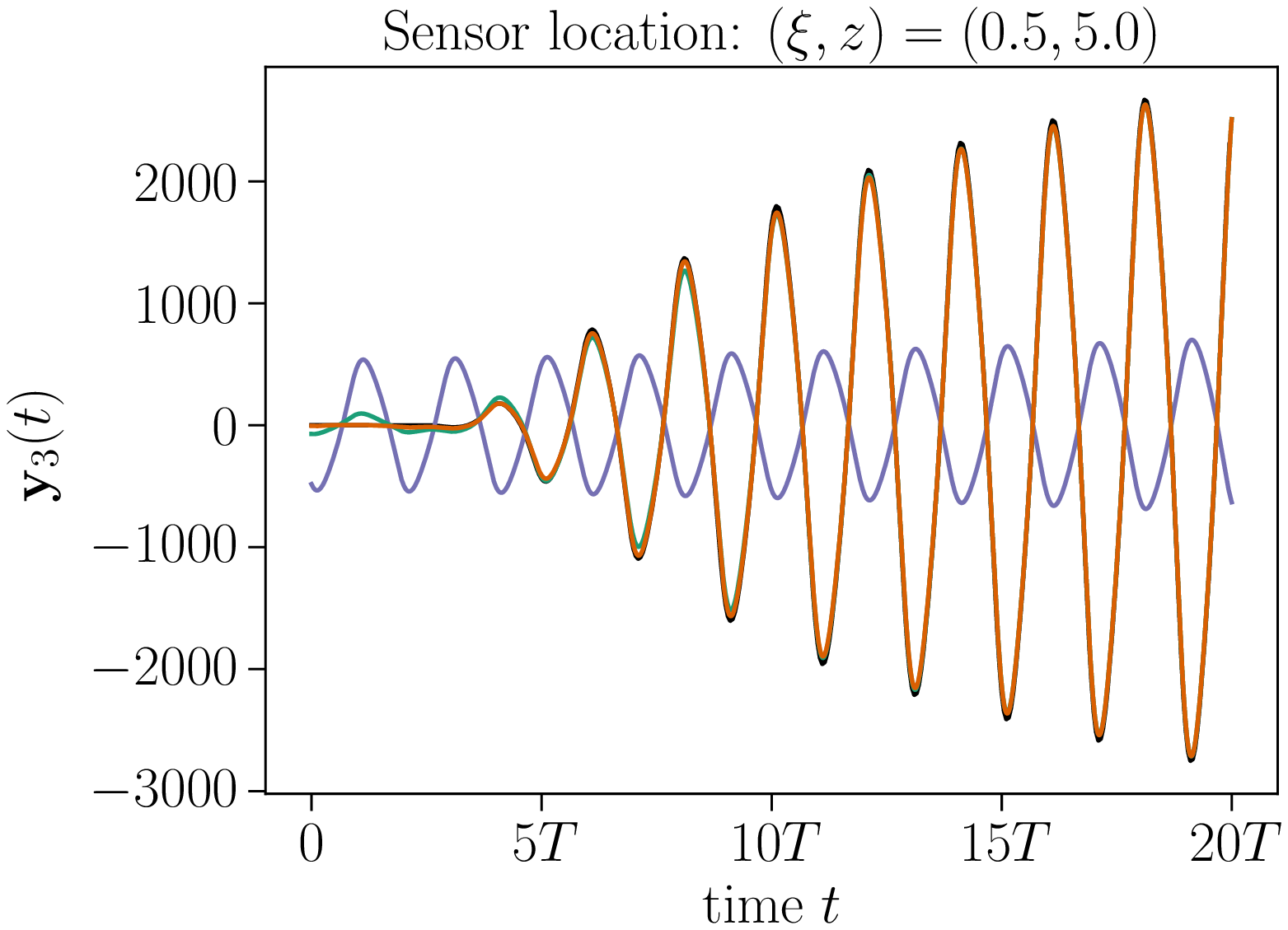}%[tsx/show help lines]
\node at (0.90,0.8) {\small $\textit{(c)}$};
\end{tikzonimage}
\end{minipage}
\begin{minipage}{0.48\textwidth}
\begin{tikzonimage}[trim= 5 0 0 0,clip,width=0.85\textwidth]{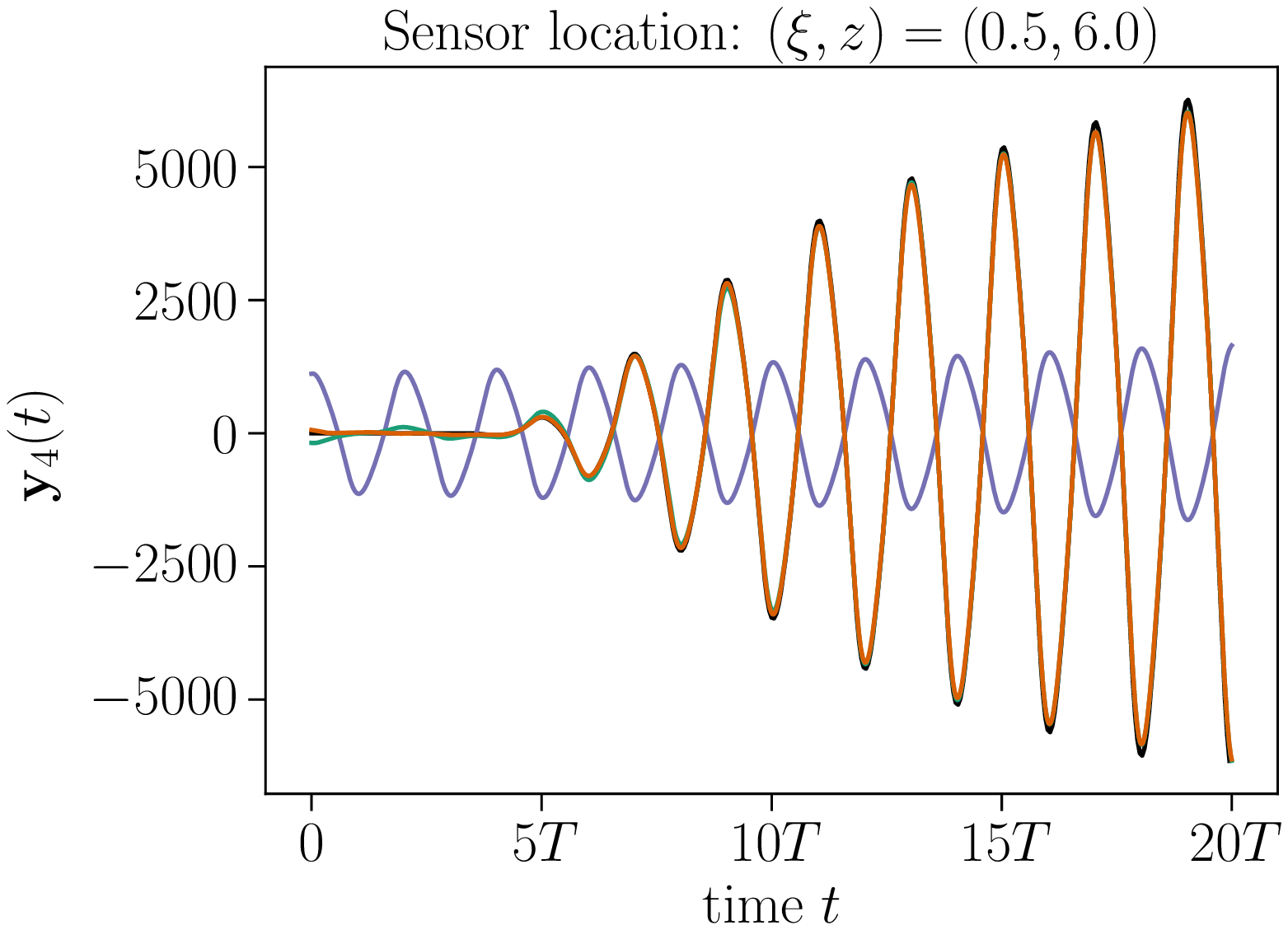}%[tsx/show help lines]
\node at (0.90,0.8) {\small $\textit{(d)}$};
\end{tikzonimage}
\end{minipage}
\caption{Analog of figure \ref{fig:linear_sensor_output_Re1250} at $Re = 1500$.}
\label{fig:linear_sensor_output_Re1500}
\end{figure}

%% If you have bibdatabase file and want bibtex to generate the
%% bibitems, please use
%%
\bibliographystyle{elsarticle-harv}
\bibliography{references}

\begin{thebibliography}{32}
\expandafter\ifx\csname natexlab\endcsname\relax\def\natexlab#1{#1}\fi
\providecommand{\url}[1]{\texttt{#1}}
\providecommand{\href}[2]{#2}
\providecommand{\path}[1]{#1}
\providecommand{\DOIprefix}{doi:}
\providecommand{\ArXivprefix}{arXiv:}
\providecommand{\URLprefix}{URL: }
\providecommand{\Pubmedprefix}{pmid:}
\providecommand{\doi}[1]{\href{http://dx.doi.org/#1}{\path{#1}}}
\providecommand{\Pubmed}[1]{\href{pmid:#1}{\path{#1}}}
\providecommand{\bibinfo}[2]{#2}
\ifx\xfnm\relax \def\xfnm[#1]{\unskip,\space#1}\fi
%Type = Article
\bibitem[{Ahuja and Rowley(2010)}]{ahuja_rowley_2010}
\bibinfo{author}{Ahuja, S.}, \bibinfo{author}{Rowley, C.W.},
  \bibinfo{year}{2010}.
\newblock \bibinfo{title}{Feedback control of unstable steady states of flow
  past a flat plate using reduced-order estimators}.
\newblock \bibinfo{journal}{Journal of Fluid Mechanics} \bibinfo{volume}{645},
  \bibinfo{pages}{447–478}.
%Type = Article
\bibitem[{Amestoy et~al.(2001)Amestoy, Duff, J. and J.-Y.}]{mumps}
\bibinfo{author}{Amestoy, P.R.}, \bibinfo{author}{Duff, I.S.},
  \bibinfo{author}{J., K.}, \bibinfo{author}{J.-Y., L.}, \bibinfo{year}{2001}.
\newblock \bibinfo{title}{A fully asynchronous multifrontal solver using
  distributed dynamic scheduling}.
\newblock \bibinfo{journal}{SIAM Journal on Matrix Analysis and Applications}
  \bibinfo{volume}{23}, \bibinfo{pages}{15--41}.
%Type = Article
\bibitem[{{Bittanti} et~al.(1984){Bittanti}, {Colaneri} and
  {Guardabassi}}]{bittanti}
\bibinfo{author}{{Bittanti}, S.}, \bibinfo{author}{{Colaneri}, P.},
  \bibinfo{author}{{Guardabassi}, G.}, \bibinfo{year}{1984}.
\newblock \bibinfo{title}{Periodic solutions of periodic {R}iccati equations}.
\newblock \bibinfo{journal}{IEEE Transactions on Automatic Control}
  \bibinfo{volume}{29}, \bibinfo{pages}{665--667}.
%Type = Article
\bibitem[{Bolzern and Colaneri(1988)}]{bolzern88}
\bibinfo{author}{Bolzern, P.}, \bibinfo{author}{Colaneri, P.},
  \bibinfo{year}{1988}.
\newblock \bibinfo{title}{The periodic {L}yapunov equation}.
\newblock \bibinfo{journal}{SIAM Journal on Matrix Analysis and Applications}
  \bibinfo{volume}{9}, \bibinfo{pages}{499--512}.
\newblock \DOIprefix\doi{10.1137/0609041}.
%Type = Article
\bibitem[{Chern and Dieci(2001)}]{Chern2001}
\bibinfo{author}{Chern, J.L.}, \bibinfo{author}{Dieci, L.},
  \bibinfo{year}{2001}.
\newblock \bibinfo{title}{Smoothness and periodicity of some matrix
  decompositions}.
\newblock \bibinfo{journal}{SIAM Journal on Matrix Analysis and Applications}
  \bibinfo{volume}{22}, \bibinfo{pages}{772--792}.
\newblock \DOIprefix\doi{10.1137/S0895479899353622}.
%Type = Article
\bibitem[{Dergham et~al.(2011)Dergham, Sipp, Robinet and Barbagallo}]{dergham}
\bibinfo{author}{Dergham, G.}, \bibinfo{author}{Sipp, D.},
  \bibinfo{author}{Robinet, J.C.}, \bibinfo{author}{Barbagallo, A.},
  \bibinfo{year}{2011}.
\newblock \bibinfo{title}{Model reduction for fluids using frequential
  snapshots}.
\newblock \bibinfo{journal}{Physics of Fluids} \bibinfo{volume}{23},
  \bibinfo{pages}{064101}.
%Type = Book
\bibitem[{Dullerud and Paganini(2000)}]{dullerud}
\bibinfo{author}{Dullerud, G.E.}, \bibinfo{author}{Paganini, F.G.},
  \bibinfo{year}{2000}.
\newblock \bibinfo{title}{A course in robust control theory: A convex
  approach}.
\newblock \bibinfo{publisher}{Springer, New York}.
%Type = Article
\bibitem[{Farhood et~al.(2005)Farhood, Beck and Dullerud}]{farhood}
\bibinfo{author}{Farhood, M.}, \bibinfo{author}{Beck, C.L.},
  \bibinfo{author}{Dullerud, G.E.}, \bibinfo{year}{2005}.
\newblock \bibinfo{title}{Model reduction of periodic systems: a lifting
  approach}.
\newblock \bibinfo{journal}{Automatica} \bibinfo{volume}{41},
  \bibinfo{pages}{1085--1090}.
%Type = Article
\bibitem[{Flinois et~al.(2015)Flinois, Morgans and Schmid}]{flinois2015}
\bibinfo{author}{Flinois, T.L.B.}, \bibinfo{author}{Morgans, A.S.},
  \bibinfo{author}{Schmid, P.J.}, \bibinfo{year}{2015}.
\newblock \bibinfo{title}{Projection-free approximate balanced truncation of
  large unstable systems}.
\newblock \bibinfo{journal}{Phys. Rev. E} \bibinfo{volume}{92},
  \bibinfo{pages}{023012}.
%Type = Article
\bibitem[{Floquet(1883)}]{floquet}
\bibinfo{author}{Floquet, G.}, \bibinfo{year}{1883}.
\newblock \bibinfo{title}{Sur les \'equations diff\'erentielles lin\'eaires \`a
  coefficients p\'eriodiques}.
\newblock \bibinfo{journal}{Annales scientifiques de l'\'Ecole Normale
  Sup\'erieure} \bibinfo{volume}{2e s{\'e}rie, 12}, \bibinfo{pages}{47--88}.
%Type = Book
\bibitem[{Godunov(1998)}]{godunov}
\bibinfo{author}{Godunov, S.K.}, \bibinfo{year}{1998}.
\newblock \bibinfo{title}{Modern Aspects of Linear Algebra}.
\newblock \bibinfo{publisher}{American Mathematical Society}.
%Type = Book
\bibitem[{Johnson(1980)}]{johnson}
\bibinfo{author}{Johnson, W.}, \bibinfo{year}{1980}.
\newblock \bibinfo{title}{Helicopter Theory}.
\newblock \bibinfo{publisher}{Princeton University Press}.
%Type = Article
\bibitem[{Jovanovi\'{c}(2021)}]{jovanovic2021}
\bibinfo{author}{Jovanovi\'{c}, M.R.}, \bibinfo{year}{2021}.
\newblock \bibinfo{title}{From bypass transition to flow control and
  data-driven turbulence modeling: An input–output viewpoint}.
\newblock \bibinfo{journal}{Annual Review of Fluid Mechanics}
  \bibinfo{volume}{53}, \bibinfo{pages}{311--345}.
%Type = Article
\bibitem[{Jovanovi\'c and Fardad(2008)}]{MRJ2008}
\bibinfo{author}{Jovanovi\'c, M.R.}, \bibinfo{author}{Fardad, M.},
  \bibinfo{year}{2008}.
\newblock \bibinfo{title}{{$H_2$} norm of linear time-periodic systems: A
  perturbation analysis}.
\newblock \bibinfo{journal}{Automatica} \bibinfo{volume}{44},
  \bibinfo{pages}{2090--2098}.
%Type = Article
\bibitem[{Koch and Lubich(2007)}]{othmar2007}
\bibinfo{author}{Koch, O.}, \bibinfo{author}{Lubich, C.}, \bibinfo{year}{2007}.
\newblock \bibinfo{title}{Dynamical low‐rank approximation}.
\newblock \bibinfo{journal}{SIAM Journal on Matrix Analysis and Applications}
  \bibinfo{volume}{29}, \bibinfo{pages}{434--454}.
%Type = Article
\bibitem[{Lang et~al.(2016)Lang, Saak and Stykel}]{lang2016}
\bibinfo{author}{Lang, N.}, \bibinfo{author}{Saak, J.},
  \bibinfo{author}{Stykel, T.}, \bibinfo{year}{2016}.
\newblock \bibinfo{title}{Balanced truncation model reduction for linear
  time-varying systems}.
\newblock \bibinfo{journal}{Mathematical and Computer Modelling of Dynamical
  Systems} \bibinfo{volume}{22}, \bibinfo{pages}{267--281}.
%Type = Article
\bibitem[{Longhi and Orlando(1999)}]{longhi1999}
\bibinfo{author}{Longhi, S.}, \bibinfo{author}{Orlando, G.},
  \bibinfo{year}{1999}.
\newblock \bibinfo{title}{Balanced reduction of linear periodic systems}.
\newblock \bibinfo{journal}{Kybernetika} \bibinfo{volume}{35},
  \bibinfo{pages}{[737]--751}.
%Type = Article
\bibitem[{Lubich and Oseledets(2014)}]{lubich2014}
\bibinfo{author}{Lubich, C.}, \bibinfo{author}{Oseledets, I.V.},
  \bibinfo{year}{2014}.
\newblock \bibinfo{title}{A projector-splitting integrator for dynamical
  low-rank approximation}.
\newblock \bibinfo{journal}{BIT Numerical Mathematics} \bibinfo{volume}{54},
  \bibinfo{pages}{171--188}.
%Type = Phdthesis
\bibitem[{{Ma}(2010)}]{MaThesis}
\bibinfo{author}{{Ma}, Z.}, \bibinfo{year}{2010}.
\newblock \bibinfo{title}{Reduction and reconstruction methods for simulation
  and control of fluids}.
\newblock Ph.D. thesis. Princeton University.
%Type = Article
\bibitem[{{Ma} et~al.(2010){Ma}, {Rowley} and {Tadmor}}]{Ma}
\bibinfo{author}{{Ma}, Z.}, \bibinfo{author}{{Rowley}, C.W.},
  \bibinfo{author}{{Tadmor}, G.}, \bibinfo{year}{2010}.
\newblock \bibinfo{title}{Snapshot-based balanced truncation for linear
  time-periodic systems}.
\newblock \bibinfo{journal}{IEEE Transactions on Automatic Control}
  \bibinfo{volume}{55}, \bibinfo{pages}{469--473}.
%Type = Article
\bibitem[{Moarref and Jovanovi\'{c}(2010)}]{moarref2010}
\bibinfo{author}{Moarref, R.}, \bibinfo{author}{Jovanovi\'{c}, M.R.},
  \bibinfo{year}{2010}.
\newblock \bibinfo{title}{Controlling the onset of turbulence by streamwise
  travelling waves. part 1. receptivity analysis}.
\newblock \bibinfo{journal}{Journal of Fluid Mechanics} \bibinfo{volume}{663},
  \bibinfo{pages}{70–99}.
\newblock \DOIprefix\doi{10.1017/S0022112010003393}.
%Type = Article
\bibitem[{Moarref and Jovanovi\'{c}(2012)}]{moarref2012}
\bibinfo{author}{Moarref, R.}, \bibinfo{author}{Jovanovi\'{c}, M.R.},
  \bibinfo{year}{2012}.
\newblock \bibinfo{title}{Model-based design of transverse wall oscillations
  for turbulent drag reduction}.
\newblock \bibinfo{journal}{Journal of Fluid Mechanics} \bibinfo{volume}{707},
  \bibinfo{pages}{205–240}.
\newblock \DOIprefix\doi{10.1017/jfm.2012.272}.
%Type = Article
\bibitem[{Moore(1981)}]{Moore1981principal}
\bibinfo{author}{Moore, B.}, \bibinfo{year}{1981}.
\newblock \bibinfo{title}{Principal component analysis in linear systems:
  Controllability, observability, and model reduction}.
\newblock \bibinfo{journal}{ieeetac} \bibinfo{volume}{26},
  \bibinfo{pages}{17--32}.
%Type = Article
\bibitem[{Padovan et~al.(2020)Padovan, Otto and Rowley}]{padovan2020}
\bibinfo{author}{Padovan, A.}, \bibinfo{author}{Otto, S.E.},
  \bibinfo{author}{Rowley, C.W.}, \bibinfo{year}{2020}.
\newblock \bibinfo{title}{Analysis of amplification mechanisms and
  cross-frequency interactions in nonlinear flows via the harmonic resolvent}.
\newblock \bibinfo{journal}{J.\ Fluid Mech.} \bibinfo{volume}{900},
  \bibinfo{pages}{A14}.
\newblock \DOIprefix\doi{10.1017/jfm.2020.497}.
%Type = Article
\bibitem[{Padovan and Rowley(2022)}]{padovan2022}
\bibinfo{author}{Padovan, A.}, \bibinfo{author}{Rowley, C.W.},
  \bibinfo{year}{2022}.
\newblock \bibinfo{title}{Analysis of the dynamics of subharmonic flow
  structures via the harmonic resolvent: Application to vortex pairing in an
  axisymmetric jet}.
\newblock \bibinfo{journal}{Phys. Rev. Fluids} \bibinfo{volume}{7},
  \bibinfo{pages}{073903}.
\newblock \DOIprefix\doi{10.1103/PhysRevFluids.7.073903}.
%Type = Article
\bibitem[{Ran et~al.(2021)Ran, Zare and Jovanovi\'{c}}]{ran2021}
\bibinfo{author}{Ran, W.}, \bibinfo{author}{Zare, A.},
  \bibinfo{author}{Jovanovi\'{c}, M.R.}, \bibinfo{year}{2021}.
\newblock \bibinfo{title}{Model-based design of riblets for turbulent drag
  reduction}.
\newblock \bibinfo{journal}{Journal of Fluid Mechanics} \bibinfo{volume}{906},
  \bibinfo{pages}{A7}.
%Type = Article
\bibitem[{Rowley(2005)}]{Rowley2005}
\bibinfo{author}{Rowley, C.W.}, \bibinfo{year}{2005}.
\newblock \bibinfo{title}{Model reduction for fluids, using balanced proper
  orthogonal decomposition}.
\newblock \bibinfo{journal}{International Journal of Bifurcation and Chaos}
  \bibinfo{volume}{15}, \bibinfo{pages}{997--1013}.
%Type = Article
\bibitem[{{Sandberg} and {Rantzer}(2004)}]{sandberg2004}
\bibinfo{author}{{Sandberg}, H.}, \bibinfo{author}{{Rantzer}, A.},
  \bibinfo{year}{2004}.
\newblock \bibinfo{title}{Balanced truncation of linear time-varying systems}.
\newblock \bibinfo{journal}{IEEE Transactions on Automatic Control}
  \bibinfo{volume}{49}, \bibinfo{pages}{217--229}.
%Type = Article
\bibitem[{Shaabani-Ardali et~al.(2017)Shaabani-Ardali, Sipp and
  Lesshafft}]{ardali_time_delayed_17}
\bibinfo{author}{Shaabani-Ardali, L.}, \bibinfo{author}{Sipp, D.},
  \bibinfo{author}{Lesshafft, L.}, \bibinfo{year}{2017}.
\newblock \bibinfo{title}{Time-delayed feedback technique for suppressing
  instabilities in time-periodic flow}.
\newblock \bibinfo{journal}{Phys.\ Rev.\ Fluids} \bibinfo{volume}{2},
  \bibinfo{pages}{113904}.
%Type = Article
\bibitem[{{Shaabani-Ardali} et~al.(2019){Shaabani-Ardali}, {Sipp} and
  {Lesshafft}}]{ardali_jet19}
\bibinfo{author}{{Shaabani-Ardali}, L.}, \bibinfo{author}{{Sipp}, D.},
  \bibinfo{author}{{Lesshafft}, L.}, \bibinfo{year}{2019}.
\newblock \bibinfo{title}{Vortex pairing in jets as a global {Floquet}
  instability: modal and transient dynamics}.
\newblock \bibinfo{journal}{J.\ Fluid Mech.} \bibinfo{volume}{862},
  \bibinfo{pages}{951--989}.
%Type = Inproceedings
\bibitem[{{Varga}(2000)}]{varga2000}
\bibinfo{author}{{Varga}, A.}, \bibinfo{year}{2000}.
\newblock \bibinfo{title}{Balanced truncation model reduction of periodic
  systems}, in: \bibinfo{booktitle}{Proceedings of the 39th IEEE Conference on
  Decision and Control}, pp. \bibinfo{pages}{2379--2384}.
%Type = Phdthesis
\bibitem[{Wereley(1991)}]{Wereley91}
\bibinfo{author}{Wereley, N.M.}, \bibinfo{year}{1991}.
\newblock \bibinfo{title}{Analysis and control of linear periodically time
  varying systems}.
\newblock Ph.D. thesis. Massachusetts Institute of Technology.

\end{thebibliography}

%% else use the following coding to input the bibitems directly in the
%% TeX file.

\end{document}